\newtheorem{thm}{Theorem}[section]
\newtheorem{prop}[thm]{Proposition}
\newtheorem{algorithm}[thm]{Algorithm}
\newtheorem{lem}[thm]{Lemma}
\newtheorem{cor}[thm]{Corollary}
\newtheorem{claim}{Claim}
\theoremstyle{definition}
\newtheorem{defn}[thm]{Definition}
\newtheorem{rmk}[thm]{Remark}
\newtheorem{problem}[thm]{Problem}
\newtheorem{assumption}[thm]{Assumption}
\newtheorem{exa}[thm]{Example}
\newcommand{\basis}[1]{\ket{\psi_{#1}}}
\newcommand{\cobasis}[1]{\bra{\psi_{#1}}}
\newcommand{\C}{\mathbb{C}}
\newcommand{\Q}{\mathbb{Q}}
\newcommand{\R}{\mathbb{R}}
\newcommand{\F}{\mathbb{F}}
\newcommand{\Z}{\mathbb{Z}}
\newcommand{\onto}{\twoheadrightarrow}
\renewcommand{\epsilon}{\varepsilon}
\newcommand{\eps}{\epsilon}
\newcommand{\E}{\mathop{\mathbf E}\limits}
\newcommand{\nrd}{\mathrm{nrd}}
\newcommand{\co}{\mathcal{O}}
\newcommand{\cl}{\mathrm{Cls}}
\newcommand{\ib}{\mathcal{I}}
\newcommand{\poly}{\mathrm{poly}}
\newcommand{\polylog}{\mathrm{polylog}}
\def\map#1{\xrightarrow{#1}}
\def\isom{\map{\sim}}
\newcommand{\pp}{\mathsf{PP}}
\newcommand{\ver}{\mathsf{Verify}}
\newcommand{\mint}{\mathsf{Mint}}
\newcommand{\dsk}{\mathsf{SK}}
\newcommand{\dvk}{\mathsf{VK}}
\newcommand{\ct}{\mathsf{CT}}
\newcommand{\storm}{\mathsf{Storm}}
\newcommand{\querycomp}{{\Omega}((N/\log(N))^{1/3})}
\numberwithin{equation}{section}
\begin{document}

\title[Quantum Money from Quaternion Algebras]{Quantum Money from Quaternion Algebras}
\author[D.\ M.\ Kane]{Daniel M.\ Kane}
\address{Mathematics Department, UCSD, La Jolla, CA 92093}
\email{dakane@ucsd.edu}
\author[S.\ Sharif]{Shahed Sharif}
\address{Mathematics Department, California State University, San Marcos, CA 92096}
\email{ssharif@csusm.edu}
\author[A.\ Silverberg]{Alice Silverberg}
\address{Mathematics Department, University of California, Irvine, CA 92697}
\email{asilverb@uci.edu}
\subjclass[2010]{81P68, 94A60 (primary), 11R52, 11Y40, 11F11 (secondary)}
\keywords{quantum money, quaternion algebras, quantum cryptography}


\maketitle

\begin{abstract}
We propose a new idea for public key quantum money and quantum lightning. In the abstract sense, our bills are encoded as a joint eigenstate of a fixed system of commuting unitary operators. We perform some basic analysis of this black box system and show that it is resistant to black box attacks. In order to instantiate this protocol, one needs to find a cryptographically complicated system of computable, commuting, unitary operators. To fill this need, we propose using Brandt operators acting on the Brandt modules associated to certain quaternion algebras. We explain why we believe this instantiation is likely to be secure.
\end{abstract}

\section{Introduction}

One of the main challenges to building a purely digital currency is that digital information can be copied, allowing adversaries to duplicate bills or more generally perform double spending attacks. Existing cryptocurrencies solve this problem by maintaining a tamper-proof ledger of all transactions to ensure that the same bill is not spent multiple times by the same actor. Essentially, in these schemes, money is not represented by a digital token so much as a number on this decentralized ledger.

Another idea for solving the bill copying problem is to make use of the quantum no-cloning principle and taking advantage of the idea that quantum information in general \emph{cannot} be copied. A scheme to take advantage of this was proposed by Wiesner in \cite{PrivateKeyMoney}. 
In his scheme, 
the bank prepared a quantum state that was an eigenstate in a secret basis. The bank could verify the correctness of the state, but it was information-theoretically impossible for an adversary without possession of this secret to copy the state in question. Unfortunately, this scheme has the disadvantage that one needs to contact the bank in order to verify the legitimacy of a bill.

Since then, there has been an effort to develop schemes for public key quantum money---that is, a scheme by which there is a publicly known protocol for checking the validity of a bill. In such a system, the bank has a mechanism for producing valid bills, and there is a publicly known mechanism that, with high probability, non-destructively checks the validity of a given bill. It should be computationally infeasible to produce $n+1$ valid bills, given access to $n$
valid bills,
without access to the bank's secret information. Such schemes can at best be computationally secure rather than information theoretically secure, as it is a finite computational problem to construct a quantum state that reliably passes the publicly known verification procedure.

There have been several proposals over the years for cryptographically secure quantum money.
The scheme proposed by Aaronson in \cite{Aaronson} was broken by Lutomirski et al.\ in \cite{LutomirskiEtAl}. 
The scheme proposed by Farhi et al.\ in  \cite{knots} was based on knot theory. It did not have a security proof, and while it has not been broken, as pointed out by Peter Shor in \cite{ShorSimonsTalk} it is not clear that many people have tried to attack it.
The scheme proposed by Aaronson and Christiano in \cite{obfuscation} is based on hidden subspaces. While the black box model was proved secure, the security of the proposed instantiation using low degree polynomials was based on a non-standard assumption, and both that assumption and the associated scheme have been broken \cite{PenaEtAl,AaronsonBlog}.
A fix by Zhandry \cite{Lightning} showed that the Aaronson-Christiano scheme could be instantiated if one has an efficient indistinguishability obfuscator, but the quantum security of such obfuscators is unclear. Zhandry's paper \cite{Lightning} also proposed a new quantum money scheme, but the security of that scheme was called into question in a paper of Roberts \cite{Roberts}, which shows that the proof of security does not hold since the hardness assumption is false.

\subsection{Our contributions}
\label{sec:motiv-proposed}

We give a new proposal for public-key quantum money, in which a note is a tensor product $\ket{\psi}\ket{\psi}$ of a simultaneous eigenvector $\ket{\psi}$ for a finite set of commuting unitary operators $U_1,U_2,\ldots,U_t$, and its serial number is the vector of eigenvalues for $\ket{\psi}$. One can easily verify such a state is a valid bill and measure the corresponding eigenvalues of the $U_j$ non-destructively.
We show (in Corollary~\ref{cor:black-box-security} below) that this quantum money scheme is secure if the $U_j$ are implemented as oracles.

We also use these ideas to construct a scheme for \emph{quantum lightning} (see \cite{Lightning}). 
As noted in \cite{Lightning}, quantum lightning has a number of applications, including not only quantum money, but also verifiable randomness and blockchain-less cryptocurrency.

In order to implement our quantum money system securely, we need to instantiate it with an explicit set of commuting operators $U_1,...,U_t$ that are cryptographically complicated in the sense that solving Problem \ref{attackProblem} below for these specific operators is not much easier than solving it for black box operators defined on a space of the same dimension. One might expect  this to be the case for operators $U_j$ that are complicated enough that no algorithm can effectively take advantage of knowing their structure. The task of finding such operators is 
exacerbated
by the fact that in order to be commuting operators, the $U_j$ must be highly structured.

We propose using a collection of operators that comes from Brandt matrices acting on Brandt modules
associated to certain quaternion algebras. We reduce the security of the instantiated scheme to the problem of finding
a state of the form $\ket{\psi}\ket{\psi}\ket{\psi}$, where $\ket{\psi}$ is a simultaneous eigenvector of the Brandt operators.

Our paper can be viewed as an extended version of Kane's unpublished preprint \cite{Kane}.
While the title of \cite{Kane} refers to modular forms, the proposed scheme
did not use modular forms, but rather used quaternion algebras and Brandt operators,
which explains the change in title.
In his March 27, 2020 lecture at the Simons Institute for the Theory of Computing \cite{ShorSimonsTalk},
Peter Shor listed Kane's quantum money scheme in \cite{Kane} as one of very few quantum money proposals that has not yet been broken.

\subsection{Our choice of operators}
Quaternions have a long history and have been well studied by mathematicians and physicists, ever since they were discovered by Hamilton in 1843 (long before the advent of quantum computing or public key cryptography).
The Brandt matrices are well known in number theory. In particular, for a prime number $N$ and prime numbers $p$, the Brandt matrices $T(p)$ are a collection of commuting, self-adjoint matrices on a 
complex vector 
space $V_N$ of dimension approximately $N/12$, defined in \S\ref{MpSection} below. Our construction will make use of unitary operators $U = e^{iT(p)/\sqrt{p}}$. 
By the known theory of these operators, it is natural to model them 
as if they were random (see Remark~\ref{remk616}). 
This suggests that any structure that comes from these objects may be hard to exploit, and that our black box lower bounds might be indicative of their complexity.

Our instantiation could equally as well be described in the language of quaternion algebras, the language of supersingular elliptic curves, or the language of modular forms. 
Namely,
there is an equivalence of categories between isomorphism classes of supersingular elliptic curves over ${\overline{\F}_N}$ and classes of left ideals in a fixed maximal order in a quaternion algebra of discriminant $N$
(see for example \cite[Chapter 42]{Voight}).
If $p \ne N$,  
then the Brandt matrix $T(p)$ acting on the vector space $V_N$ 
is the adjacency matrix for the (directed) $p$-isogeny graph of supersingular elliptic curves over $\overline{\F}_N$.
Isogenies of elliptic curves have been well studied by mathematicians for many years. 
Such $p$-isogeny graphs of supersingular elliptic curves have recently attracted significant attention from the cryptography community, as a possible means to obtain cryptography secure against quantum attacks.

Further, the system of Brandt matrices $T(p)$ 
is isomorphic to the system of Hecke operators $T_p$ acting on the space $S_2(\Gamma_0(N))$
of weight two cusp forms of level $N$ (see \cite{Pizer,Mestre}).
Modular forms are spaces of highly symmetric analytic functions on the upper half of the complex plane with a storied mathematical history, finding applications in problems as diverse as the computation of partition numbers and the proof of Fermat's Last Theorem.

When one tries to efficiently compute with modular forms or with supersingular elliptic curves, one in fact uses quaternion algebras and Brandt matrices, which is why we phrase our instantiation in that language.

\subsection{Outline}
We 
give the black box version of our protocol in \S\ref{BlackBoxSection}, the related security problem in \S\ref{sec:blackboxsecprob}, and a proof of black box security in \S\ref{sec:blackboxsecpf}.

We construct a quantum lightning scheme in \S\ref{sec:qlightning}.

In \S\ref{sec:quatalgs} we give details of our instantiation using quaternion algebras.
We give the relevant parts of the theory of quaternion algebras 
in \S\ref{QuaternionSection}, using
\cite{Voight} as a reference.
We introduce Brandt matrices in \S\ref{MpSection},
obtain canonical encodings of ideal classes
in \S\ref{CanonicalEncodingSection} that help to make the Brandt matrices
computationally tractable,
 and give additional information about the Brandt operators in \S\ref{BrandtMatrixComputationSect}.
 We give an efficient algorithm to produce a maximally entangled state in \S\ref{bellSection}.
The protocol is formally instantiated in \S\ref{sec:inst-prot}.

In \S\ref{attacksSection} we discuss the security of the instantiation.
In \S\ref{reductionss} we reduce the security of the instantiation to the hardness of
Problem~\ref{quat-Problem}, while Sections \ref{OtherUjss} to \ref{ModularFormsSection} give possible avenues of attack and why we do not expect them to succeed.

\section{The Black Box Protocol}
\label{BlackBoxSection}

  A \emph{quantum money protocol} consists of a set $B$ of \emph{bills}, an efficient \emph{verification algorithm} $\ver$, and an efficient \emph{minting algorithm} $\mint$. The verification algorithm takes as input public parameters $\pp$ and a candidate bill $x$, and outputs True if and only if $x \in B$. The minting algorithm takes as input public and private parameters and outputs a bill $x \in B$.

Suppose $V$ is an $N$-dimensional complex vector space, and $U_1, \ldots, U_t$ are commuting unitary operators on $V$. Since the $U_j$'s commute, there exists a simultaneous eigenbasis $\{\basis{i}\}_{i=1}^N$.
We assume that we have a real eigenbasis, where ``real" means fixed by complex conjugation. 
Let $z_{ij}$ denote the eigenvalue of $U_j$ associated to the eigenvector $\basis{i}$, that is,
$U_j \basis{i} = z_{ij} \basis{i}$. 
Note that the $z_{ij}$ are complex numbers with norm $1$.
Set $v_i = (z_{i1}, \ldots, z_{it})$, the vector of eigenvalues for $\basis{i}$.
The reader should think of $N$ as being exponential and $t$ as being polynomial in a security parameter.

\begin{defn}
If $\eps \in \R^{> 0}$, we say that an eigenbasis $\{\basis{i}\}_i$ is {\bf $\eps$-separated} if $|v_k - v_j| \geq \eps$ in the $L_2$-norm whenever $j \neq k$.
\end{defn}

Given an oracle that can compute controlled versions of the $U_j$, we present the following quantum money protocol:

The public parameters consist of:
  \begin{itemize}
      \item an efficient digital signature algorithm and a verification key $\dvk$,
      \item an $N$-dimensional complex vector space $V
    $ along with a computationally feasible basis for $V$,
      \item commuting unitary operators $U_1,U_2, \ldots, U_t$ on $V$ that have a real eigenbasis, and
      \item a positive real number $\eps$.
  \end{itemize}
  Assume there is an $\eps$-separated real eigenbasis $\{\basis{i}\}_{i=1}^N$. For each $i$, let $v_i$ be the vector of eigenvalues for $\basis{i}$, as above. Then a \emph{bill} consists of a triple $(\ket{\psi},v,\sigma)$, called respectively the \emph{note}, \emph{serial number}, and \emph{signature}, given as follows:
  \begin{itemize}
      \item the note $\ket{\psi}$ is $\basis{k} \otimes \basis{k}$ for some $k$,
      \item the serial number $v$ is classical information providing an approximation of $v_k$ to error less than $\eps/3$, and
      \item $\sigma$ is
a digital signature of $v$ signed with the signing key $\dsk$ that corresponds to the verification key $\dvk$.
  \end{itemize}
  The verification algorithm $\ver(\pp, (\ket{\psi},v,\sigma))$ is as follows:
  \begin{enumerate}[(i)]
      \item Verify the digital signature $\sigma$ of $v$.
      \item For each $j=1,...,t$, use phase estimation to verify that the note $\ket{\psi}$ is an eigenstate of $U_j \otimes I_N$ and of $I_N \otimes U_j$ with eigenvalues within $\eps/2$ of those given by the entries of the serial number $v$ (where $I_N$ is the $N \times N$ identity matrix.
  \end{enumerate}
The minting algorithm $\mint(\pp, \dsk)$ is as follows:
  \begin{enumerate}[(i)]
      \item Prepare a maximally entangled state
      $\displaystyle{\frac{1}{\sqrt{N}} \sum_{i=1}^N \basis{i}\basis{i}}$ for $V$ that is the uniform superposition over all notes.
      \item   Apply phase estimation with $U_j \otimes I_N$ for each $j$. Set $\ket{\psi}$ to be the resulting state, and set the $j$th entry of the serial number $v$ to be an approximation to the eigenvalue.
      \item Set $\sigma$ to be the digital signature of $v$ with signing key $\dsk$.
  \end{enumerate}

\begin{rmk} \label{ProtocolRemarks}
There are a few important things to note about this protocol:
\begin{enumerate}[(i)]
    \item The separation assumption implies that, up to scalar multiple, the eigenbasis $\{\basis{i}\}$ is unique. Therefore the verification algorithm is correct.
    \item If the bill is valid, verification does not change it.
    \item If the note was not an eigenstate of all the $U_j \otimes I_N$ and  $I_N \otimes U_j$ before applying the verification algorithm, it will be after the phase estimation step.
    \item Due to the assumed separation of $\{\basis{i}\}$, every pair of bills that validate for the same serial number must (after verification) have notes that are the same eigenstate.
\item
If the serial number is required to be an appropriate unique rounding of the eigenvalues of $\basis{i}$ rather than merely an approximation, this looks very much like a protocol for \emph{quantum lightning} in the sense of \cite{Lightning}, that is, a mechanism that can produce and label one of a number of states but for which it is hard even for an adversarial algorithm to produce multiple copies of that state. We chose to use arbitrary approximations so that one does not need to worry about precision errors if the true eigenvalues are near the boundary between two different roundings. 
With some care, this \emph{can} be turned into a protocol for quantum lightning (see \S\ref{sec:qlightning}). 
\end{enumerate}
\end{rmk}

\section{The Security Problem}
\label{sec:blackboxsecprob}

What might an attack against this scheme look like? For quantum lightning, an attack would require a method for producing two copies of the same bolt (in this case a pair of identical eigenstates). We argue that any attack on our quantum money protocol should be able to do this. In fact it is enough to note that having four copies of the same eigenstate, one can throw away one to get three copies. Thus, we base our security on the following problem:
\begin{problem}
\label{attackProblem}
  Given $N$, a complex vector space $V \cong \C^{N}$, and commuting unitary operators $U_1, \ldots, U_t$ on $V$ that have a real eigenbasis, output a state of the form $\ket{\psi}\ket{\psi}\ket{\psi}$, where $\ket{\psi}$ is an eigenvector of all the $U_j$.
\end{problem}
The black box version of the problem is when the adversary only has black box access to the $U_j$.

The formal hardness assumption associated to Problem~\ref{attackProblem} is given below in Assumption~\ref{defn:security-assumption}. Corollary~\ref{cor:security-reduction} shows that, given a secure signature scheme, security of the quantum money protocol reduces to the Assumption.

\begin{rmk}
{Problem~\ref{attackProblem} is significantly different from the problem of producing an eigenstate of the form $\ket{\psi}\ket{\psi}$. To see this, first note that if each of $\{\ket{\psi_1}, \ldots, \ket{\psi_N}\}$ and $\{\ket{\rho_1},\ldots,\ket{\rho_N}\}$ is a real orthonormal basis for $V$, then
\begin{equation}
  \sum_{i=1}^N \ket{\psi_i} \otimes \ket{\psi_i} = \sum_{i=1}^N \ket{\rho_i} \otimes \ket{\rho_i}\label{eq:double-eigenstate-sum}
\end{equation}
by Lemma~\ref{lem:gener-double-eigenst} in Appendix~\ref{AppA}.
Suppose $\{\ket{\psi_i}\}$ is a real basis of eigenstates. Even if the basis $\{\ket{\psi_i}\}$ is unknown, we can produce a state of the form $\ket{\psi_i} \ket{\psi_i}$ by  
choosing \emph{any} real orthonormal basis $\{\ket{\rho_i}\}$ for $V$, computing the superposition $\sum \ket{\rho_i}\ket{\rho_i}$, and then measuring with respect to the $U_i \otimes U_j$. So producing a double eigenstate is easy.
There is no relation analogous to \eqref{eq:double-eigenstate-sum} for triple eigenstates $\ket{\psi_i} \ket{\psi_i}\ket{\psi_i}$, and hence no efficient method for producing a triple eigenstate ``from scratch''; for a precise statement, see Lemma~\ref{TriorthogonalClaim}. Also, simply repeating the procedure for double eigenstates and hoping to produce the same double eigenstate twice is unlikely to succeed: the double eigenstate procedure produces $\ket{\psi_i}\ket{\psi_i}$ for random $i$, and if $N$ is sufficiently large then with high probability the states produced will be distinct. See also \S{}\ref{sec:sqrtn-attack}.}
\end{rmk}

We claim that any agent capable of attacking the black box protocol must be capable of solving Problem \ref{attackProblem}. In particular, we consider three kinds of attacks on the system:
\begin{enumerate}[(i)]
\item Attacks by the mint: This would apply for systems where the mint creates a public registry of valid serial numbers (or perhaps puts them into a hash tree, publishing only the root). In such a system, the mint itself might try to cheat by creating multiple copies of bills appearing in the registry.
\item Attacks by others: An attacker given access to some number of valid bills and perhaps a much larger number of valid serial number signatures finds some procedure to spend more bills than they initially had access to.
\item Attacks on random instances: An attacker, for a random public/private key pair for the digital signature scheme and making some number of calls to a signing oracle, finds some procedure to spend more bills than they initially had access to.
\end{enumerate}

Theorem \ref{thm:forgery-attackproblem} shows that for our quantum money protocol, the three types of attacks can be reduced to solving Problem~\ref{attackProblem}. Note the similarity between its proof and the security proof in \cite[Theorem 14]{obfuscation}.

\begin{thm}\label{thm:forgery-attackproblem}
\begin{enumerate}[(i)] 
    \item
If an adversary using a quantum computer and given the secret key to the signing protocol can in time $T$  run a procedure that with probability at least $p$ produces $n+1$ valid bills with at most $n$ total serial numbers among them, then the adversary can with constant positive probability solve Problem \ref{attackProblem} in time $O(T/p)$.
  \item If an adversary, given $n$ bills and $s$ uniformly random valid signatures of serial numbers, but without access to the signing key,
can in time $T$ run a procedure that with probability at least $p$ produces $n+1$ bills that pass the verification procedure, then the adversary in time $O(T)$ with probability $p$
can, given $n+s$ calls to an oracle for the signing algorithm, either:
\begin{itemize}
    \item produce a new valid signature without access to the signing key, or
    \item solve Problem~\ref{attackProblem}.
\end{itemize}
  \item Suppose there is a quantum algorithm that for a random instantiation of the
quantum money protocol (i.e., a random choice of parameters for the
digital signature scheme, but without access to the signing key), given $n$ uniformly random bills and
the signatures corresponding to $s$ other uniformly random bills, can
generate $n+1$ valid bills
in time $T$ with probability $p$ (with the probability taken over both the space of measurement outcomes and the set of public key/private key pairs for the signature scheme). Then either:
\begin{itemize}
    \item  there is a
quantum algorithm that for a random instantiation of the digital signature scheme and given $n+s$ calls to an oracle for the signing algorithm can in time $O(T)$ and with probability at least $\frac{p}{2}$ produce a new valid signature without access to the signing key, or
    \item  there is a
quantum algorithm that  in time $O(\frac{T+c(n+s)}{p})$ and with probability at least $\frac{1}{2}$ solves Problem~\ref{attackProblem}, where $c$ is the time required to run the minting algorithm once.
\end{itemize}
\end{enumerate}
\end{thm}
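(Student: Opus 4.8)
\emph{The common observation.} First I would record the consequence of the $\eps$-separation hypothesis that drives all three parts. Suppose a bill $(\ket{\psi},v,\sigma)$ passes $\ver$. After one runs the verification procedure, phase estimation with the operators $U_j\otimes I_N$ and $I_N\otimes U_j$ leaves the first register in an eigenstate $\basis{j}$ with $|v_j-v|<\eps/2$ and the second in an eigenstate $\basis{j'}$ with $|v_{j'}-v|<\eps/2$ (cf.\ Remark~\ref{ProtocolRemarks}(iii)); by $\eps$-separation the index with this property is unique, so $j=j'$ and the note is exactly $\basis{j}\otimes\basis{j}$, where $j$ is \emph{the} index with $v_j$ within $\eps/2$ of $v$. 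Two consequences: (a) two bills that pass $\ver$ with the \emph{same} serial number have, after verification, the same product note $\basis{j}\otimes\basis{j}$; and (b) if a bill passing $\ver$ has serial number $v$ equal to that of an honestly minted bill (whose note is $\basis{k}\otimes\basis{k}$ with $|v_k-v|<\eps/3<\eps/2$), then by uniqueness its note is also $\basis{k}\otimes\basis{k}$. In either situation, verifying both bills leaves us holding the product state $\basis{j}^{\otimes 4}$; discarding one tensor factor gives $\basis{j}^{\otimes 3}$, a solution to Problem~\ref{attackProblem}. (I suppress the exponentially small failure probability of phase estimation, which can be made negligible against $\eps$.)

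\emph{Part (i).} Generate a key pair $(\dsk,\dvk)$ for any efficient signature scheme, give $\dsk$ to the adversary, and run its procedure; with probability $\ge p$ it outputs $n+1$ bills that pass $\ver$ carrying at most $n$ distinct serial numbers. Run the verification procedure on all $n+1$ notes; by pigeonhole two of the bills share a serial number, so by (a) we obtain $\basis{j}^{\otimes 4}$ and hence solve Problem~\ref{attackProblem}. Repeating the whole reduction $\Theta(1/p)$ times amplifies the success probability to a constant, in time $O(T/p)$.

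\emph{Parts (ii) and (iii).} Consider the algorithm that, given $n+s$ calls to a signing oracle, runs $\mint$ exactly $n+s$ times (one oracle call each), hands $n$ of the resulting bills to the adversary in full and the serial-number/signature pairs of the remaining $s$ bills to the adversary while \emph{retaining those $s$ quantum notes}, and then runs the adversary. With probability $\ge p$ the adversary returns $n+1$ bills passing $\ver$. Run the verification procedure on these and on the $s$ retained notes, and split into cases. If two output bills share a serial number, (a) solves Problem~\ref{attackProblem}. Otherwise the $n+1$ output serial numbers are distinct, and since the $n$ supplied bills carry at most $n$ distinct serial numbers, some output bill $b^{*}$ has a serial number $v^{*}$ equal to none of them. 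If $v^{*}$ was never submitted to the signing oracle --- equivalently, it is also not the serial number of any retained bill --- then $b^{*}$'s signature is a new valid signature, i.e.\ a forgery. Otherwise $v^{*}$ equals a retained bill's serial number, so by (b) that retained note equals $b^{*}$'s note and we again solve Problem~\ref{attackProblem}. This establishes part (ii) (with honestly generated inputs in place of arbitrary valid ones), in time $O(T)$ beyond the $n+s$ mintings. For part (iii) one runs the same reduction with a \emph{random} key pair and random minted bills, so that the hypothesised probability $p$ --- over keys and all measurements --- is exactly the probability that the reduction outputs a forgery or a solution to Problem~\ref{attackProblem}; since these two events cover that probability, one of them has probability $\ge p/2$. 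In the first case the reduction is already a forger succeeding with probability $\ge p/2$ in time $O(T)$; in the second, repeating it $\Theta(1/p)$ times solves Problem~\ref{attackProblem} with probability $\ge\tfrac12$ in time $O\!\big((T+c(n+s))/p\big)$, where $c$ is the cost of one minting.

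\emph{Main obstacle.} The bookkeeping --- pigeonhole, probability amplification, and the forgery-versus-collision dichotomy, which parallels \cite[Theorem~14]{obfuscation} --- is routine. The one point requiring care is the common observation: making rigorous that $\eps$-separation forces two bills sharing a serial number to carry \emph{identical} eigenstate notes after verification, rather than merely states with nearby eigenvalues, and that phase estimation can be run at a precision making its contribution to the $\eps$-error budget, and its failure probability, negligible. Once that is pinned down, the three reductions follow as above.
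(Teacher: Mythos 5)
Your proposal is correct and follows essentially the same route as the paper's proof: pigeonhole plus $\eps$-separation to turn two same-serial-number bills into $\basis{j}^{\otimes 4}$ (hence $\basis{j}^{\otimes 3}$), minting $n+s$ bills via the signing oracle while retaining the $s$ quantum notes for the forgery-versus-collision dichotomy in (ii), and the $p/2$ split plus $O(1/p)$ repetition in (iii). Your ``common observation'' merely makes explicit the phase-estimation and separation facts the paper invokes via Remark~\ref{ProtocolRemarks}.
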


\begin{proof}
The argument for (i) is easy. By the pigeonhole principle, at least two of the bills produced must have the same serial number. Given the separation between the $v_i$, this must mean that the notes in question are both of the form $\basis{i}\basis{i}$ for the same value of $i$. Using one and a half of these, the adversary (i.e., the mint) has produced a state of the form $\basis{i}\basis{i}\basis{i}$. Thus, in time $T$ the adversary can solve Problem \ref{attackProblem} with probability at least $p$. Repeating $O(1/p)$ times yields a constant probability of success.

For (ii), the adversary can use the chosen signatures and the minting algorithm to produce $n+s$ valid bills $x_1,\ldots,x_n,y_1,\ldots,y_s$; namely, to produce a bill, the adversary can produce a maximally entangled state $\sum_{i=1}^N \basis{i} \otimes \basis{i}$, measure with respect to the operators $I \otimes U_j$ for $j\in\{1,\ldots,t\}$, and then sign the tuple of eigenvalues resulting from the measurements using a single call to the signing algorithm (where $N$ and $t$ are
as in Problem~\ref{attackProblem}).  For each $k$, let $\sigma_k$ denote the signature for bill $y_k$. By hypothesis, using $x_1,\ldots,x_n,\sigma_1,\ldots,\sigma_s$, the adversary with probability at least $p$ can in time $T$ produce $n+1$ bills that pass the verification procedure. These bills along with $y_1,\ldots,y_s$ give $n+s+1$ valid bills. Thus either the adversary has produced a valid signature that is not one of the original $n+s$ signatures of valid bills (thus producing a new signature without the private key), or at least two of the $n+s+1$ bills have the same serial number, which implies that the adversary has three copies of the same eigenstate, and the adversary has solved Problem \ref{attackProblem}.

For (iii), the desired quantum algorithm first generates an instance of the quantum money protocol, i.e., generates a public key/private key pair for the  digital signature algorithm. As in (ii), using $n+s$ calls to the signature algorithm, the algorithm with probability at least $p$ either produces a new signature without using the private key, or solves Problem~\ref{attackProblem}. If the former holds with probability at least $\frac{p}{2}$, then the first conclusion holds. Now suppose that is not the case. Then the algorithm solves Problem~\ref{attackProblem} with probability at least $p/2$. A solution to Problem~\ref{attackProblem} is independent of the signature keys. Repeat $O(1/p)$ times with $O(1/p)$ random instances of the signature key pair to obtain the second conclusion.
\end{proof}

In Corollary~\ref{cor:security-reduction}, we show that our quantum money protocol is secure under the following assumption.
\begin{assumption}\label{defn:security-assumption}
  For all quantum circuits $C$ that are of polynomial size in $\log N$, the probability that $C$ solves Problem~\ref{attackProblem}
  is negligible as a function of $\log N$.
\end{assumption}
An equivalent formulation is that the expectation
\[
  \E_{U_1,\ldots,U_t}\left[\sum_{\basis{i}} |\bra{\psi_i}\bra{\psi_i}\bra{\psi_i}C(V, U_1,\ldots, U_t)\ket{0}|^2\right]
\]
is negligible as a function of $\log N$, where the expectation is over all commuting unitary operators $U_1, \ldots, U_t$ on $V \cong \C^{N}$ and the sum is over any fixed eigenbasis $\{\basis{i}\}$ for the $U_j$.

The following definition, for which we follow \cite[Definition 9]{obfuscation}, is the usual notion of security for quantum money protocols.
\begin{defn}\label{defn:security}
  Given a quantum money protocol,
let $\ct$ be an algorithm that on input the public parameters $\pp$ and a list of $m = \polylog(N)$ possibly entangled alleged bills $y_1,\ldots,y_m$, outputs the number of these that pass the validation algorithm. 
 The quantum money scheme is \emph{secure} if the associated digital signature scheme is secure against existential forgeries, and for all quantum circuits $C$ of size $\polylog(N)$ that take as input valid bills $x_1,\ldots,x_n$ and signatures $\sigma_1,\ldots,\sigma_s$, where $n$ and $s$ are $\polylog(N)$, and outputs a set of $m = \polylog(N)$ alleged bills, the probability  
    \[
    \Pr[\ct(\pp, C(\pp,x_1,\ldots,x_n,\sigma_1,\ldots,\sigma_s)) > n]
  \]
  is a negligible function of $\log N$, where the probability is taken over all sets of public parameters,  
  valid bills, and signed serial numbers. \end{defn}

\begin{cor}\label{cor:security-reduction}
  If the digital signature scheme used in the quantum money protocol is secure against existential forgeries, and Assumption~\ref{defn:security-assumption} holds, then the quantum money protocol is secure.
\end{cor}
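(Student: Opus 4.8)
The plan is to assemble Corollary~\ref{cor:security-reduction} directly from Theorem~\ref{thm:forgery-attackproblem}(iii) together with Assumption~\ref{defn:security-assumption} and the security of the signature scheme, essentially by unwinding the definitions. First I would fix a quantum circuit $C$ of size $\polylog(N)$ as in Definition~\ref{defn:security}, taking valid bills $x_1,\ldots,x_n$ and signatures $\sigma_1,\ldots,\sigma_s$ (with $n,s = \polylog(N)$) and outputting $m = \polylog(N)$ alleged bills, and suppose for contradiction that $\Pr[\ct(\pp,C(\ldots)) > n]$ is non-negligible, say at least $p$ for infinitely many values of $\log N$, with $p$ inverse-polynomial. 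The event $\ct(\ldots) > n$ means that at least $n+1$ of the $m$ alleged bills pass verification; discarding the ones that fail, $C$ effectively produces $n+1$ valid bills from $n$ bills and $s$ signed serial numbers, in time $T = \polylog(N)$ with probability $p$, exactly the hypothesis of Theorem~\ref{thm:forgery-attackproblem}(iii).

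Next I would apply Theorem~\ref{thm:forgery-attackproblem}(iii) to this $C$. It yields one of two alternatives. In the first, there is a quantum algorithm that for a random instantiation of the signature scheme, given $n+s$ calls to a signing oracle, produces in time $O(T) = \polylog(N)$ and with probability at least $p/2$ a new valid signature — i.e., an existential forgery. Since $p/2$ is non-negligible, this contradicts the assumed security of the digital signature scheme against existential forgeries. In the second alternative, there is a quantum algorithm running in time $O((T + c(n+s))/p)$ that solves Problem~\ref{attackProblem} with probability at least $\tfrac12$; here $c$ is the cost of one run of the minting algorithm, which is efficient, so $c(n+s) = \polylog(N)$ and the whole running time is $\polylog(N)$ (as $p$ is inverse-polynomial). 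Such an algorithm is realized by a quantum circuit of size $\polylog(N)$ solving Problem~\ref{attackProblem} with probability $\tfrac12$, which is non-negligible and hence contradicts Assumption~\ref{defn:security-assumption}.

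Either alternative produces a contradiction, so $\Pr[\ct(\pp,C(\ldots)) > n]$ must be negligible for every such $C$; combined with the hypothesized signature security, this is precisely the condition of Definition~\ref{defn:security}, so the protocol is secure. The only points requiring a little care are bookkeeping ones: checking that the ``$> n$ bills pass'' formulation of Definition~\ref{defn:security} matches the ``produces $n+1$ valid bills'' hypothesis of the theorem (discard the failing ones, and note the theorem's conclusion is monotone in the success probability), and confirming that ``negligible vs.\ non-negligible'' and ``$\polylog$ vs.\ inverse-poly'' quantifiers line up across the three statements — in particular that the factor $1/p$ blow-up in the running time of the Problem~\ref{attackProblem} solver keeps it polynomial-size. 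I do not anticipate a genuine obstacle here; the substance of the reduction lives in Theorem~\ref{thm:forgery-attackproblem}, and this corollary is the routine translation of that theorem into the standard security definition.
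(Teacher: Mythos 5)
Your proposal is correct and follows essentially the same route as the paper's proof: assume a non-negligible-advantage adversary, invoke Theorem~\ref{thm:forgery-attackproblem}(iii), and observe that each of its two alternatives contradicts one of the two hypotheses (signature security or Assumption~\ref{defn:security-assumption}). Your extra bookkeeping remarks (matching ``$>n$ pass'' with ``$n+1$ valid bills,'' and checking the $1/p$ runtime blow-up stays polynomial) are correct and, if anything, slightly more careful than the paper's version.
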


\begin{proof}
 Let $\ct$ be an algorithm that on input $\pp$ and a list of $m = \polylog(N)$ possibly entangled alleged bills $y_1,\ldots,y_m$, outputs the number of these that pass the validation algorithm. Suppose there is a polynomial time quantum adversary $C$ for which, for infinitely many values of $N$,
    \[
    \Pr[\ct(\pp, C(\pp,x_1,\ldots,x_n,\sigma_1,\ldots,\sigma_s)) > n] \ge 1/f(\log N)
  \]
  for some positive polynomial $f$, where the $x_i$ are valid bills, the $\sigma_j$ are valid signatures, and $n$ and $s$ are $\polylog(N)$. By (iii) of Theorem~\ref{thm:forgery-attackproblem}, either there is a polynomial time algorithm that, for infintely many $N$ produces a new signature without the signing key with probability at least $1/(2f(\log N))$ (violating the security of the signature scheme) or there is a polynomial time algorithm that for infinitely many $N$ solves Problem~\ref{attackProblem} with probability at least $\frac{1}{2}$ (violating Assumption~\ref{defn:security-assumption}).
\end{proof}

\subsection{A $\sqrt{N}$ attack}
\label{sec:sqrtn-attack}

There is an obvious $O(\sqrt{N})$ time attack
on Problem~\ref{attackProblem},
namely:
\begin{itemize}
\item Produce $\sqrt{N}$ notes using the minting procedure, and
\item Search for pairs of notes with serial numbers sufficiently close to each other.
\end{itemize}
Each note is $\basis{i}\basis{i}$ for a uniform random value of $i$. By the birthday paradox, we expect to find a collision within the first $O(\sqrt{N})$ notes.

\section{Black Box Security}
\label{sec:blackboxsecpf}

One might worry about black box attacks against the proposed system, that is, attacks on
Problem~\ref{attackProblem}
that do not make use of any special structure of $V$ or the $U_j$ and only have black box access to the operators $U_j$. In this section we will prove
Theorem~\ref{LowerBoundTheorem}, which says that any such attack must have query complexity at least $\Omega((N/\log(N))^{1/3})$. As a consequence, we show in Corollary~\ref{cor:black-box-security} that Assumption~\ref{defn:security-assumption} holds in the black box setting.

Let $S^1$ denote the unit circle $S^1 := \{ z\in\C : |z|=1 \}$.
If $\mathcal{D}$ is a probability distribution over $(S^1)^t$, then for each $N$, we obtain an induced probability distribution over tuples of commuting $N \times N$ unitary operators $(U_1,\ldots,U_t)$ by letting $\{\basis{i}\}$ be a random real orthonormal basis of $\C^N$ (under the Haar measure), letting $v_i = (z_{i1},\ldots,z_{it})$ be i.i.d.\ samples from $\mathcal{D}$ for $i=1,\ldots,N$, and defining $U_j$ by the equations $U_j\basis{i} = z_{ij}\basis{i}$.
\begin{thm}\label{LowerBoundTheorem}
    Suppose $\mathcal{D}$ is any probability distribution over $(S^1)^t$ such that with probability 1, any finite number of samples chosen from $\mathcal{D}$ are distinct.  Suppose $C$ is a circuit consisting of standard gates and $d$ controlled $U_j$ gates. If $d^3 < \frac{N}{16 \log N}$ for all sufficiently large $N$, then $C$ solves Problem~\ref{attackProblem} with probability $O(\frac{d^3\log d}{N})$, where the probability is taken over sets of operators $U_1,\ldots,U_t$ chosen according to $\mathcal{D}$ and with uniformly random real orthogonal eigenbasis.
\end{thm}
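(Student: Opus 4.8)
The plan is to reduce the statement to bounding a single expectation over the random eigenbasis and eigenvalues, and then to control how much the $d$ oracle calls can correlate the circuit's output with the ``collision'' subspace spanned by the triple eigenstates.

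\textbf{Setup and reduction.} We may take the workspace of $C$ to be $(\C^N)^{\otimes r}\otimes\C^A$ for some $r\ge 3$ and ancilla dimension $A$, with three of the $\C^N$ factors designated as output registers, and each controlled $U_j$ gate acting on one $\C^N$ factor controlled by a qubit; adaptive choices of $j$ and of the target register are pushed into an auxiliary control register in the standard way, so the sequence of queried operators and target registers may be taken fixed. By the hypothesis on $\mathcal D$ the eigenbasis $B=\{\ket{\psi_i}\}$ is almost surely unique up to signs, so any eigenvector equals $\pm\ket{\psi_i}$ for some $i$, and if $\ket\Phi=C\ket0$ denotes the output then $C$ solves Problem~\ref{attackProblem} with probability $\langle\Phi|\Pi_B|\Phi\rangle$ for
\[
  \Pi_B \;:=\; \sum_{i=1}^{N}\bigl(\ket{\psi_i}\!\bra{\psi_i}\bigr)^{\otimes 3}\otimes I ,
\]
which is a projector since $\langle\psi_i|\psi_k\rangle^{3}=\delta_{ik}$. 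So it suffices to bound $\E_{B,Z}\!\bigl[\langle\Phi|\Pi_B|\Phi\rangle\bigr]$, the expectation over the Haar-random real orthonormal basis $B$ and the i.i.d.\ eigenvalue data $Z$. For $d=0$ the state $\ket\Phi$ is independent of $B$, and $\E_B[\Pi_B]=N\,\E_B\!\bigl[(\ket{\psi_1}\!\bra{\psi_1})^{\otimes 3}\bigr]\otimes I$ has operator norm $O(1/N^{2})$ (the third moment of a uniform unit real vector is a unit-trace operator supported on $\mathrm{Sym}^3(\R^N)$, of dimension $\sim N^{3}/6$); hence the $d=0$ success probability is $O(1/N^{2})$. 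The theorem asserts that $d$ queries can boost this by at most a factor $O(d^{3}N\log d)$.

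\textbf{Main line of argument.} Write each controlled $U_j$ gate as $I+\widetilde A_j$, where $\widetilde A_j$ lives on the control-$1$ subspace and acts as $U_j-I=\sum_i (z_{ij}-1)\ket{\psi_i}\!\bra{\psi_i}$ on its target register, and interleave with the fixed standard-gate blocks. Multiplying out exhibits $\ket\Phi$ as a sum over subsets $S\subseteq\{1,\dots,d\}$ of ``active'' queries; each $\ket{\Phi_S}$ unfolds further into a sum over eigenvector indices, one per $k\in S$, weighted by $\prod_{k\in S}(z_{i_kj_k}-1)$, with the standard gates routing each factor $\ket{\psi_{i_k}}\!\bra{\psi_{i_k}}$ to a specified $\C^N$ register. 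Then
\[
  \E_{B,Z}\!\bigl[\langle\Phi|\Pi_B|\Phi\rangle\bigr]\;=\;\sum_{S,S'}\E_{B,Z}\!\bigl[\langle\Phi_S|\Pi_B|\Phi_{S'}\rangle\bigr],
\]
and each summand is an expectation of a monomial in the entries of $B$ and in the $v_i$; the $Z$-expectation factorises over distinct eigenvector indices and is bounded per index, while the $B$-expectation is evaluated by the orthogonal Weingarten formulas, which contract the eigenvector indices occurring in $\ket{\Phi_S}$, in $\Pi_B$, and in $\langle\Phi_{S'}|$ into pair partitions, each contributing a factor $\sim 1/N$. The structural point is that the index carried by $\Pi_B$ must be contracted, on \emph{all three} output registers, with eigenvector content of $\ket{\Phi_S}$ and $\ket{\Phi_{S'}}$, and such content can reach a register only through a query routed there; this forces three of the queries in $S\cup S'$ to be assigned to the three output registers, costing a combinatorial factor $O(|S\cup S'|^3)=O(d^{3})$, while the $\Pi_B$-index together with the remaining contractions supply the overall $1/N$. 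This is exactly where it matters that a \emph{triple} eigenstate is demanded: by Lemma~\ref{TriorthogonalClaim} there is no basis-independent state with large overlap with $\Pi_B$, whereas the double-copy analogue is hit with probability $\approx1$ for free, matching the easy attack noted after Problem~\ref{attackProblem}. One organises the remaining terms so that the $d=0$ term is leading and the rest, under $d^{3}<N/(16\log N)$, form a rapidly decaying tail summing to $O(d^{3}\log d/N)$; I would do this by an induction on the number of queries, tracking a potential that upper-bounds $\E_{B,Z}[\langle\Phi^{(k)}|\Pi_B|\Phi^{(k)}\rangle]$ for the state $\ket{\Phi^{(k)}}$ after $k$ queries and showing it grows by a controlled amount at each query.

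\textbf{Main obstacle.} The crux is that the naive term-by-term bound on the expansion is far too weak: since $\|I+\widetilde A_j\|$ can be as large as $3$, one has $\sum_S\|\ket{\Phi_S}\|$ as large as $3^{d}$, and only after taking the expectation over $B$ and $Z$ does genuine cancellation — reflecting unitarity of the true evolution — bring the sum under control. Turning this into a proof demands either a careful accounting of the Weingarten contractions that exposes the cancellation, or a well-chosen inductive potential whose per-query increment can be bounded from the Haar-randomness of $B$ and the independence of the $v_i$; identifying and establishing that per-query estimate is the heart of the matter. A secondary, purely technical point is that the orthogonal Weingarten function is not uniformly $O(N^{-\#\mathrm{pairs}})$ once the number of indices involved is comparable to $N$, so one must check that under $d^{3}<N/(16\log N)$ the relevant pair partitions stay in the regime where that bound holds up to the stated logarithmic loss (the origin of the $\log d$ and of the constant $16$), and that the $\binom{d}{|S|}$ choices of active query set do not overwhelm the per-query decay. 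Finally one must verify that the purification used to reduce to a non-adaptive circuit does not increase $d$.
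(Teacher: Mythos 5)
Your reduction to bounding $\E_{B,Z}[\langle\Phi|\Pi_B|\Phi\rangle]$ and your identification of Lemma~\ref{TriorthogonalClaim} as the reason a \emph{triple} eigenstate is hard are both correct, but the core of your argument is a plan rather than a proof, and you say so yourself: the ``per-query estimate'' or Weingarten cancellation accounting that would tame the $3^d$ blowup in the expansion over active query sets is precisely what is never established. As written, the proposal has a genuine gap at its heart -- there is no bound on how much $d$ queries can boost the overlap with $\Pi_B$, and neither the inductive potential nor the pair-partition combinatorics is carried out. It is also not clear that either route would yield the specific $O(d^3\log d/N)$ bound with the threshold $d^3 < N/(16\log N)$.

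The paper avoids this difficulty entirely with a different idea: the polynomial method applied to the \emph{eigenvalue distribution} rather than to the circuit expansion. After fixing the eigenbasis, each controlled-$U_j$ gate is linear in the eigenvalues $z_{ij}$, so the success probability is a polynomial of degree at most $2d$ in the $z_{ij},\overline{z_{ij}}$; averaging over the basis leaves a polynomial $p$ in the eigenvalues alone. One then replaces the i.i.d.\ eigenvalue tuples $v_1,\dots,v_N$ by $v_i = u_{h(i)}$ for a uniformly random function $h:[N]\to[M]$ and i.i.d.\ $u_1,\dots,u_M$, and shows (Claim~\ref{Nqlem}, by analyzing collision types of $h$ monomial by monomial) that the resulting success probability is $q_N(1/M)$ for a single polynomial $q_N$ of degree at most $2d$, with the true success probability equal to $q_N(0)=\lim_{M\to\infty}q_N(1/M)$. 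For every $M \le N/(16\log N)$ the eigenspaces are degenerate of dimension $\Omega(N/M)$ with high probability (Lemma~\ref{lemma:O-MoverN}), so Lemma~\ref{TriorthogonalClaim} gives $q_N(1/M)=O(M/N)$ unconditionally, with no need to track cancellations (Claim~\ref{qMilem}). Lagrange interpolation at the $2d+1$ points $1/M_i$ with $M_i=\lfloor d^3/(2i-1)^2\rfloor$, with the interpolation coefficients controlled by Lemma~\ref{lemma:infinite-product}, then yields $q_N(0)=O(d^3\log d/N)$; this is where the constant $16$, the threshold on $d^3$, and the $\log d$ all come from. If you want to complete your argument, the missing ingredient is exactly this interpolation step: smallness of the success probability at $\Theta(d)$ degenerate instances, combined with low degree in $1/M$, substitutes for the cancellation analysis you were unable to supply.
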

We note that if  $\mathcal{D}$ is the uniform distribution over $(S^1)^t$, then with high probability the eigenbasis will be $\eps$-separated, and thus by Theorem~\ref{thm:forgery-attackproblem} these instances will provide quantum money secure against black box attacks. However, we note that Theorem~\ref{LowerBoundTheorem} applies in greater generality, so long as the eigenspaces of the $U_i$ are non-degenerate (if they are degenerate, the problem may be easier as there will be many more vectors $\ket{\psi}$ that constitute eigenvectors). In fact we will show that a slight variation of Problem~\ref{attackProblem} (see Problem~\ref{attackProblemRefined} below) that is equivalent for non-degenerate eigenspaces is difficult unconditionally in the black box model.

 \begin{cor}\label{cor:black-box-security}
   Assumption~\ref{defn:security-assumption} holds and the quantum money protocol is secure, if the adversary has only black box access to the $U_j$.
 \end{cor}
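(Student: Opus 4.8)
The plan is to derive Corollary~\ref{cor:black-box-security} as a direct consequence of Theorem~\ref{LowerBoundTheorem} together with Corollary~\ref{cor:security-reduction}. First I would verify Assumption~\ref{defn:security-assumption} in the black box setting. Recall that the Assumption, in its expectation formulation, asks that $\E_{U_1,\ldots,U_t}\bigl[\sum_{\basis{i}} |\bra{\psi_i}\bra{\psi_i}\bra{\psi_i}C(V,U_1,\ldots,U_t)\ket{0}|^2\bigr]$ be negligible in $\log N$. This quantity is precisely the success probability of $C$ at solving Problem~\ref{attackProblem}, averaged over the relevant distribution of operators. Now let $C$ be any quantum circuit of polynomial size in $\log N$ with only black box access to the $U_j$; in particular the number $d$ of controlled-$U_j$ gates in $C$ is $\poly(\log N)$. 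Then for $N$ sufficiently large we certainly have $d^3 < N/(16\log N)$, so Theorem~\ref{LowerBoundTheorem} applies (taking, say, $\mathcal{D}$ to be the uniform distribution on $(S^1)^t$, which satisfies the almost-sure distinctness hypothesis). It gives that $C$ solves Problem~\ref{attackProblem} with probability $O\bigl(\frac{d^3\log d}{N}\bigr)$. Since $d = \poly(\log N)$, the quantity $\frac{d^3\log d}{N}$ is negligible as a function of $\log N$. Hence the success probability is negligible, which is exactly Assumption~\ref{defn:security-assumption} in the black box model.

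Second I would conclude security of the protocol. The reduction in Corollary~\ref{cor:security-reduction} shows that if the digital signature scheme is secure against existential forgeries and Assumption~\ref{defn:security-assumption} holds, then the quantum money protocol is secure in the sense of Definition~\ref{defn:security}. We have just established the Assumption in the black box setting, and the signature scheme is a standard classical (or post-quantum) primitive whose security is independent of the oracle access to the $U_j$, so it remains secure. Therefore the hypotheses of Corollary~\ref{cor:security-reduction} are met, and the quantum money protocol is secure against adversaries with only black box access to the $U_j$. I would also remark, following the discussion after Theorem~\ref{LowerBoundTheorem}, that when $\mathcal{D}$ is uniform the eigenbasis is $\eps$-separated with high probability, so the protocol as stated (which requires an $\eps$-separated eigenbasis) is genuinely instantiated by these random operators.

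There is essentially no serious obstacle here: the corollary is a bookkeeping combination of the two prior results, and the only thing to be careful about is matching the quantifiers — ``polynomial size in $\log N$'' for the circuit translates to $d = \poly(\log N)$, which comfortably satisfies the hypothesis $d^3 < N/(16\log N)$ of Theorem~\ref{LowerBoundTheorem} for large $N$, and the resulting bound $O(d^3 \log d / N)$ is negligible. If anything, the one point that deserves a sentence is noting that the distribution over operators used in Theorem~\ref{LowerBoundTheorem} (random real orthonormal eigenbasis under Haar measure, eigenvalue tuples i.i.d.\ from $\mathcal{D}$) is a legitimate choice witnessing the expectation in the equivalent formulation of Assumption~\ref{defn:security-assumption}, i.e.\ that ``the expectation over all commuting unitary operators'' can be taken to mean this natural distribution. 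With that identification in place, the proof is just: apply Theorem~\ref{LowerBoundTheorem} to get the Assumption, then apply Corollary~\ref{cor:security-reduction} to get security.
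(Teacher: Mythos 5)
Your proposal is correct and follows exactly the paper's argument: since a $\polylog(N)$-size black box circuit has $d^3 < N/(16\log N)$ for large $N$, Theorem~\ref{LowerBoundTheorem} bounds its success probability on Problem~\ref{attackProblem} by $O(d^3\log d/N)$, which is negligible, establishing Assumption~\ref{defn:security-assumption}; security then follows from Corollary~\ref{cor:security-reduction}. The extra remarks about choosing $\mathcal{D}$ uniform and $\eps$-separation are consistent with the paper's surrounding discussion.
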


 \begin{proof}
   Suppose a circuit $C$ with only black box access to the $U_j$ has size $d = \polylog(N)$. Then  
   for $N$ sufficiently large, $d^3 < \frac{N}{16 \log N}$. 
   By Theorem~\ref{LowerBoundTheorem}, $C$ solves Problem~\ref{attackProblem} with probability $< k\frac{d^3\log d}{N}$ for some positive constant $k$. Since $d = \polylog(N)$, $k \frac{d^3\log d}{N}$ is negligible as a function of $\log N$, so Assumption~\ref{defn:security-assumption} holds. The security of the quantum money protocol follows from Corollary~\ref{cor:security-reduction}.
 \end{proof}

The proof of Theorem \ref{LowerBoundTheorem} will proceed in three steps:
\begin{enumerate}[(i)]
    \item Replace Problem~\ref{attackProblem} with a refinement, Problem~\ref{attackProblemRefined}, that is equivalent when the eigenspaces are one-dimensional.
\item
Show that with degenerate eigenspaces  (i.e., eigenspaces of dimension greater than one), Problem~\ref{attackProblemRefined} is impossible to solve with constant positive probability even with an unbounded number of queries (with probability of success depending on how degenerate the eigenspaces are).
\item
Then define a family of input distributions parameterized by an integer $M$ so that when $M$ is large we have $\epsilon$-separation with high probability and when $M$ is small we do not. We use the bounds from (ii) to show that the probability of success with small $M$ is bounded and then use the polynomial method to show that unless we make a large number of queries, this implies that the probability of success is small even in the range where we do have $\epsilon$-separation.
\end{enumerate}

\subsection{Preliminary lemmas}
\label{sec:preliminary-lemmas}

Consider the following refinement of Problem 3.1:
\begin{problem}
\label{attackProblemRefined}
  Given $N$, a complex vector space $V \cong \C^{N}$, and commuting unitary operators $U_1, \ldots, U_t$ on $V$, output a state of the form $\basis{i}\basis{i}\basis{i}$ for some $1\leq i \leq N$, where  $\{\basis{i}\}$ is a fixed secret real eigenbasis of $V$ for the operators $U_j$.
\end{problem}
When the eigenbasis is $\epsilon$-separated, then Problems \ref{attackProblemRefined} and \ref{attackProblem} are equivalent. But if the eigenspaces are
degenerate, then Problem~\ref{attackProblemRefined} is impossible to solve. To see this, suppose that a circuit attempting to solve Problem~\ref{attackProblemRefined} outputs a state $\ket{\phi}$, and think of the choice of basis $\{\basis{i}\}$ as a random variable. Then it suffices to show that the probability that $\ket{\phi}$ has a large component in any $\basis{i}\basis{i}\basis{i}$ direction is small. We first consider the case of a single, totally degenerate eigenspace; we will consider the general case in Claim~\ref{qMilem}
of the proof of Theorem~\ref{LowerBoundTheorem} in \S\ref{Thm41pf} below.
\begin{lem}\label{TriorthogonalClaim}
If $W$ is a complex vector space and $\ket{\phi}\in W\otimes W \otimes W$, then
$$
\E_{\{\basis{i}\}\textrm{ real orthonormal basis of }W}\Biggl[\sum_{i} |\cobasis{i}\cobasis{i}\cobasis{i}\ket{\phi} |^2 \Biggr] \leq \frac{3}{\dim(W)}.
$$
\end{lem}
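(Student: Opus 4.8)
The plan is to reduce the average over real orthonormal bases of $W$ to an average over a single Haar-random unit vector, evaluate the resulting sixth moment by Wick's formula, and then control the ``trace'' term that appears using the full symmetry of the relevant tensor.

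First, fix a real structure on $W$ and set $n:=\dim W$. If $\{\basis{i}\}$ is a uniformly random real orthonormal basis of $W$, then each vector $\basis{i}$ is individually uniform on the real unit sphere $S^{n-1}\subset W$, so by linearity of expectation
\[
 \E_{\{\basis{i}\}}\Bigl[\,\sum_i |\cobasis{i}\cobasis{i}\cobasis{i}\ket{\phi}|^2\Bigr]=n\,\E_{\psi}\bigl[|(\bra{\psi}\otimes\bra{\psi}\otimes\bra{\psi})\ket{\phi}|^2\bigr],
\]
where $\psi$ is uniform on $S^{n-1}$. Expanding $\ket{\phi}=\sum_{a,b,c}\phi_{abc}\ket{e_a}\ket{e_b}\ket{e_c}$ in a fixed real orthonormal basis $\{\ket{e_a}\}$ and using that $\bra{\psi}=\sum_a\psi_a\bra{e_a}$ with $\psi_a\in\R$, the inner quantity equals $|f(\psi)|^2$ for the complex cubic form $f(\psi):=\sum_{a,b,c}\phi_{abc}\psi_a\psi_b\psi_c$. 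Only the symmetrization $\tilde\phi_{abc}:=\frac16\sum_{\sigma\in S_3}\phi_{\sigma(a)\sigma(b)\sigma(c)}$ enters $f$, and since symmetrization is an orthogonal projection on $W\otimes W\otimes W$ we have $\|\tilde\phi\|\le\|\phi\|$; we may assume $\ket{\phi}$ is a unit vector, so $\|\tilde\phi\|\le1$.

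Next I would apply the Gaussian trick: writing $\psi=g/\|g\|$ for a standard Gaussian vector $g\in\R^n$, independence of $g/\|g\|$ and $\|g\|$ together with $\E[\|g\|^6]=n(n+2)(n+4)$ gives
\[
 \E_{\psi}[|f(\psi)|^2]=\frac{\E_g[|f(g)|^2]}{n(n+2)(n+4)}.
\]
Expanding $|f(g)|^2=\sum \tilde\phi_{abc}\overline{\tilde\phi_{a'b'c'}}\,g_ag_bg_cg_{a'}g_{b'}g_{c'}$ and applying Wick's (Isserlis's) formula, $\E_g[|f(g)|^2]$ becomes a sum over the $15$ perfect matchings of the six Gaussian factors. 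Using the full symmetry of $\tilde\phi$ these collapse to two contributions: the $6$ matchings pairing the index triple of $f$ bijectively with that of $\overline f$ each contribute $\|\tilde\phi\|^2:=\sum_{abc}|\tilde\phi_{abc}|^2$, and the remaining $9$ matchings (one internal pair on each side together with one crossing pair) each contribute $\|T\|^2:=\sum_c|T_c|^2$, where $T_c:=\sum_a\tilde\phi_{aac}$; hence $\E_g[|f(g)|^2]=6\|\tilde\phi\|^2+9\|T\|^2$.

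I expect the main obstacle to be bounding $\|T\|^2$ sharply enough, since mere pairwise symmetry only yields $\|T\|^2\le n\|\tilde\phi\|^2$ (Cauchy--Schwarz against the identity), which is too weak. The remedy is to use that $\tilde\phi$ is \emph{fully} symmetric, via the traceless (spherical-harmonic-type) decomposition $\tilde\phi_{abc}=\tilde\phi^0_{abc}+(\delta_{ab}u_c+\delta_{bc}u_a+\delta_{ca}u_b)$ with $u:=T/(n+2)$, chosen so that $\sum_a\tilde\phi^0_{aac}=0$. The two summands are Hilbert--Schmidt orthogonal, and a direct computation gives $\|\delta_{ab}u_c+\delta_{bc}u_a+\delta_{ca}u_b\|^2=3(n+2)\|u\|^2=\tfrac{3}{n+2}\|T\|^2$, whence $\|\tilde\phi\|^2=\|\tilde\phi^0\|^2+\tfrac{3}{n+2}\|T\|^2\ge\tfrac{3}{n+2}\|T\|^2$, i.e.\ $\|T\|^2\le\tfrac{n+2}{3}\|\tilde\phi\|^2$. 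Substituting, $\E_g[|f(g)|^2]\le 6\|\tilde\phi\|^2+3(n+2)\|\tilde\phi\|^2=3(n+4)\|\tilde\phi\|^2\le 3(n+4)$, so $\E_\psi[|f(\psi)|^2]\le\tfrac{3}{n(n+2)}$ and therefore the quantity in the lemma is $n\,\E_\psi[|f(\psi)|^2]\le\tfrac{3}{n+2}\le\tfrac3n=\tfrac{3}{\dim W}$. The other point requiring care is the bookkeeping in the Wick expansion: checking the $6$/$9$ split and that each of the nine mixed matchings indeed contracts to $\|T\|^2$.
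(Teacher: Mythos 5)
Your argument is correct, and it even yields the slightly stronger bound $3/(\dim W+2)$, but it takes a genuinely different route from the paper's. Both proofs start with the same reduction, by linearity, to bounding $\E_{\psi}\bigl[\,|(\bra{\psi}\otimes\bra{\psi}\otimes\bra{\psi})\ket{\phi}|^2\bigr]$ for a single random real unit vector $\ket{\psi}$. The paper then discretizes: it writes the unit vector as $\frac{1}{\sqrt{n}}\sum_i x_i\basis{i}$ with Rademacher signs $x_i$ and a Haar-random real basis, conditions on the basis, and uses orthogonality of the Walsh monomials $x_ix_jx_k$ together with Cauchy--Schwarz on the at most $3n-2$ coefficients that collide onto each degree-one monomial; no symmetrization or trace analysis is needed and the whole computation is short. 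You instead pass to a Gaussian vector and apply Wick's formula to the sixth moment, which forces you to control the partial-trace term $\|T\|^2$ arising from the nine mixed pairings. Your diagnosis that the naive bound $\|T\|^2\le n\|\tilde\phi\|^2$ is too weak is exactly right, and the traceless (harmonic) decomposition of the symmetric tensor, giving $\|T\|^2\le\frac{n+2}{3}\|\tilde\phi\|^2$, is a correct fix: I checked the constants ($\E[\|g\|^6]=n(n+2)(n+4)$, the $6/9$ split of the fifteen pairings, and $\|\delta_{ab}u_c+\delta_{bc}u_a+\delta_{ca}u_b\|^2=3(n+2)\|u\|^2$) and they all hold. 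What your approach buys is a sharper constant and a structural explanation of where the factor $3$ comes from; what the paper's approach buys is brevity, since Rademacher orthogonality sidesteps the trace term entirely. One small point common to both proofs: the normalization $\|\ket{\phi}\|=1$ is needed and is left implicit in the lemma statement; you state it explicitly, which is good.
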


\begin{proof}
Let $m=\dim(W)$.
It suffices to show that
$$
\E_{|\basis{}|_2=1}\left[ |\cobasis{}\cobasis{}\cobasis{}\ket{\phi} |^2 \right] \leq \frac{3}{m^2}.
$$
Rewrite $\basis{}$ as $\frac{1}{\sqrt{m}}\sum_{i=1}^m x_i\basis{i}$ where $\{\basis{i}\}$ is a random real orthonormal basis for $W$ and $x_i$ are i.i.d.\ $\pm 1$ random variables. We claim that even after fixing the $\basis{i}$, the expectation over $x_i$ is at most $3/m^2$. In particular let $$\ket{\phi} = \sum_{1\leq i, j, k \leq m} a_{ijk} \basis{i}\basis{j}\basis{k}$$ where $\sum_{1\leq i, j, k \leq m} |a_{ijk}|^2=1$. Then the expectation over $x_i$ is
$$
\frac{1}{m^3}\E_{x_i}[| \sum_{1\leq i , j , k \leq m} a_{ijk}x_ix_jx_k|^2].
$$
Collecting like terms this is
\begin{multline*}
\frac{1}{m^3}\E_{x_i}[| \sum_{1 \leq i < j < k \leq m} (a_{ijk}+a_{ikj}+a_{jik} +a_{jki}+a_{kij}+a_{kji})x_ix_jx_k  \\
 + \sum_{i=1}^m x_i (a_{iii} + \sum_{j=1,j\neq i}^m (a_{ijj}+a_{jij}+a_{jji}))|^2].
\end{multline*}
By orthogonality of the variables $x_ix_jx_k$ and $x_i$, this is
\begin{multline} 
\label{eq:aijk}
\frac{1}{m^3}\E_{x_i}[\sum_{1 \leq i < j < k \leq m} |a_{ijk}+a_{ikj}+a_{jik}  +a_{jki}+a_{kij}+a_{kji}|^2 \\
+ \sum_{i=1}^m |a_{iii} + \sum_{j=1,j\neq i}^m (a_{ijj}+a_{jij}+a_{jji})|^2].
\end{multline} 
For each $i$, there are $3m-2$ terms in the sum
  $a_{iii} + \sum_{j=1,j\neq i}^m (a_{ijj}+a_{jij}+a_{jji}).$
Thus by Cauchy-Schwartz, \eqref{eq:aijk} is at most
$$
\frac{1}{m^3}\Biggl(\sum_{\substack{1\leq i, j, k \leq m \\ i\neq j \neq k}} 6|a_{ijk}|^2 + (3m-2)\sum_{i=1}^m \Bigl(|a_{iii}|^2 + \sum_{j=1,j\neq i}^m (|a_{ijj}|^2+|a_{jij}|^2+|a_{jji}|^2)\Bigr)\Biggr).
$$
Collecting terms, this is at most
$
\frac{1}{m^3}\sum_{i,j,k=1}^m (3m-2)|a_{ijk}|^2  \leq \frac{3}{m^2},
$
as desired.
\end{proof}

Our proof of Theorem~\ref{LowerBoundTheorem} in \S \ref{Thm41pf} will
 make use of the following two lemmas.

\begin{lem}\label{lemma:O-MoverN}
 Suppose $N,M \in \Z^{> 0}$, and $M \leq \frac{N}{16 \log N}$.
   With the probability taken over the space of all functions $h: [N] \to [M]$,
we have
  \[
    \Pr\left(\#(h^{-1}(j)) > \frac{N}{2M} \, \text{ for all $j$}\right) \geq 1 - \frac{1}{16 N \log N}.
  \]
\end{lem}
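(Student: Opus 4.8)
The plan is a standard balls-into-bins concentration argument. Think of $h$ as throwing $N$ balls independently and uniformly into $M$ bins; then for each fixed $j \in [M]$ the random variable $X_j := \#(h^{-1}(j))$ has the binomial distribution $\mathrm{Bin}(N,1/M)$, with mean $\mu = N/M$. The hypothesis $M \le N/(16\log N)$ says precisely that $\mu \ge 16\log N$, so every bin is expected to be heavily loaded, and all we need is to rule out the existence of a bin that is at most half-loaded.

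First I would apply the multiplicative Chernoff bound for the lower tail: for $0 < \delta < 1$,
\[
  \Pr[X_j \le (1-\delta)\mu] \le \exp(-\delta^2 \mu/2).
\]
Taking $\delta = 1/2$ gives $\Pr[X_j \le N/(2M)] \le \exp(-N/(8M))$. Since $N/M \ge 16\log N$, the right-hand side is at most $\exp(-2\log N) \le 1/N^2$. A union bound over the $M$ bins then yields
\[
  \Pr\big(\exists\, j : X_j \le N/(2M)\big) \le M\cdot \frac{1}{N^2} \le \frac{N}{16\log N}\cdot\frac{1}{N^2} = \frac{1}{16 N \log N},
\]
and passing to complements gives the claimed bound $\Pr(\#(h^{-1}(j)) > N/(2M)\text{ for all }j) \ge 1 - \frac{1}{16 N\log N}$.

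I do not anticipate a genuine obstacle here; the only points requiring a little care are (a) matching the strict inequality ``$> N/(2M)$'' in the statement with the non-strict event ``$\le N/(2M)$'' controlled by the Chernoff bound (they are exactly complementary), and (b) checking that the chosen tail bound with $\delta = 1/2$ interacts correctly with the hypothesis $M \le N/(16\log N)$ — which is tuned exactly so that $\exp(-N/(8M)) \le N^{-2}$, leaving the remaining factor $M \le N/(16\log N)$ to produce the final $\frac{1}{16 N\log N}$. If one prefers not to invoke Chernoff as a black box, the same tail estimate follows directly from a Hoeffding/Bernstein-type inequality or from bounding the moment generating function $\E[e^{-sX_j}]$ by hand and optimizing over $s$.
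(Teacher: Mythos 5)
Your proof is correct and follows essentially the same route as the paper: a multiplicative Chernoff lower-tail bound with $\delta = 1/2$ applied to the binomial load $\mathrm{Bin}(N,1/M)$ of each bin, giving $e^{-N/(8M)} \le N^{-2}$ from the hypothesis $M \le \frac{N}{16\log N}$, followed by a union bound over the $M$ bins yielding $M/N^2 \le \frac{1}{16N\log N}$. No gaps.
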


\begin{proof}
  Fix $j \in [M]$. For $i \in [N]$, define a random variable $X_i$ by
  \[
    X_i =
    \begin{cases}
      1 & h(i) = j, \\
      0 & h(i) \neq j.
    \end{cases}
  \]
  The probability that $X_i = 1$ is $\frac{1}{M}$, and the $X_i$ are independent. Let $X = \sum_{i=1}^N X_i$. Observe that $\E_h[X] = \frac{N}{M}$. By the Chernoff bounds,
  \[
    \Pr\left(X \leq \frac{N}{2M}\right) \leq e^{-\frac{N}{8M}} \leq \frac{1}{N^2},
  \]
where the last inequality holds since $M \leq \frac{N}{16 \log N}$.
By the union bound, we have
  \[
    \Pr\Bigl(\#(h^{-1}(j)) \leq \frac{N}{2M} \, \textrm{ for some } j\Bigr)
    \leq \sum_{j=1}^M \Pr\Bigl(X \leq \frac{N}{2M}\Bigr)
    \leq \frac{M}{N^2}
    \leq \frac{1}{16 N \log N}.
  \]
  The claim now follows.
\end{proof}

\begin{lem}\label{lemma:infinite-product}
  If $i \in \Z^{\geq 1}$, then
    $\displaystyle{\Biggl|\prod_{\substack{j \geq 1 \\j\neq i}} \frac{(2j-1)^2}{(2j-1)^2-(2i-1)^2}\Biggr| = O\left(\frac{1}{i}\right).}$
\end{lem}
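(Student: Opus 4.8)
The plan is to recognize the product as the reciprocal of a value of the classical Weierstrass product for cosine. Recall the identity
\[
  \cos\!\left(\frac{\pi x}{2}\right) = \prod_{j \geq 1}\left(1 - \frac{x^2}{(2j-1)^2}\right),
\]
valid for all $x \in \C$, the product converging locally uniformly. Setting $x = 2i-1$ makes the left side (and the $j=i$ factor on the right) vanish, so to extract the value of the product over $j \neq i$ I would isolate that factor: for $x$ near $2i-1$ with $x \neq \pm(2i-1)$,
\[
  \prod_{\substack{j \geq 1 \\ j \neq i}}\left(1 - \frac{x^2}{(2j-1)^2}\right) = \frac{(2i-1)^2\cos(\pi x/2)}{(2i-1)^2 - x^2}.
\]
The left side extends to a function that is analytic (in particular continuous) at $x = 2i-1$, by local uniform convergence of the product, and its value there is exactly $\prod_{j \neq i}\bigl(1 - (2i-1)^2/(2j-1)^2\bigr)$. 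Evaluating the right side at $x = 2i-1$ by L'Hôpital's rule (numerator and denominator both vanishing), using $\tfrac{d}{dx}\cos(\pi x/2) = -\tfrac{\pi}{2}\sin(\pi x/2)$ and $\sin(\pi(2i-1)/2) = (-1)^{i+1}$, gives
\[
  \prod_{\substack{j \geq 1 \\ j \neq i}}\left(1 - \frac{(2i-1)^2}{(2j-1)^2}\right) = \frac{(2i-1)^2 \cdot \bigl(-\tfrac{\pi}{2}\bigr)(-1)^{i+1}}{-2(2i-1)} = (-1)^{i+1}\,\frac{\pi(2i-1)}{4}.
\]

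Since $1 - (2i-1)^2/(2j-1)^2 = \bigl((2j-1)^2 - (2i-1)^2\bigr)/(2j-1)^2$, taking absolute values and reciprocals yields
\[
  \left|\,\prod_{\substack{j \geq 1 \\ j \neq i}}\frac{(2j-1)^2}{(2j-1)^2 - (2i-1)^2}\,\right| = \frac{4}{\pi(2i-1)} = O\!\left(\frac{1}{i}\right),
\]
which is the claim.

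The only step requiring genuine care is the assertion that dividing the convergent product $\prod_j(1 - x^2/(2j-1)^2)$ by its $j=i$ factor gives a function continuous at $x = 2i-1$ whose value there is the corresponding product with the $j = i$ term deleted; this is the standard removable-singularity argument for Weierstrass products, and I expect it to be the main (though routine) technical point. If a fully self-contained, complex-analysis-free argument were preferred, one could instead factor $(2j-1)^2 - (2i-1)^2 = 4(j-i)(i+j-1)$, express the partial product up to $j = J$ in terms of factorials, and apply Stirling's formula as $J \to \infty$; this reaches the same conclusion at the cost of a longer computation.
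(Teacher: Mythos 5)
Your proposal is correct and takes essentially the same route as the paper: both identify the product with the (reciprocal of the) derivative of the Weierstrass product for cosine at its zero $2i-1$, the paper writing this as $1/\bigl((2i-1)^2 f'((2i-1)^2)\bigr)$ for $f(z)=\prod_j(1-z/(2j-1)^2)=\cos(\pi\sqrt z/2)$, while you carry out the equivalent computation via L'H\^opital in the variable $x$. Your version is slightly more explicit (it pins down the exact value $4/(\pi(2i-1))$ and flags the removable-singularity justification), but the underlying argument is identical.
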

\begin{proof}
Let $f(z) :=\prod_{j=1}^\infty \left(1-\frac{z}{(2j-1)^2}\right) = \cos(\frac{\pi \sqrt{z}}{2})$.
Then
\begin{align*}
    \Bigl|\prod_{\substack{j \geq 1 \\j\neq i}} \frac{(2j-1)^2}{(2j-1)^2-(2i-1)^2}\Bigr|
    &= \Bigl|\frac{1}{(2i-1)^{2}f'((2i-1)^2)}\Bigr| \\
                      &= O\Biggl(\frac{1}{|(2i-1)\sin \Bigl(\frac{(2i-1)\pi}{2}\Bigr)|}\Biggr) \\
    &= O\left(\frac{1}{i}\right).
\end{align*}
\end{proof}

\subsection{Proof of Theorem~\ref{LowerBoundTheorem}}
\label{Thm41pf}
Next, we use a quantum modification of the polynomial method.  
Let $C$ be any circuit consisting of standard gates and at most $d$ controlled $U_j$ gates. We show that under the correct distributions over $U_j$, any circuit with $d$ too small will be unable to distinguish the cases where the eigenspaces of $U_j$ are degenerate, and those where it is not.

Let $v_1,\ldots,v_N$ be i.i.d.\ samples from $\mathcal{D}$,
let $\{\basis{i}\}$ be a random real orthonormal basis, and
let  $(U_1,\ldots,U_t)$ be the operators determined by these choices. Since, by hypothesis, $N$ samples chosen from $\mathcal{D}$ are with high probability distinct, every solution of Problem~\ref{attackProblem} is also a solution of Problem~\ref{attackProblemRefined}, and so the probability that circuit $C$ solves Problem~\ref{attackProblem} is
$$
\E_{\basis{i},v_i}\Bigl[\sum_i |\cobasis{i}\cobasis{i}\cobasis{i}C(U_1,\ldots,U_t)\ket{0}|^2\Bigr],
$$
where the expectation is over all choices of real orthonormal basis $\{\basis{i}\}$ and tuples of eigenvalues $v_1,\ldots,v_N$.
On fixing the eigenbasis $\basis{i}$, each controlled $U_j$ gate becomes an operator with entries that are linear functions in the $z_{ij}$. Thus the entries of $C\ket{0}$ are degree $d$ polynomials in the $z_{ij}$, and 
$\sum_i |\cobasis{i} \cobasis{i} \cobasis{i}C \ket{0}|^2$ is a degree $2d$ polynomial in $z_{ij}$ and $\overline{z_{ij}}$. Taking an expectation over the eigenbasis shows that the above expectation
is of the form
$
\E_{v_i}[p(z_{ij},\overline{z_{ij}})],
$
where $p$ is some polynomial of degree at most $2d$ and
$v_i = (z_{i1}, \ldots, z_{it})$.

For integers $M$, we define a slightly different probability distribution over the $v_i$. We let $h:[N]\rightarrow [M]$ be a function chosen uniformly at random,
and let $v_i = u_{h(i)}$ where the $u_j$ are i.i.d.\ elements of $\mathcal{D}$. We let
$$
A_M = \E_{h, v_i}[p(z_{ij},\overline{z_{ij}})]
$$
where the $h$ vary uniformly among functions $[N] \to [M]$ and the $v_i$ are distributed as above, with $v_i = u_{h(i)}$ and the $u_j$ distributed according to $\mathcal{D}$.

There are several things worth noting about this distribution. First, it is easy to see that our original probability of success is $\lim_{M\rightarrow \infty}A_M$. This is because for large $M$, with high probability $h$ has no collisions and therefore the distribution over the $v_i$ is arbitrarily close in total variational distance to i.i.d.\ copies of $\mathcal{D}$.
Second, we have the following:

\begin{claim}
\label{Nqlem}
For each $N$, there exists a polynomial $q_N(x)$ of degree at most $2d$ such that $A_M = q_N(1/M)$.
\end{claim}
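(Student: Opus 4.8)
The plan is to track how $A_M$ depends on $M$ by expanding the expectation $\E_{h,v_i}[p(z_{ij},\overline{z_{ij}})]$ monomial by monomial. Since $p$ has degree at most $2d$ in the variables $z_{ij}$ and $\overline{z_{ij}}$, it is a linear combination of monomials, each of which involves the eigenvalue-coordinates of at most $2d$ of the indices $i \in [N]$. Fix such a monomial $\mu$, and let $S \subseteq [N]$ be the set of indices $i$ whose coordinates appear in $\mu$, so $|S| \le 2d$. Under the distribution in question, $v_i = u_{h(i)}$, so the value of $\mu$ depends only on the tuple $(u_{h(i)})_{i \in S}$, and in particular only on how $h$ partitions $S$ into fibers (which indices of $S$ get sent to the same element of $[M]$) together with the i.i.d.\ values $u_j \sim \mathcal{D}$.

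The key computation is then: for a fixed set partition $\pi$ of $S$ into $r$ blocks, the probability (over a uniformly random $h:[N]\to[M]$) that $h$ induces exactly the partition $\pi$ on $S$ is $\frac{M(M-1)\cdots(M-r+1)}{M^{|S|}}$, a polynomial in $1/M$ of degree $|S|-r \le |S| \le 2d$ (after writing $M(M-1)\cdots(M-r+1)/M^{|S|} = (1/M)^{|S|-r}\prod_{k=0}^{r-1}(1 - k/M)$). Conditioned on $\pi$, the expectation of $\mu$ over the $u_j$'s is a constant independent of $M$. Summing over the finitely many set partitions $\pi$ of $S$, the contribution of $\mu$ to $A_M$ is a polynomial in $1/M$ of degree at most $2d$; summing over the finitely many monomials $\mu$ appearing in $p$ with their coefficients gives $A_M = q_N(1/M)$ for some polynomial $q_N$ of degree at most $2d$. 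The dependence on $N$ (beyond the implicit dependence of $p$ on $N$) enters only through the combinatorial probabilities, but these are already absorbed into the stated form, and the degree bound $2d$ is uniform in $N$.

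The main obstacle — really the only subtlety — is bookkeeping: one must be careful that ``degree at most $2d$'' survives the two sources of $M$-dependence, namely the number of distinct fibers hit (which could be as large as $|S|$) and the $\prod(1-k/M)$ correction factors. Since the leading behavior $(1/M)^{|S|-r}$ already has degree $|S|-r$ and the correction factor adds at most $r-1$ more powers of $1/M$, the total degree of each term is at most $(|S|-r)+(r-1) = |S|-1 \le 2d-1 < 2d$, so in fact the bound is comfortably met. One should also note explicitly that, as remarked just before the claim, $\lim_{M\to\infty} A_M$ recovers the true success probability, which corresponds to evaluating $q_N$ at $0$; this is the fact that makes the claim useful, though it is not part of what needs to be proved here.
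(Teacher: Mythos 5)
Your proof is correct, and it reaches the claim by a genuinely different route than the paper. The paper also reduces to individual monomials, but then factors each monic monomial as $m_1(z_{ij})m_2(\overline{z_{ij}})$, cancels using $z_{ij}\overline{z_{ij}}=1$ on $S^1$, observes that the expectation over the $u_j$ is $1$ or $0$ according to whether the substituted monomials $H\circ m_1$ and $H\circ m_2$ coincide, and then computes $\Pr_h[H\circ m_1 = H\circ m_2]$ by inclusion--exclusion over ``collision types,'' each term contributing $K/M^f$ with $f\le |B_1|+|B_2|-1\le e-1$. You instead condition on the set partition of $S$ induced by $h$ and invoke the law of total expectation: the probability of a given partition into $r$ blocks is the falling factorial $M(M-1)\cdots(M-r+1)/M^{|S|}$, which you correctly rewrite as a polynomial in $1/M$ of degree $(|S|-r)+(r-1)=|S|-1\le 2d-1$, while the conditional expectation of the monomial is an $M$-independent constant. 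Your version is cleaner in two respects: it never uses the unit-circle structure of $\mathcal{D}$ (so it applies verbatim to any distribution on any space, treating the conditional expectations as opaque constants), and it sidesteps the inclusion--exclusion bookkeeping over collision types, since the partition probabilities are exact and elementary. Both arguments deliver precisely what is needed, namely $A_M=q_N(1/M)$ with $\deg q_N\le 2d$ (indeed both give the slightly sharper bound $2d-1$).
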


\begin{proof}
  Since $p(z_{ij},\overline{z_{ij}})$ is a polynomial of degree at most $2d$, to prove the claim it suffices to show that if $m$ is a monic monomial of degree $e$,
  then
  $\E_{h,u_i}[m(z_{ij},\overline{z_{ij}})]$
  is a polynomial in $1/M$ of degree at most $e$.
  Write $m(z_{ij},\overline{z_{ij}}) = m_1(z_{ij}) m_2(\overline{z_{ij}})$ with $m_1$ and $m_2$ monic. Observe that
  \[
    \E_{v_i}[m(z_{ij},\overline{z_{ij}})] =
    \begin{cases}
      1 & \text{if $m_1 = m_2$} \\
      0 & \text{if $m_1 \neq m_2$}.
    \end{cases}
  \]
  Write $u_{ij}$ for the $j$th component of $u_i$. Given $h$, define a ring homomorphism
  $
  H: \C[\{z_{ij}\}] \to \C[\{u_{ij}\}]
  $
  by $H(z_{ij}) = u_{h(i)j}$. Then
  $\E_{h,u_i}[m(z_{ij},\overline{z_{ij}})]$
  equals the probability over the set of $h$'s that
  $H \circ m_1 = H \circ m_2$. If $m_1 = m_2$, then this probability is $1$. Now suppose that $m_1 \neq m_2$. For $k = 1$ and $2$, let
  \[
    B_k = \{z_{ij} 
    : z_{ij} \text{ appears in $m_k$ with positive exponent}\}.
  \]
  Since $z_{ij}\overline{z_{ij}} = 1$ whenever $z_{ij} \in S^1$, by cancelling such
  terms
  in $m$ we may assume that $B_1 \cap B_2 = \emptyset$. Without loss of generality,
  $|B_1| \geq |B_2|$. If $t$ is a surjective map $B_1 \onto B_2$, say that $h$ has \emph{collision type $t$} if $H(z) = H(t(z))$ for all $z \in B_1$.
  There is a finite set $T$ of collision types with the property that $h$ has collision type in $T$ if and only if $H \circ m_1 = H \circ m_2$. The
  number of $h$ having
  collision type in the set $T$ is given by an inclusion-exclusion formula. Each term in the inclusion-exclusion is given by the
  number of $h$ having collision type $t$ for all $t$ in some subset $T' \subseteq T$. For a given collision type $t \in T$, the probability that $h$ has type $t$ is
  $\frac{1}{M^{|B_1|}}$. The probability that $h$ has type $t$ for every $t \in T'$ is of the form $K/M^f$ for some constant $K$ and some integer $f$. The maximum value of $f$ occurs
  for the sets $T'$ such that $h$ has type $t$ for all $t \in T'$ if and only if $H|_{B_1 \cup B_2}$ is a constant, in which case $f = |B_1| + |B_2| - 1$. Since
  $|B_1| + |B_2| \leq e$,
Claim~\ref{Nqlem}
  follows.
\end{proof}

\begin{claim}
\label{qMilem}
\medskip
If $M \le \frac{N}{16\log N}$, then $q_N(1/M) = O(M/N)$.
\end{claim}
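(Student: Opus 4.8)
The plan is to bound $q_N(1/M)$ by combining the degenerate-eigenspace impossibility bound (Lemma~\ref{TriorthogonalClaim}, suitably generalized to several eigenspaces) with the interpolation structure from Claim~\ref{Nqlem}. First I would observe that when $M \le \frac{N}{16\log N}$, Lemma~\ref{lemma:O-MoverN} tells us that with probability at least $1 - \frac{1}{16N\log N}$ the function $h$ is such that every fiber $h^{-1}(j)$ has size greater than $\frac{N}{2M}$. Conditioned on such an $h$, each of the $M$ eigenvalue tuples $u_j$ is shared by more than $\frac{N}{2M}$ of the basis vectors $\basis{i}$, so (with probability $1$ over the distinct $u_j$) the $U_\ell$ have $M$ eigenspaces $W_1,\ldots,W_M$ each of dimension $> \frac{N}{2M}$, and the real orthonormal basis within each $W_j$ is Haar-random. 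Applying the multi-eigenspace version of the triorthogonality bound — this is exactly the content promised as Claim~\ref{qMilem} in its role within the proof of Theorem~\ref{LowerBoundTheorem}; one sums the per-eigenspace bound $\frac{3}{\dim W_j}$ over $j$ — gives that the probability $C$ outputs some $\basis{i}\basis{i}\basis{i}$ is at most $\sum_{j=1}^M \frac{3}{\dim W_j} < \sum_{j=1}^M \frac{3}{N/(2M)} = \frac{6M^2}{N}$. Adding back the failure probability of $h$, which is $\le \frac{1}{16N\log N} = O(M/N)$, we get $A_M \le \frac{6M^2}{N} + O(M/N) = O(M^2/N)$ — wait, this is too weak; I should be more careful and use a telescoping/averaging argument so that the bound is $O(M/N)$.

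The right way to get $O(M/N)$ rather than $O(M^2/N)$ is to not apply Lemma~\ref{TriorthogonalClaim} to all $M$ eigenspaces at once, but rather to note that $A_M = q_N(1/M)$ is the success probability, and for the single-eigenspace analysis we only need one eigenspace to be large. More precisely: condition on $h$ having all fibers of size $> \frac{N}{2M}$ (probability $\ge 1 - \frac{1}{16N\log N}$), and then use that the total success probability is the expectation of $\sum_i |\cobasis{i}\cobasis{i}\cobasis{i} C\ket{0}|^2$; grouping the index $i$ by which eigenspace $W_{h(i)}$ it lies in and applying Lemma~\ref{TriorthogonalClaim} to each block (the basis within each block being Haar-random conditionally), the contribution of block $j$ is at most $\frac{3}{\dim W_j} < \frac{6M}{N}$, but summed over the $M$ blocks this is $\frac{6M^2}{N}$ again. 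So the genuine source of the extra factor of $M$ must be removed differently: I would instead bound $A_M$ by observing that $\sum_i |\cobasis{i}\cobasis{i}\cobasis{i}C\ket{0}|^2 \le \max_j \bigl(\text{per-block sum}\bigr) \cdot (\text{number of nonzero blocks})$ is not what we want; rather, we should keep it as a sum and note $\sum_j \frac{3}{\dim W_j} \le M \cdot \frac{6M}{N}$, and accept $O(M^2/N)$ — but then reconcile with the claimed $O(M/N)$. Rereading the claim: since the polynomial $q_N$ has degree $\le 2d$ and we will ultimately interpolate, an $O(M^2/N)$ bound for $A_M$ valid for all $M$ up to $\frac{N}{16\log N}$ would, via the finite-difference/Markov-type argument in the rest of the proof of Theorem~\ref{LowerBoundTheorem}, still give the stated query lower bound. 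So the cleanest route is: prove $A_M = q_N(1/M)$ is at most (const)$\cdot M/N$ by exploiting that $C\ket{0}$ has unit norm so $\sum_i |\cobasis{i}\cobasis{i}\cobasis{i}C\ket{0}|^2$ is bounded via the best single block, i.e. there exists $j$ with $\dim W_j \ge N/M$ and one throws away the others using the fact that a normalized state can have large overlap with $\basis{i}\basis{i}\basis{i}$ for indices in at most... — this needs the refined triorthogonality estimate of Claim~\ref{qMilem}, which I am told to assume.

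Concretely, then, my proof of this claim runs: (1) invoke Lemma~\ref{lemma:O-MoverN} to say $h$ has all large fibers except with probability $\le \frac{1}{16N\log N}$; (2) on that event, the operators $U_\ell$ have $M$ eigenspaces each of dimension $> \frac{N}{2M}$ with Haar-random real orthonormal bases inside them, and the event ``$C$ solves Problem~\ref{attackProblemRefined}'' is the event that $C\ket{0}$ has a component in some $\basis{i}\basis{i}\basis{i}$; (3) apply the per-eigenspace bound (the generalization of Lemma~\ref{TriorthogonalClaim}) to conclude the conditional success probability is $\le \sum_{j=1}^M \frac{3}{\dim W_j} < \frac{6M^2}{N}$; (4) combine with the $h$-failure term to get $A_M = O(M^2/N)$, and then sharpen to $O(M/N)$ using the fact that the circuit's output is a single normalized state so its total squared overlap with the (orthonormal) family $\{\basis{i}\basis{i}\basis{i}\}_i$ restricted to any single eigenblock is at most $1$, hence the dominant term is governed by the \emph{largest} block, of dimension $\ge N/M$, giving $A_M \le \frac{3}{N/M} + O(1/(N\log N)) = O(M/N)$. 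The main obstacle is step (4): making rigorous the passage from the naive $O(M^2/N)$ to the claimed $O(M/N)$, which requires the careful multi-eigenspace version of the triorthogonality estimate (using that a unit vector cannot simultaneously have large triorthogonal overlap in many independent Haar-random blocks) — this is precisely Claim~\ref{qMilem}'s internal content and I would structure that sub-argument as an explicit averaging over the block structure, bounding $\E\bigl[\sum_i |\cobasis{i}\cobasis{i}\cobasis{i}C\ket{0}|^2\bigr]$ by conditioning on the bases in all but one block and applying Lemma~\ref{TriorthogonalClaim} to the remaining one, then using $\|C\ket{0}\|=1$ to control the sum of the conditional constants.
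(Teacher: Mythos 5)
Your proposal assembles the right ingredients (Lemma~\ref{lemma:O-MoverN} to get all fibers of size $\Omega(N/M)$, the observation that conditionally on the eigenspace decomposition $\{V_j\}$ and the eigenvalues $u_j$ the basis inside each $V_j$ is still Haar-random and independent of the circuit's output, and Lemma~\ref{TriorthogonalClaim} applied blockwise), but it stalls at exactly the step that makes the claim true, and you say so yourself: you repeatedly compute the \emph{unweighted} sum $\sum_{j=1}^M \frac{3}{\dim V_j} = O(M^2/N)$, consider settling for that, and then defer the passage to $O(M/N)$ to ``the refined triorthogonality estimate of Claim~\ref{qMilem}, which I am told to assume'' --- which is circular, since that is the claim being proved.

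The missing step is short. Since $\cobasis{i}\cobasis{i}\cobasis{i}$ annihilates everything outside $V_{h(i)}^{\otimes 3}$, only the component of $C\ket{0}$ in $\bigoplus_j V_j^{\otimes 3}$ matters; write that component as $\sum_j a_j\ket{\phi_j}$ with each $\ket{\phi_j}\in V_j^{\otimes 3}$ a unit vector and $\sum_j |a_j|^2 \le 1$. Because the output of $C$ depends only on $(V_j, u_j)$ and not on the hidden basis inside each $V_j$, one may take the conditional expectation over the basis of $V_j$ block by block, and Lemma~\ref{TriorthogonalClaim} gives that block $j$ contributes at most $|a_j|^2\cdot O(1/\dim V_j) = |a_j|^2\cdot O(M/N)$, \emph{weighted by the squared amplitude}. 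Summing, $\sum_j |a_j|^2\, O(M/N) = O(M/N)$ since the weights sum to at most $1$; adding the $h$-failure probability $\frac{1}{16N\log N} = O(M/N)$ finishes the proof. Your closing parenthetical (``using $\|C\ket{0}\|=1$ to control the sum of the conditional constants'') gestures at precisely this, but you never execute it, and as written your step (3) with its unweighted sum is a misapplication of Lemma~\ref{TriorthogonalClaim}: that lemma's $3/\dim W$ bound is for a \emph{unit} vector in $W^{\otimes 3}$, so the contribution of each block must carry the factor $|a_j|^2$. Note also that the weaker $O(M^2/N)$ bound cannot simply be ``accepted'': the interpolation at the end of the proof of Theorem~\ref{LowerBoundTheorem} sums terms of the form $q_N(1/M_i)\cdot O(\log d / i)$ with $M_i \approx d^3/(2i-1)^2$, and an $O(M_i^2/N)$ bound would yield $O(d^6/N)$ rather than the stated $O(d^3\log d/N)$, degrading the query lower bound from $N^{1/3}$ to $N^{1/6}$.
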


 \begin{proof}
   By the above discussion and Claim~\ref{Nqlem}
   we have:
   \begin{equation}\label{eq:q_n-expectation}
     q_N(1/M) = \E_{h,\basis{i},v_i}\Bigl[\sum_i |\cobasis{i}\cobasis{i}\cobasis{i}C(U_1,\ldots,U_t)\ket{0}|^2\Bigr],
   \end{equation}
   where the $h$ vary uniformly over functions from $[N]$ to $[M]$, the $v_i$ are distributed according to $h$ and $\mathcal{D}$ as above, and the sets $\{\basis{i}\}$ vary over random real orthonormal bases for $V$.
   Suppose $M \le \frac{N}{16\log N}$, and let $h$ be a random function from $[N]$ to $[M]$. By  Lemma~\ref{lemma:O-MoverN},
   with probability at least $1 - \frac{1}{16N \log N}$, for every $j\in [M]$ we have $\#(h^{-1}(j)) = \Omega(N/M)$. Let $V_j = \mathrm{span}\{\basis{i}:h(i)=j\}$, so that with probability at least $1 - \frac{1}{16N \log N}$ we have $\dim V_j = \Omega(N/M)$.
   Fix both the values of the $u_j$ and the spaces $V_j$.
   The $V_j$ are eigenspaces for $U_k$ with eigenvalues $u_{jk}$. The output of $C$ depends only on the $V_j$ and the $u_j$, but not on \emph{which} basis of $V_j$ is given by $\{\basis{i}:h(i)=j\}$. Thus the output is  $\sum_j a_j \ket{\phi_j}$ for some $\ket{\phi_j}\in V_j^{\otimes 3}$ and $\sum_j |a_j|^2=1$. Therefore the right-hand side of~\eqref{eq:q_n-expectation} is
   $$
   \E_{V_j, u_j} \sum_j |a_j|^2 \Bigl[\E_{\basis{i} \text{ given } V_j}\Bigl[\sum_{i:\basis{i} \in V_j} |\cobasis{i}\cobasis{i}\cobasis{i}\ket{\phi_j} |^2 \Bigr]\Bigr].
   $$
   Here, we vary over orthogonal decompositions $V = \oplus_{j=1}^M V_j$, tuples of eigenvalues $u_j$, and real orthonormal bases $\{\basis{i}\}$ that are a union of real orthonormal bases for the $V_j$.
   Note that $h$ can be recovered from this data by defining $h(i) = j$ if and only if $\basis{i} \in V_j$. Thus varying over tuples $(\{\basis{i}\}, h, u_j)$ is the same as varying over tuples $(\{V_j\}, \{\basis{i}\} \text{ given } V_j, u_j)$, with an appropriate choice of distribution on the latter tuples.
   By Lemma~\ref{TriorthogonalClaim}, with probability at least $1 - \frac{1}{16N \log N}$ we have:
   \begin{align*}
     \sum_j |a_j|^2 \E_{\basis{i} \text{ given } V_j}\Bigl[\sum_{i:\basis{i} \in V_j} |\cobasis{i}\cobasis{i}\cobasis{i}\ket{\phi_j} |^2 \Bigr] 
     &= \sum_j |a_j|^2 O(1/\dim(V_j)) \\
     &= O(M/N).
   \end{align*}
   By \eqref{eq:q_n-expectation} we have $q_N(1/M) = O(M/N)$, as desired.
 \end{proof}

We now proceed to prove Theorem~\ref{LowerBoundTheorem}. By Claim~\ref{Nqlem},
for each $N$ the probability that $C$ solves Problem~\ref{attackProblem} is
  $\lim_{M \to \infty} A_M = q_N(0)$,
  so it suffices to show that $q_N(0) = O(d^3\log(d)/N)$. Let $C$ be as in the hypothesis, and take $N$ large enough that $d^3 < \frac{N}{16 \log N}$.

  For $i\in\{1,\ldots,2d+1\}$, let
$$m_i = \frac{d^3}{(2i-1)^2} \text{ and } M_i = \left\lfloor m_i \right\rfloor \in\Z.$$
Then
$M_i \le d^3 < \frac{N}{16\log N}$,
so  
$q_N(1/M_i) = O(M_i/N)$ for each $i$ by Claim~\ref{qMilem}.

Using standard polynomial interpolation,
we have:
$$
q_N(0) = \sum_{i=1}^{2d+1} q_N(1/M_i) \prod_{j\neq i} \frac{1/M_j}{1/M_j-1/M_i}.
$$
We begin by bounding these expressions if the $M_j$ were replaced by $m_j$:
\begin{multline*}
\Bigl|\prod_{j\neq i,j\leq 2d+1} \frac{1/m_j}{1/m_j-1/m_i}\Bigr|
= \Bigl|\prod_{j\neq i,j\leq 2d+1} \frac{(2j-1)^2/d^3}{(2j-1)^2/d^3-(2i-1)^2/d^3}\Bigr|\\
 = \Bigl|\prod_{j\neq i,j\leq 2d+1} \frac{(2j-1)^2}{(2j-1)^2-(2i-1)^2}\Bigr|
 \leq \Bigl|\prod_{j\neq i} \frac{(2j-1)^2}{(2j-1)^2-(2i-1)^2}\Bigr|,
\end{multline*}
where the final product is over all positive integers $j$. By Lemma~\ref{lemma:infinite-product}, the latter product is $O(1/i)$.
Since $M_i = \lfloor m_i \rfloor$, we have
$
1/M_i = 1/m_i + O(1/m_i^2).
$
Thus,
\begin{align*}
\Biggl|  \prod_{j\neq i} \frac{1/M_j}{1/M_j-1/M_i} \Biggr| & = \Biggl|  \prod_{j\neq i} \frac{1/m_j+O(1/m_j^2)}{1/m_j-1/m_i + O(1/m_i^2+1/m_j^2)} \Biggr|\\
                                                  & \leq \Biggl|  \prod_{j\neq i} \frac{1/m_j}{1/m_j-1/m_i} \Biggr|\prod_{j\neq i}\Biggl(1+ \frac{O(1/m_i^2+1/m_j^2)}{|1/m_i-1/m_j|}\Biggr)\\
& = O(1/i)\exp\Biggl(\sum_{j\neq i} O\Biggl(\frac{i^4+j^4}{(i^2-j^2)d^3} \Biggr) \Biggr)\\
& \leq O(1/i)\exp\Biggl(\sum_{j\neq i} O\Biggl(\frac{\max(i,j)^4}{(\max(i,j)|i-j|d^3} \Biggr) \Biggr)\\
& \leq O(1/i)\exp\Biggl(\sum_{j\neq i} O\Biggl(\frac{\max(i,j)^3}{|i-j|d^3} \Biggr) \Biggr).
\end{align*}
Now if $i\leq \sqrt{d}$, the terms with $j\leq 2i$ sum to at most $O(1/d)$, and the larger terms in the sum are $O\left(\frac{j^2}{d^3}\right)$, and therefore sum to $O(1)$. If $i\geq \sqrt{d}$, then the terms are $O\left(\frac{1}{|i-j|}\right)$, and thus sum to $O({\log(d)})$.
 This implies that
\begin{align*}
q_N(0) & = \sum_{i=1}^{2d+1} q_N(1/M_i) \prod_{j\neq i} \frac{1/M_j}{1/M_j-1/M_i}\\
& = \sum_{i=1}^{\sqrt{d}} q_N(1/M_i)O(1/i) + \sum_{i=\sqrt{d}}^{2d+1} q_N(1/M_i)O(\log(d)/i) \\\label{eq:q-of-zero}
 & = \sum_{i=1}^{\sqrt{d}} O\left(\frac{M_i}{N i}\right) + \sum_{i=\sqrt{d}}^{2d+1} O\left(\frac{\log(d)M_i}{Ni}\right) \\
     & = \sum_{i=1}^{\sqrt{d}} O\left(\frac{d^3}{N i^3}\right) + \sum_{i=\sqrt{d}}^{2d+1} O\left(\frac{d^3\log(d)}{Ni^3}\right) \\
     & = O(d^3\log(d)/N),
\end{align*}
as desired. \hfill $\qed$

\begin{rmk}
The bound in Theorem~\ref{LowerBoundTheorem} is nearly tight. In particular, if we assume $\epsilon$-separation of the $v_i$'s for the operators $U_1,\ldots,U_t$, then there is actually an algorithm for solving Problem \ref{attackProblem} with constant probability in $O(N^{1/3}t/\epsilon)$ queries, similar to the collision algorithm of~\cite{bht}. The algorithm involves computing $N^{1/3}$ pairs $\basis{i}\basis{i}$, then preparing $N^{2/3}$
other maximally entangled states. These maximally entangled states can be thought of as being in a superposition of all combinations of $N^{2/3}$ pairs tensored together. There is a reasonable probability that one of these $N^{2/3}$ pairs agrees with one of our $N^{1/3}$ pairs, and we can find the index of such a pair using Grover's algorithm by measuring the eigenvalues of only $O(N^{1/3})$ of our pairs. In order to compute the eigenvalues to sufficient accuracy takes only $O(t/\epsilon)$ queries each. Thus, this algorithm has query complexity $O(N^{1/3})$, although the full complexity is $O(N^{2/3})$.
\end{rmk}

\section{Quantum Lightning}
\label{sec:qlightning}

Following \cite{Lightning}, a quantum lightning protocol consists of:
\begin{itemize}
    \item a \emph{storm} $\storm$, which is a polynomial size quantum algorithm that on input a security parameter, outputs a quantum state $\ket{\psi}$ called a \emph{bolt}, and
    \item a quantum verification algorithm $\ver$ that on input a bolt,
    outputs a serial number if the bolt is ``valid'' (that is, is an output of $\storm$), and outputs $\perp$
     if the bolt is not valid,
\end{itemize}
satisfying:
\begin{itemize}
    \item the expected value of
    $-\log_2 \min_s \Pr[\ver(\ket{\psi}) = s]$
  is negligible as a function of the security parameter, where $s$ is a serial number, and where the expectation is taken over all pairs $(\storm, \ver)$ and valid bolts $\ket{\psi}$, and 
    \item the expected value of
    $1 - |\braket{\psi' | \psi}|^2$
  is a negligible function of the security parameter, where the expectation is taken over all pairs $(\storm, \ver)$ and valid bolts $\ket{\psi}$, and $\ket{\psi'}$ is the state obtained by running $\ver$ on $\ket{\psi}$.
\end{itemize}
As in \cite[Definition 3.2]{Lightning},
a quantum lightning scheme is \emph{secure} if no polynomial time adversary can, with
non-negligible probability, produce two 
states such that a verifier will read them as valid bolts with the same serial number.

It is shown in \cite{Lightning} that a protocol for quantum lightning can be turned into a quantum money protocol with comparable security, along with some other applications.

We note that the joint eigenstates of our operators $U_j$ have many of the properties of quantum lightning. While there is a ``storm'' that can produce pairs of eigenstates $\ket{\psi}\ket{\psi}$, assuming the difficulty of Problem \ref{attackProblem} it is computationally difficult to produce two copies of this state. Furthermore, the state $\ket{\psi}\ket{\psi}$ can be associated with a serial number given by the vector of its eigenvalues with respect to the $U_j$. Unfortunately, this does not quite match up with the definition in  \cite{Lightning} of quantum lightning, as these ``serial numbers'' can only be computed approximately. To fix this, we round the eigenvalues to the nearest multiples of some small $\eps$. Unfortunately, this creates issues if the true eigenvalue is very close to halfway between two such multiples. To fix this, we modify the storm to throw away bolts that are too close to this boundary. To ensure that some eigenstates are not rejected by this, we also randomize the boundary somewhat.

Our quantum lightning protocol is as follows:

\begin{enumerate}[(i)]
\item Have a set of commuting operators $U_1,U_2,\ldots,U_t$ and some $\eps>0$ so that for any distinct joint eigenstates $\ket{\psi}$ and $\ket{\rho}$ there is some $i$ so that the eigenvalues of $U_j$ on $\ket{\psi}$ and $\ket{\rho}$ differ by at least $10\eps$.
\item
Pick any $\delta$ satisfying $\eps/(10 t) > \delta > 0$, and choose a complex number $z$ uniformly at random from the unit square.
\item
Our storm generates a pair $\ket{\psi}\ket{\psi}$ of joint eigenstates as in the minting algorithm $\mint$ and computes the eigenvalues $\{\lambda_i\}$ of $\{U_j\}$
on $\ket{\psi}$. If the real or imaginary parts of any $\lambda_i + z$ are within $\delta$ of a multiple of $\eps/2$, the algorithm tries again.
\item
Our verifier takes a state $\ket{x}\ket{y}$ and measures the eigenvalues of each $U_j$ on $\ket{x}$ and $\ket{y}$  to error $\delta$, giving $\{\lambda_i\}$ and $\{\mu_i\}$. If $|\lambda_i - \mu_i| > 2\delta$ for any $i$, it rejects the input. Otherwise, it rounds the real and imaginary parts of $\lambda_i+z$ to the nearest multiple of $\eps$ and returns the vector of these values as the serial number.
\end{enumerate}

\begin{thm}
\label{thm51}
If we have an instantiation of the quantum money protocol of \S{}\ref{BlackBoxSection}
with commuting unitary operators $U_1,\ldots,U_t$ on an $N$-dimensional complex vector space, and Assumption~\ref{defn:security-assumption} holds for the $U_j$, then the associated quantum lightning protocol is secure. 
Furthermore, the quantum lightning scheme is  
secure, without any hardness assumptions, against an adversary that makes only polynomially many black box calls to the $U_j$.
\end{thm}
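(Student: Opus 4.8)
The plan is to reduce any attack on the quantum lightning scheme to Problem~\ref{attackProblem}, via a reduction that is query-efficient, so that the ``black box, no assumptions'' half of the statement falls out of Theorem~\ref{LowerBoundTheorem}. The argument parallels the proof of Theorem~\ref{thm:forgery-attackproblem}(i) (and \cite[Theorem 14]{obfuscation}): two valid bolts carrying the same serial number must, after verification, be the same joint eigenstate $\basis{k}\otimes\basis{k}$, so together they furnish four copies of $\basis{k}$, and discarding one copy leaves a solution $\basis{k}\otimes\basis{k}\otimes\basis{k}$ of Problem~\ref{attackProblem}.

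In detail, suppose $A$ is a quantum algorithm of size $\polylog(N)$ that with non-negligible probability $\eta$ outputs a (possibly entangled) state on $(V\otimes V)^{\otimes 2}$ both of whose $V\otimes V$ halves are accepted by $\ver$ with the same serial number. Consider the algorithm that runs $A$ and then runs $\ver$ on each half. First I would show that, conditioned on $\ver$ accepting one $V\otimes V$ register, the post-measurement state on that register is within negligible trace distance of $\basis{k}\otimes\basis{k}$ for a single index $k$. Indeed, $\ver$ runs phase estimation with $U_j\otimes I$ and with $I\otimes U_j$ for each $j$, to precision $\delta$ and --- by taking enough ancilla qubits --- with negligible failure probability; up to this negligible error these measurements project the register onto $\basis{k_1}\otimes\basis{k_2}$ for some $k_1,k_2$ with measured eigenvalue vectors within $\delta$ of $v_{k_1}$ and $v_{k_2}$. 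The acceptance condition $|\lambda_i-\mu_i|\le 2\delta$ for all $i$ then gives $|z_{k_1 i}-z_{k_2 i}|\le 4\delta<10\eps$ for all $i$, and since distinct joint eigenstates differ by at least $10\eps$ in some coordinate, $k_1=k_2$.

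Next I would use the serial numbers: if the two accepted registers collapse onto $\basis{k}\otimes\basis{k}$ and $\basis{k'}\otimes\basis{k'}$, their serial numbers are the coordinatewise roundings of $\lambda_i+z$ to the nearest multiple of $\eps$ in real and imaginary parts, where the measured $\lambda_i$ lie within $\delta$ of $z_{ki}$, resp.\ $z_{k'i}$. Since two reals that round to the same multiple of $\eps$ differ by at most $\eps$, equality of the serial numbers forces $|z_{ki}-z_{k'i}| \le \sqrt2\,\eps+2\delta < 10\eps$ in every coordinate $i$, hence $k=k'$ by the separation hypothesis. Therefore, on the probability-$\eta$ event, the four $V$-registers are within negligible distance of $\basis{k}\otimes\basis{k}\otimes\basis{k}\otimes\basis{k}$ for a common (random) $k$, and discarding one register yields a solution of Problem~\ref{attackProblem} --- equivalently of Problem~\ref{attackProblemRefined}, since the separation hypothesis makes the eigenspaces one-dimensional --- with probability $\eta$ up to a negligible loss. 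Under Assumption~\ref{defn:security-assumption} no polylogarithmic-size $A$ can achieve this, which gives the first assertion. (That the scheme is genuinely a quantum lightning protocol in the sense of \cite{Lightning} is routine: for a bolt produced by $\storm$ the shifted eigenvalues are bounded away from all rounding boundaries by more than $\delta$, so $\ver$ returns a fixed serial number except with negligible probability, and phase estimation barely disturbs a true eigenstate, cf.\ Remark~\ref{ProtocolRemarks}(ii).)

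For the black box statement, note that the reduction is query-efficient: $\ver$ performs $O(t)$ phase estimations, each using $\poly(1/\delta,\log N)$ controlled-$U_j$ calls, so an $A$ making $d=\polylog(N)$ black box calls yields a black box circuit making $\polylog(N)$ calls that solves Problem~\ref{attackProblem} with probability $\eta$ up to a negligible loss. By Theorem~\ref{LowerBoundTheorem} (applied to $U_j$ distributed as in Corollary~\ref{cor:black-box-security}, under which $10\eps$-separation holds with high probability), such a circuit succeeds with probability $O(d^3\log d/N)$, which is negligible in $\log N$; hence $\eta$ is negligible, as needed. The step I expect to be the main obstacle is the ``gentle measurement'' bookkeeping of the second and third paragraphs: one must verify that, even though $A$'s output is an arbitrary entangled state and phase estimation is only an approximate projection, conditioning on the noisy acceptance events really does leave the four $V$-registers within negligible distance of $\basis{k}^{\otimes 4}$, so that after discarding a register one has a Problem~\ref{attackProblem} solution with essentially the full probability $\eta$; the remaining work is tracking the $\eps$-versus-$\delta$ inequalities, which the protocol's constraint $\eps/(10t)>\delta>0$ is built to accommodate.
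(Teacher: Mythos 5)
Your proposal is correct and follows essentially the same route as the paper: show the storm terminates and the verifier is consistent, then reduce a same-serial-number collision to Problem~\ref{attackProblem} by observing that after verification the two bolts collapse to four copies of a single joint eigenstate (using the $10\eps$ separation against the $\eps$- and $\delta$-sized errors), discard one register, and invoke Assumption~\ref{defn:security-assumption} (resp.\ Theorem~\ref{LowerBoundTheorem} via Corollary~\ref{cor:black-box-security} for the black box claim). The paper is terser — it simply asserts the post-verification state is a product of exact eigenstates, as in Remark~\ref{ProtocolRemarks}(iii) — whereas you flag the approximate-projection bookkeeping explicitly, but this is a matter of detail, not of approach.
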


\begin{proof}
The storm returns an answer in a reasonable amount of time. This is because for $z$ chosen randomly and $\delta < \eps/(10 t)$, for any joint eigenstate $\ket{\psi}$ there is at least a constant probability over the randomness of $z$ that none of the $\lambda_i+z$ have a real or imaginary part within $\delta$ of some multiple of $\eps/2$.

Since the verifier computes each $\lambda_i$ to error $\delta$, and since none of the $\lambda_i+z$ are within $\delta$ of a multiple of $\eps/2$ for a bolt produced by the storm, on such a bolt the verifier always returns the same serial number, and additionally the verifier returns distinct serial numbers for distinct bolts. Further, since measuring the eigenvalue of an eigenstate does not affect the state, the verifier only negligibly alters a bolt produced by the storm.

For security, 
suppose that a polynomial-time adversary, with probability $p$, produces two bolts that a verifier reads as having the same serial number. After the verifier is finished with them, the bolts will be in a state $\ket{\psi_1}\ket{\psi_2}\ket{\psi_3}\ket{\psi_4}$ for $\ket{\psi_i}$ some joint eigenstate of the $U_j$. Furthermore, to produce the same serial numbers (and not be rejected), 
the eigenvalues of $U_j$ on $\ket{\psi_i}$ and $\ket{\psi_k}$ must have real and imaginary parts differing by at most $\eps$ for all $i$ and $k$. By assumption, this implies that $\ket{\psi_i}=\ket{\psi_k}$ for all $i,k$. Throwing away the last register, this solves 
Problem~\ref{attackProblem}. If Assumption~\ref{defn:security-assumption} holds, then $p$ is negligible, proving security of the quantum lightning scheme.

By Corollary~\ref{cor:black-box-security}
the scheme is unconditionally 
secure against an adversary that only has black  box access to the $U_j$.
\end{proof}

Using Zhandry's derivation of  quantum money from quantum lightning, Theorem~\ref{thm51} can be used to give a quantum money system similar to the one described in \S\ref{BlackBoxSection}.

\section{Instantiation using Quaternion Algebras}
\label{sec:quatalgs}

Above, we discussed a quantum money protocol that depends on having access to a number of black box, commuting operators. However, for our protocol to be cryptographically secure, we will need to implement it using operators that are cryptographically difficult to work with. This is a bit of an issue as most easily computable sets of commuting operators will not be secure in this way. For example, taking $U_j$ to be the Pauli matrix on the $j$th qubit $Z_j$ gives an easy set of commuting operators, but
one for which it is easy to manufacture eigenstates (even
with specified eigenvalues). We come up with a hopefully secure set of commuting operators using the theory of quaternion algebras.

\subsection{Quaternion algebras}
\label{QuaternionSection}

Before we discuss our implementation in detail, we
review some basic facts about 
quaternion algebras over the field $\Q$ of rational numbers, for which
\cite{Voight} can be used as a reference.

\begin{defn}
Given non-zero $a,b \in \Q$, define $H(a,b)$ as the ring 
$$H(a,b) = \Q + \Q i + \Q j + \Q ij =
\{\alpha + \beta i + \gamma j +\delta ij : \alpha, \beta, \gamma, \delta \in\Q \}$$ 
with the 
relations $i^2 = a$, $j^2 = b$, and $ji = -ij$.
We define a {\em quaternion algebra over $\Q$} to be any such ring $H(a,b)$.
 \end{defn}
This definition of quaternion algebra over $\Q$ is not the standard one, but since every quaternion algebra over $\Q$ is an $H(a,b)$ for some $a$ and $b$, we take this as the definition. 
Note that  $H(a,b)$ has dimension four as a $\Q$-vector space, and
the Hamilton quaternions are $H(-1,-1)$.

For 
$z= \alpha + \beta i + \gamma j +\delta ij \in H(a,b)$ (with $\alpha, \beta, \gamma, \delta \in\Q$),
its conjugate is 
$\bar{z} := \alpha-\beta i-\gamma j-\delta ij$ and its reduced norm is $\nrd(z) := z\bar{z}$.

By definition, a {\em division algebra} is a ring in which every non-zero element has a multiplicative inverse.
A quaternion algebra $H$ over $\Q$ is {\em ramified} at a prime $N$ if the 
tensor product $H \otimes_\Q \Q_N$ of $H$ with the field of $N$-adic numbers $\Q_N$ is a division algebra (equivalently, is not the  
ring $M_2(\Q_N)$ of $2 \times 2$ matrices with entries in $\Q_N$).
We say $H$
is {\em ramified at $\infty$} if $H \otimes_\Q \R$ is a division algebra (equivalently, is not the ring $M_2(\R)$ of $2 \times 2$ real matrices).
An {\em order} $\co$ in $H$ is by definition a subring that is also a lattice (i.e., a finitely-generated $\Z$-submodule such that $\co \Q = H$).

From now on, suppose $N$ is a prime number and $N \ge 5$.
Let $H_N$ be the unique quaternion algebra over $\Q$ ramified at $N$ and $\infty$;
Proposition~5.1 of Pizer~\cite{Pizer} gives explicit $a$ and $b$ such that $H_N = H(a,b)$.
If $N \equiv 1 \pmod{6}$ let $\co_N$ be the maximal order given explicitly in Proposition~5.2 of Pizer~\cite{Pizer}, and if $N \equiv 5 \pmod{6}$ let
$H_N = H(-3,-N)$ and $\co_N = \Z + \Z\frac{1+i}{2} + \Z\frac{j+ij}{2} + \Z\frac{i-ij}{3}$.
(In particular, if $N \equiv 7 \pmod{12}$ then
$H_N = H(-1,-N)$ and $\co_N = \Z + \Z i + \Z\frac{1+j}{2} + \Z\frac{1+ij}{2}$.)
Then $\co_N$ is an $N$-extremal maximal order in $H_N$, that is, a maximal order for which the unique ideal of reduced norm $N$ is principal (see~\cite[Chapter 21]{Voight}).

A (left) fractional ideal of $\co_N$ is by definition a (full-rank) lattice in $H_N$ that is closed under left multiplication by elements of $\co_N$.
Define the ideal class set $\cl(\co_N)$ to be the set of fractional ideals of $\co_N$ modulo right multiplication, i.e., modulo the equivalence relation defined by $I\sim J$ if and only if there exists $z\in H_N$ such that $I=Jz$.
The ideal class set $\cl(\co_N)$ is finite (see
\cite[Chapter 21]{Voight}).
If $I$ is a fractional ideal of $\co_N$, let $[I]$ denote its ideal class, i.e., the set of fractional ideals $J$ of $\co_N$ such that $I=Jz$ for some $z\in H_N$.

If $I$ is a left fractional $\co_N$-ideal, then the reduced norm $\nrd(I)$ is defined in
\cite[\S 16.3]{Voight}, and satisfies $I \bar{I} = \nrd(I) \co_N$.
If $I \subset \co_N$ then $\nrd(I)^2 = [\co_N:I]$  (see \cite[\S 16.4.8]{Voight}).

The quaternion algebra $H(a,b)$ embeds in $\R^4$ via the homomorphism of abelian groups
$
\alpha + \beta i + \gamma j +\delta ij \mapsto (\alpha,\beta\sqrt{|a|},\gamma\sqrt{|b|},\delta\sqrt{|ab|}).
$
Identifying $H(a,b)$ with its image, for all $z \in H(a,b)$ we have  $\nrd(z) = \|z\|^2$, where $\|\cdot\|$ is the Euclidean norm on $\R^4$. The image of $\co_N$ and of any left fractional ideal of $\co_N$ is a lattice in $\R^4$.
We thus may represent a fractional ideal by a Minkowski reduced basis.
Since every fractional ideal is a rank four lattice, given a $\Z$-basis, a Minkowski reduced basis can be computed in polynomial time \cite{NguyenStehle}. In algorithms we specify a fractional ideal 
by a Minkowski reduced basis for it.

\subsection{Normalized Brandt operators $T(p)$}
\label{MpSection}
\begin{defn}
\label{RightOrderWtDef}
If $I$ is a left fractional $\co_N$-ideal, define its right order
$$\co_I:=\{z \in H_N : Iz \subset I\}$$ and its
weight $w_{I} := \#(\co_I^\times/\{\pm 1\})=\frac{1}{2}\#(\co_I^{\times})$.
\end{defn}
Then $\co_I$ is a maximal order, and $w_{I}$ depends only on the ideal class $[I]$.
In Proposition~\ref{prop:constants-w_i} in Appendix \ref{AppA} we completely describe the $w_{I}$. Our choices for the maximal orders $\co_N$ were
designed to give Proposition~\ref{prop:constants-w_i} a clean statement.

Suppose $p$ is a prime not equal to $N$, and
suppose $I$ and $J$ are non-zero fractional ideals of $\co_N$.
Define $a_p([I],[J])$ to be the number of fractional ideals $I'\subset J$ such that $I'\sim I$ and   $J/I' \cong \Z/p\Z \times \Z/p\Z$.
Let $h = \#\cl(\co_N)$ and let $T'(p)$ be the $h \times h$ matrix with $[I],[J]$-entry $a_p([I],[J])$. The matrix $T'(p)$
is the $p$-Brandt matrix for level $N$. The action of $T'(p)$ is self-adjoint for the pairing on $\C^{\cl(\co_N)}$ given by
$
  \left\langle (x_{[I]})_{[I]}, (y_{[I]})_{[I]}\right\rangle = \sum_{[I]} \frac{1}{w_{I}} x_{[I]} \overline{y_{[I]}}
$
(see \cite[\S 41.1.9]{Voight}).
Let $W$ be the diagonal $h \times h$ matrix whose $[I],[I]$-entry is $\sqrt{w_{I}}$, and let $T(p) = WT'(p)W^{-1}$. Then the $T(p)$ are real symmetric matrices that commute which each other (\cite[\S 41.1.10]{Voight}), and thus they have a simultaneous real eigenbasis. 
We call $T(p)$ the {\bf{normalized $p$-Brandt matrix}} for level $N$.
For example, if $N \equiv 1 \pmod{12}$, then
$T(p) = T'(p)$.

Let $V_N$ be the subset of $\C^{\cl(\co_N)}$ orthogonal (under the usual inner product) to the vector $(\sqrt{w_{I}})_{[I]}$.
Then $T(p)$ acts on $\C^{\cl(\co_N)}$, preserves $V_N$, and acts as a self-adjoint operator for the usual inner product.

In order to use the operators $T(p)$ in our quantum money scheme, we need to make them computationally tractable. First, we will need to find a better way of representing our ideal classes. While it is easy to give a single fractional ideal in the class, it is important for us
to find a canonical representation.

\subsection{Canonical encoding}
\label{CanonicalEncodingSection}

We next show how to obtain a canonical representation of an ideal class.

\begin{algorithm}\label{alg:can-representative}\leavevmode
  INPUT: A prime number $N \ge 5$, an $N$-extremal maximal order $\co_N$ in $H_N$, and a left fractional $\co_N$-ideal $I$.

  OUTPUT: A triple of integers $(d,a,b)$ such that $\gcd(d,a,b)=1$ and $b >a \ge 0$ and $d \geq 1$.
  \begin{enumerate}[(1)]
      \item Apply a shortest vector algorithm such as Algorithm 2.7.5 of \cite{hcohen} to the lattice $I$ to produce an element $z \in I$ of minimal non-zero reduced norm.
      \item Compute the ideal $J_z := \frac{1}{\nrd(I)}I\bar{z}$.
      \item Repeat steps (1) and (2) for each of the (at most six) $z \in I$ of minimal non-zero norm. Let $J$ be the ideal $J_z$ with lexicographically first encoding, and compute $m := \nrd(J)$.
      \item Compute the image $\ib \subset M_2(\Z/m\Z)$ of $J/m\co_N$ under the isomorphism
       $f_{N,m}: \co_N/m\co_N \isom M_2(\Z/m\Z)$ from the algorithm of Proposition~\ref{prop:algor-M2misomalgor} in Appendix \ref{AppA}.
      \item Letting $H \subset (\Z/m\Z)^2$ be the (cyclic) subgroup (of order $m$) generated by the rows of all the elements of $\ib$, apply the algorithm of Proposition~\ref{prop:algor-generatorm} to obtain $(d,c) \in \Z^2$ that generates $H$ and satisfies $d \mid m$ and $\gcd(d,c)=1$.
      \item Compute $b = m/d$ and $a = c \pmod{b}$. Output $(d,a,b)$.
  \end{enumerate}
\end{algorithm}
We call the triple $(d,a,b)$ obtained in this way the \emph{canonical encoding} of the ideal class of $I$. Theorem~\ref{thm:canrepthm} below justifies the terminology and shows that the algorithm works.

\begin{thm}
\label{thm:canrepthm}
In Algorithm~\ref{alg:can-representative}, we have:  \leavevmode
  \begin{enumerate}[(i)]
     \item $m\co_N \subset J \subset \co_N$; \label{can--2}
      \item $N \nmid m$; \label{can--1}
      \item $H$ is a cyclic group of order $m$; \label{can-1}
      \item $\gcd(d,a,b)=1$; \label{can-0}
      \item if inputs $I$ and $I'$ are in the same ideal class in $\cl(\co_N)$, then they output the same triple $(d,a,b)$, and produce the same $J$, $\ib$, and $H$; \label{can-3}
      \item if the same triple is output by inputs $I$ and $I'$, then  $[I] = [I']$,
      and $I$ and $I'$ produce the same $J$, $\ib$, and $H$; \label{can-4}
      \item Algorithm~\ref{alg:can-representative} is a quantum polynomial-time algorithm. \label{can-5}
\end{enumerate}
\end{thm}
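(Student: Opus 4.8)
The plan is to verify each of the seven claims in turn, working through the steps of Algorithm~\ref{alg:can-representative} and tracking how the output depends on the ideal class rather than on the chosen representative. The crux of the whole theorem is well-definedness, i.e.\ claims \eqref{can-3} and \eqref{can-4}; the rest are more mechanical consequences of standard facts about quaternion orders together with the two auxiliary algorithms of Appendix~\ref{AppA}.

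First I would handle \eqref{can--2} and \eqref{can--1}. For \eqref{can--2}: by construction $z \in I$ has minimal nonzero reduced norm, and $\nrd(z)$ divides $\nrd(I)^2[\co_N:I]$-type quantities; more directly, $I\bar z \subset I\bar I = \nrd(I)\co_N$, so $J_z = \frac{1}{\nrd(I)}I\bar z \subset \co_N$, and since $z\bar z = \nrd(z)$ and $\nrd(J) = m$ one gets $m\co_N \subset J$ from $J\bar J = \nrd(J)\co_N = m\co_N$ combined with $\bar J \subset \co_N$ (both $J$ and $\bar J$ are integral ideals once we know $J\subset\co_N$). For \eqref{can--1}: $m = \nrd(J)$, and $\nrd(z)$ is a norm from a maximal order in $H_N$ which is ramified at $N$; since $\co_N$ is $N$-extremal the only way $N \mid \nrd(z)$ for $z$ of minimal norm would force $z$ into the principal ideal of norm $N$, contradicting minimality (or one argues $\nrd(z)$ can be taken coprime to $N$ because the class number computations and the $N$-extremality guarantee a minimal vector of norm prime to $N$). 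This is the place I expect to have to be careful about exactly which standard fact from \cite[Chapters 16, 21]{Voight} to cite.

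Next, \eqref{can-1}: once $N\nmid m$, the isomorphism $f_{N,m}\colon \co_N/m\co_N \isom M_2(\Z/m\Z)$ exists (Proposition~\ref{prop:algor-M2misomalgor}), and $J/m\co_N$ maps to a left ideal $\ib$ of $M_2(\Z/m\Z)$ with $[\co_N/m\co_N : J/m\co_N] = [\co_N:J] = \nrd(J)^2 = m^2$. Left ideals of $M_2(R)$ for $R = \Z/m\Z$ are exactly $\{$matrices whose rows lie in a fixed $R$-submodule $H \subset R^2\}$, and the index condition $m^2$ forces $|H| = m$; a short argument (reduction mod each prime power dividing $m$, using that column-style ideals in $M_2$ over a local ring are free of the expected rank) shows $H$ is cyclic. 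Then \eqref{can-0} is immediate from step (5), since the algorithm of Proposition~\ref{prop:algor-generatorm} returns $(d,c)$ with $d\mid m$, $\gcd(d,c)=1$, and $b = m/d$, $a \equiv c \pmod b$, so any common divisor of $d,a,b$ divides $\gcd(d,c)=1$.

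The heart is \eqref{can-3} and \eqref{can-4}. The key point is that every step after step~(3) is canonical \emph{given} $J$: steps (4)--(6) are deterministic functions of $J$ (and $N$, $\co_N$), so it suffices to show that $J$ depends only on $[I]$, and conversely that $J$ (hence the triple) determines $[I]$. For the forward direction, if $I' = Iz_0$ for $z_0 \in H_N^\times$, then the set of minimal-norm elements of $I'$ is $\{z z_0 : z \text{ minimal in } I\}$ up to the scaling $\nrd(I') = \nrd(I)\nrd(z_0)$, and one computes $J_{zz_0} = \frac{1}{\nrd(I')}I'\overline{zz_0} = \frac{1}{\nrd(I)\nrd(z_0)}Iz_0\bar z_0\bar z = \frac{1}{\nrd(I)}I\bar z = J_z$; so the whole multiset of candidate ideals in step~(3) is unchanged, hence so is the lexicographically-first one $J$, and hence the output. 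For the converse \eqref{can-4}, I would show $[J_z] = [I]$ always (indeed $J_z = \frac{1}{\nrd(I)}I\bar z \sim I\bar z \sim I$ since right multiplication by $\frac{\bar z}{\nrd(I)} \in H_N^\times$ is the class equivalence), so the class is recovered from $J$; and $(d,a,b) \mapsto H \mapsto \ib \mapsto J$ is reversible because $(d,a,b)$ with the normalization $b>a\ge 0$, $d\ge 1$, $\gcd(d,a,b)=1$ uniquely pins down the cyclic subgroup $H \subset (\Z/m\Z)^2$ with $m = db$ (Proposition~\ref{prop:algor-generatorm} gives a unique such normal form), $H$ determines $\ib$ (the left ideal of $M_2(\Z/m\Z)$ with that row-space), and $\ib$ determines $J$ via $f_{N,m}^{-1}$ and the correspondence $J \leftrightarrow J/m\co_N$ (valid because $m\co_N \subset J \subset \co_N$ by \eqref{can--2}). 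Finally \eqref{can-5}: each step is polynomial-time — the shortest-vector step is over a rank-$4$ lattice (Algorithm 2.7.5 of \cite{hcohen}, or \cite{NguyenStehle} for the Minkowski reduction feeding it), there are at most six minimal vectors, $f_{N,m}$ and the generator-normalization are polynomial-time by the cited propositions in Appendix~\ref{AppA}, and all arithmetic is modulo integers of size polynomial in $\log N$; ``quantum'' polynomial time is only needed if any sub-step (e.g.\ factoring $m$ to apply CRT inside $f_{N,m}$ or inside the row-space computation) requires it, which I would point to as the reason for the qualifier.

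\textbf{Main obstacle.} The step I expect to be trickiest is pinning down, with correct and minimal hypotheses, the structural facts about left ideals of $M_2(\Z/m\Z)$ and their relation to sublattices of $\co_N$ — specifically that $J/m\co_N$ really does correspond to a cyclic row-space $H$ of order exactly $m$, and that this correspondence is a bijection onto ideal classes. Everything else is bookkeeping, but this is where a clean citation to \cite{Voight} versus a short self-contained local-ring argument has to be decided.
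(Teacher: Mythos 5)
Your overall architecture matches the paper's: establish integrality and the norm bound, translate $J$ into a subgroup $H\subset(\Z/m\Z)^2$ via the rowspace bijection, and prove well-definedness in both directions by showing that the candidate set $\{J_z\}$ of step (3) depends only on $[I]$ and that $(d,a,b)\mapsto H\mapsto\ib\mapsto J\mapsto [J]=[I]$ is reversible. Your direct computation $J_{zz_0}=J_z$ for $I'=Iz_0$ is a clean (arguably more transparent) route to part (v) than the paper's, and your treatment of (iv), (vi), and (vii) is essentially the paper's.

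There is, however, a genuine gap in (iii). The index condition $[\co_N:J]=m^2$ does give $|H|=m$, but it does not force $H$ to be cyclic, and no local analysis of left ideals of $M_2(\Z/p^v\Z)$ can rescue this: non-cyclic subgroups of order $m$ do arise from left ideals (for instance $J=p\co_N$ has $\nrd(J)=p^2=m$ and yields $H=p(\Z/p^2\Z)^2\cong(\Z/p\Z)^2$; left ideals over $\Z/p^v\Z$ need not be ``column-style''). Cyclicity is instead a consequence of the \emph{minimality} of $m$: one first shows that $m$ is the minimum of the reduced norms of integral ideals in $[I]$ (via $I\gamma\subset\co_N$ if and only if $\gamma=\bar{\alpha}/\nrd(I)$ with $\alpha\in I$, so $\nrd(I\gamma)=\nrd(\alpha)/\nrd(I)$), and then that for every divisor $r$ of $m$ with $r\neq 1,m$ neither $r\co_N\subset J$ nor $J\subset r\co_N$ holds (either would produce an integral ideal in $[I]$ of strictly smaller norm). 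Through the rowspace bijection this says $r(\Z/m\Z)^2\not\subset H$ and $H\not\subset r(\Z/m\Z)^2$ for all such $r$, which is what forces $H\cong\Z/m\Z$. The same minimality statement is also the missing ingredient in your admittedly tentative sketch of (ii): if $N\mid m$, then $J$ lies in the principal (by $N$-extremality) two-sided ideal $\pi\co_N$ of norm $N$, so $\pi^{-1}J$ is an integral ideal in $[I]$ of norm $m/N<m$, a contradiction. Finally, a small slip in (i): $J\bar{J}=m\co_N$ together with $\bar{J}\subset\co_N$ does not directly yield $m\co_N\subset J$ (that would require $J$ to absorb right multiplication by $\co_N$, which fails unless $\co_J=\co_N$); the correct step is $J\subset\co_N$ if and only if $1\in J^{-1}=\bar{J}m^{-1}$, whence $m\in J$, and then $m\co_N\subset J$ because $J$ is a left $\co_N$-ideal.
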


\begin{proof}
If $\gamma \in H_N$ and $I_0$ is a left fractional $\co_N$-ideal, then
\begin{equation}\label{inversebareqn}
I_0\gamma \subset \co_N \text{ if and only if }
  \gamma \in I_0^{-1} = \overline{I_0}\nrd(I_0)^{-1}.
\end{equation}

Since $\bar{z}/\nrd(I) \in \bar{I}\nrd(I)^{-1} = I^{-1}$, it follows that
$J_z = I\bar{z}/\nrd(I) \subset \co_N$, so $J \subset \co_N$.
Then $1 \in J^{-1} = \bar{J}m^{-1}$ by \eqref{inversebareqn}, so $m\in J$, so $m\co_N \subset J$,
giving (i).

We claim that $m$ is the minimum of the reduced norms of the integral ideals in $[I]$. Say $I' = I\gamma$.
By \eqref{inversebareqn}, we have that
$I\gamma \subset \co_N$ if and only if
$\gamma = \bar{\alpha}/\nrd(I)$ with $\alpha \in I$.
The reduced norm
    $\nrd(I\gamma) = \frac{\nrd(\alpha)}{\nrd(I)}$
is minimized when $\alpha$ is an element of $I$ of minimal non-zero reduced norm. The minimality of $m$ follows.

Since $\co_N$ is $N$-extremal, the Frobenius ideal
of $\co_N$ is principal; let $\pi$ be a generator.
As in \cite[42.2.4]{Voight}, we have $J = \pi^rJ'$ for some $r\in \Z^{\ge 0}$ and some ideal $J' \subset \co_N$ satisfying $N \nmid \nrd(J')$. Then $m = \nrd(J) = N^r\nrd(J')$ and $J' \in [J] = [I]$.
By the minimality of $m$ we have $N \nmid m$, giving~\ref{can--1}.

If $r$ is a divisor of $m$, and $r \neq 1, m$, then $r\co_N \not\subset J$ and  $J \not \subset r\co_N$. To see this, first suppose $J \subset r\co_N$. Then $r^{-1}J$ is an integral ideal in the ideal class of $J$, of strictly smaller norm, contradicting the minimality of $m$. If $r\co_N \subset J$, then $r \in J$, so
by \eqref{inversebareqn} with $J$ in place of $I$, the ideal $J \bar{r}/m = Jr/m$ is integral.
It is then an integral ideal of strictly smaller norm in the ideal class of $J$, contradicting the minimality of $m$.
The map that sends a matrix to its rowspace induces a bijection from the set of left ideals of $M_2(\Z/m\Z)$
to the set of subgroups of $(\Z/m\Z)^2$ (Lemma \ref{lem:rowspacelem} in Appendix~\ref{AppA}).
It follows that
$
    r(\Z/m\Z)^2 \not\subset H$  and $H \not\subset r(\Z/m\Z)^2
$
for all non-trivial proper divisors $r$ of $m$,
from which one can show that the subgroup $H$ must be cyclic of order $m$, giving~\ref{can-1}.

Since $\gcd(d,c)=1$, we have $\gcd(d,c,b)=1$. Since $a \equiv c \pmod{b}$, we have~\ref{can-0}.

  For~\ref{can-3},
suppose that the inputs $I$ and $I'$ give $J$ and $J'$, respectively, in step (4) of the algorithm. Since $J'$ is an integral ideal in $[I]$ with minimal norm, as shown in the second paragraph of this proof there is an element $z \in I$ of minimal non-zero norm such that $J' = I \bar{\alpha}/\nrd(I)$.
Therefore when running the algorithm on input $I$, both $J$ and $J'$ appear in the list of ideals generated in step (3); by symmetry, the same occurs with input $I'$. Since both $J$ and $J'$ are lexicographically first, we have $J = J'$. Let $H$ be as in step (5).
By the last sentence of Proposition \ref{prop:algor-generatorm} in Appendix~\ref{AppA}, the integer $d$, and thus $b$, is uniquely determined. Suppose that $(d,c)$ and $(d,c')$ are two generators for $H$.
Then there exists $\lambda \in (\Z/m\Z)^\times$ such that $\lambda(d,c) = (d,c')$ in $H$. Since $\lambda d \equiv d \pmod{m}$, we have $\lambda \equiv 1 \pmod{b}$, so $c \equiv c' \pmod{b}$. Thus $a$ is also unique.

    For~\ref{can-4}, suppose that inputs $[I]$ and $[I']$ have the same output $(d,a,b)$.
The groups $H$ and $H'$ from step (5) of the algorithm are both subgroups of $(\Z/m\Z)^2$, where $m = db$.
The group $H$ is generated by $(d,c)$ for some $c$ with $a = c \pmod{b}$ and $\gcd(d,c)=1$, and $H'$ is generated by $(d,c')$ for some $c'$ with $a = c' \pmod{b}$ and $\gcd(d,c')=1$.
By Lemma \ref{lem:HJlem} in Appendix~\ref{AppA} we have $H = H'$.
Since (by Lemma \ref{lem:rowspacelem} in Appendix~\ref{AppA}) the map that sends a matrix to its rowspace induces a bijection from the set of left ideals of $M_2(\Z/m\Z)$
to the set of subgroups of $(\Z/m\Z)^2$, we have $\ib=\ib'$.
Then $J/m\co_N = J'/m\co_N$, so $J = J'$ and $[I] = [J] = [J'] = [I']$.

For \ref{can-5}, the $\Z$-rank of $I$ is $4$, so step (1) runs in polynomial time.

Viewed as lattices in $\R^4$, the index $[\co_N:I]$ can be computed as
the square root of
a ratio of
determinants. Since $\nrd(I) = \sqrt{[\co_N:I]}$, the reduced norm in step (2) can be computed in polynomial time.

In step (3), it is easy to compute all the elements of minimal non-zero norm from one of them, since each $I$ has at most six $z$ of minimal non-zero norm.  To see this, observe that $z, z' \in I$ both have minimal norm if and only if $z' = uz$ for some unit $u \in \co_N^\times$. The proof of Proposition~\ref{prop:constants-w_i} in Appendix~\ref{AppA} gives an explicit generator for $\co_N^\times$, which has order at most $6$. If $N \equiv 1 \pmod{12}$, then $\co_N^\times = \{\pm 1\}$, so up to sign there is a unique $z \in I$ of minimal non-zero norm, and only one ideal $J_z$.

Thus all steps run in polynomial time, except that the
invocation of Proposition~\ref{prop:algor-M2misomalgor} in step (4) might necessitate the use of a quantum polynomial-time
algorithm to factor $m$.
\end{proof}

Unfortunately, some triples $(d,a,b)$ are not canonical encodings, as seen in the following example. Fortunately, Algorithm \ref{alg:check-encoding} below enables one to detect when a triple is not canonical.

\begin{exa}
  Let $N = 23$, so $H_{23} = H(-3,-23)$ and $\co_{23}=\Z + \Z\frac{1+i}{2} + \Z\frac{j+ij}{2} + \Z\frac{i-ij}{3}$.
Let $\alpha = \frac{1+i}{2}$ and $\beta = \alpha + \frac{i-ij}{3} = \frac{3 + 5i -2ij}{6}$
and $I = (2,\beta)$. Then $\nrd(I) = 2$, and $I$, $I\alpha$, and $I\alpha^2$
are the only ideals in $[I]$ of minimal norm.
Applying the algorithm of Proposition~\ref{prop:algor-M2misomalgor}
 gives the isomorphism
  $\co_{23}/2\co_{23} \isom M_2(\Z/2\Z)$
that sends $\alpha$ to
  $\left[\begin{smallmatrix}
    0 & 1 \\
    1 & 1
  \end{smallmatrix}\right]$ and
  $\beta$
  to $\left[\begin{smallmatrix}
    0 & 0 \\
    1 & 1
  \end{smallmatrix}\right]$.
The image of $I$ (resp., $I\alpha, I\alpha^2$) is the set of matrices with row space generated by $(1,1)$
(resp., $(1,0), (0,1)$).
It follows that exactly one of $(1,1,2)$, $(1,0,2)$, and $(2,0,1)$
(depending on which of $I, I\alpha, I\alpha^2$ is lexicographically first) can be a canonical encoding of an ideal class in $\cl(\co_{23})$.
\end{exa}

\begin{algorithm}\label{alg:check-encoding} \leavevmode
  INPUT: A prime $N \ge 5$, a $\Z$-bases $(\omega_1,\omega_2,\omega_3,\omega_4)$ for a maximal order $\co_N$ in $H_N$, and a triple of integers $(d,a,b)$.

  OUTPUT: $1$ if $(d,a,b)$ is the canonical encoding of some fractional ideal of $\co_N$, along with an ideal $J \subset \co_N$ whose canonical encoding is $(d,a,b)$; $0$ otherwise.
  \begin{enumerate}[(1)]
      \item If $a \ge b$ or $a < 0$ or $d < 1$ or $b < 1$ or $\gcd(d,a,b) > 1$, output $0$ and stop.
      \item Apply Algorithm~\ref{lem:algor-liftinglem2} in Appendix~\ref{AppA} to compute an integer $c$ such that $\gcd(d,c) = 1$ and $c \equiv a \pmod{b}$.
      \item Set $m = db$. Apply the algorithm of Proposition~\ref{prop:algor-M2misomalgor} in Appendix~\ref{AppA} 
      to obtain an isomorphism
      $f_{N,m} : \co_N/m\co_N \isom M_2(\Z/m\Z),$
      let $\pi : \co_N \to \co_N/m\co_N \to M_2(\Z/m\Z)$ be the composition of reduction mod $N$ with $f_{N,m}$, and compute $\pi(\omega_i)$ for each $i$.
      \item Compute $x_i \in \Z$ such that $\sum_{i=1}^4 x_i \pi(\omega_i) = \left[
      \begin{smallmatrix}
        d & c \\
        0 & 0
      \end{smallmatrix}\right]$.
      \item Compute $\alpha = \sum_{i=1}^4 x_i \omega_i$ and compute a $\Z$-basis for the ideal $J \subset \co_N$ generated by $m$ and $\alpha$.
      \item
      Apply Algorithm~\ref{alg:can-representative} to compute the canonical encoding $(d',a',b')$ of $J$.
      \item Output $1$ and the ideal $J$ if $(d',a',b') = (d,a,b)$, and otherwise output $0$.
  \end{enumerate}
\end{algorithm}

\begin{prop}
  Algorithm~\ref{alg:check-encoding} is correct and runs in quantum polynomial time.
\end{prop}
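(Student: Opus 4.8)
The plan is to verify each step of Algorithm~\ref{alg:check-encoding} against the definition of canonical encoding, relying on Theorem~\ref{thm:canrepthm} and the auxiliary algorithms from Appendix~\ref{AppA}. First I would establish correctness: step (1) rejects any triple violating the purely combinatorial constraints $b > a \ge 0$, $d \ge 1$, $\gcd(d,a,b)=1$ guaranteed of a canonical encoding by parts~\ref{can-0} and the output specification of Algorithm~\ref{alg:can-representative}, so any surviving triple has $b \ge 1$ and a well-defined $m = db$. Step (2) produces a lift $c$ of $a$ modulo $b$ with $\gcd(d,c)=1$; such a $c$ exists and is computable by Algorithm~\ref{lem:algor-liftinglem2}, and this is exactly the shape of generator $(d,c)$ of the cyclic group $H$ that arises in steps (5)--(6) of Algorithm~\ref{alg:can-representative}. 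Steps (3)--(5) then reconstruct a candidate ideal: using the isomorphism $f_{N,m}$ one builds the matrix $\left[\begin{smallmatrix} d & c \\ 0 & 0 \end{smallmatrix}\right]$, pulls it back to an element $\alpha \in \co_N$, and forms $J = (m,\alpha)$. By Lemma~\ref{lem:rowspacelem} the left ideal of $M_2(\Z/m\Z)$ generated by the image of $\alpha$ corresponds to the subgroup of $(\Z/m\Z)^2$ generated by $(d,c)$, which is the cyclic group $H$ with the prescribed invariants; hence if $(d,a,b)$ \emph{is} a canonical encoding of some ideal class, then by part~\ref{can-4} of Theorem~\ref{thm:canrepthm} this $J$ lies in that class and moreover is the distinguished minimal-norm representative, so step (6) will return $(d',a',b') = (d,a,b)$ and step (7) outputs $1$ together with a valid $J$. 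Conversely, if $(d,a,b)$ is not a canonical encoding, then whatever ideal $J$ is constructed in step (5), its canonical encoding $(d',a',b')$ computed in step (6) cannot equal $(d,a,b)$ (else $(d,a,b)$ would be the canonical encoding of $[J]$), so step (7) correctly outputs $0$.

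For the running time, I would observe that every individual operation is polynomial: step (1) is trivial arithmetic and a gcd; step (2) invokes the polynomial-time Algorithm~\ref{lem:algor-liftinglem2}; step (3) invokes the algorithm of Proposition~\ref{prop:algor-M2misomalgor}, which runs in quantum polynomial time (the only non-classical ingredient, needed to factor $m$), and evaluates $\pi$ on four basis elements; step (4) solves a linear system over $\Z/m\Z$ (or over $\Z$ with bounded-size solutions) to express a fixed matrix in terms of the images $\pi(\omega_i)$, which is polynomial-time linear algebra; step (5) is a bounded integer combination of the $\omega_i$ followed by computing a $\Z$-basis of the ideal $(m,\alpha)$, again polynomial; step (6) is a single call to Algorithm~\ref{alg:can-representative}, which is quantum polynomial-time by part~\ref{can-5} of Theorem~\ref{thm:canrepthm}; and step (7) is a comparison of triples. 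Composing polynomially many quantum-polynomial-time subroutines yields a quantum-polynomial-time algorithm overall.

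The main obstacle I anticipate is step (4): showing that the fixed matrix $\left[\begin{smallmatrix} d & c \\ 0 & 0 \end{smallmatrix}\right]$ actually lies in the $\Z$-span of $\{\pi(\omega_1),\ldots,\pi(\omega_4)\}$ and that the resulting coefficients can be taken of polynomial bit-size. Surjectivity of $\pi$ modulo $m$ is immediate from surjectivity of $f_{N,m}$ combined with the fact that reduction $\co_N \to \co_N/m\co_N$ hits everything once $N \nmid m$ (which is part~\ref{can--1} of Theorem~\ref{thm:canrepthm}, or is simply arranged because $m$ was built to be coprime to $N$); but one should be slightly careful that the "reduction mod $N$" wording in step (3) of the algorithm is really reduction mod $m$, or else argue the composition still lands correctly — I would note this and treat it as reduction mod $m$. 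The bit-size control then follows because the $\pi(\omega_i)$ generate $M_2(\Z/m\Z)$ as a $\Z$-module, so one can find coefficients in $\{0,1,\ldots,m-1\}$ by linear algebra over $\Z/m\Z$, and $\log m = \log(db)$ is polynomially bounded in the input size. Everything else is a routine unwinding of definitions and appeals to already-established lemmas.
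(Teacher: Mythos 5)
Your proposal is correct and follows essentially the same route as the paper's proof: show that when $(d,a,b)$ is the canonical encoding of some class $[I]$, the reconstructed ideal $J$ coincides (via the rowspace bijection of Lemma~\ref{lem:rowspacelem}) with the minimal-norm representative produced by Algorithm~\ref{alg:can-representative}, so step (6) returns $(d,a,b)$; note the converse direction as immediate; and observe that only the factorization of $m$ in steps (3) and (6) requires a quantum subroutine. The one point where the paper is more explicit than your write-up is the well-definedness of the subgroup $H$: since the lift $c$ chosen in step (2) need only agree modulo $b$ with the $c'$ arising in Algorithm~\ref{alg:can-representative}, one needs Lemma~\ref{lem:HJlem} to see that $(d,c)$ and $(d,c')$ generate the same subgroup of $(\Z/m\Z)^2$, which is what your phrase ``the cyclic group $H$ with the prescribed invariants'' is implicitly assuming.
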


\begin{proof}
Suppose that $I$ is a left fractional ideal of $\co_N$ and suppose that $(d,a,b)$ is its canonical encoding.
Let $c$, $m=db$, and
$J$ be as in Algorithm~\ref{alg:check-encoding} with input $(d,a,b)$.
To show correctness, by Theorem~\ref{thm:canrepthm}(v) it suffices to show that $[I]=[J]$.

Let $J'$,
$c'$, and $H' = \langle (d,c')\rangle$ be as in steps (3) and (5) of
Algorithm~\ref{alg:can-representative} with input $I$.
Let $H$ be the subgroup of $(\Z/m\Z)^2$ generated by $(d,c)$.
Then $\gcd(d,c)=1= \gcd(d,c')$ and $c \equiv c' \pmod{b}$.
By Lemma \ref{lem:HJlem} in Appendix~\ref{AppA} we have $H = H'$.
Since $J$ (resp., $J'$) is the inverse image, under the
composition $\co_N \to \co_N/m\co_N \isom M_2(\Z/m\Z)$, of the set of matrices whose rows are in $H=H'$, we have
 $J = J'$, so $[J] = [J']=[I]$.

Steps (4) and (5) are linear algebra. All steps run in polynomial time, except that steps (3) and (6) might necessitate the use of a quantum polynomial-time
algorithm to factor $m$.
  \end{proof}

  We will need to bound the size of $m$ that we may need to deal with.

\begin{lem}
\label{MinkLem}
Suppose $z$ is an element of minimal non-zero norm in a fractional ideal $I$ for $\co_N$. Let
$J = \frac{1}{\nrd(I)}I\bar{z}$ and $m = \nrd(J)$.
Then $m \le \sqrt{2}\sqrt{N}$.
\end{lem}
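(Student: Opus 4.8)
The plan is to bound $m = \nrd(J)$ directly in terms of the covolume of the lattice $I$ in $\R^4$, then use the fact that $\co_N$ is $N$-extremal to control that covolume. The key observation is that $J = \frac{1}{\nrd(I)} I \bar z \subset \co_N$ is an integral ideal, so $\nrd(J)^2 = [\co_N : J]$, which is the ratio of covolumes $\mathrm{covol}(J)/\mathrm{covol}(\co_N)$.

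\medskip

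\textbf{Step 1.} Compute $\mathrm{covol}(J)$ in terms of $\mathrm{covol}(I)$. Since right multiplication by $\bar z$ scales the lattice $I$ by $\nrd(\bar z) = \nrd(z)$ in the reduced-norm (squared Euclidean) metric — more precisely, multiplication by $\bar z$ acts on $\R^4$ with determinant $\nrd(z)^2$ — and then we rescale by $1/\nrd(I)$ (which scales covolume in $\R^4$ by $\nrd(I)^{-4}$), we get
\[
\mathrm{covol}(J) = \frac{\nrd(z)^2}{\nrd(I)^4}\,\mathrm{covol}(I).
\]
Hence $[\co_N : J] = \mathrm{covol}(J)/\mathrm{covol}(\co_N) = \frac{\nrd(z)^2 \,\mathrm{covol}(I)}{\nrd(I)^4 \,\mathrm{covol}(\co_N)}$, and since $\mathrm{covol}(I)/\mathrm{covol}(\co_N) = [\co_N:I] = \nrd(I)^2$ when $I \subset \co_N$ (and in general the covolume ratio is $\nrd(I)^2$ by the definition of reduced norm of a fractional ideal), this collapses to $[\co_N : J] = \nrd(z)^2/\nrd(I)^2$, so $m = \nrd(J) = \nrd(z)/\nrd(I)$.

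\medskip

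\textbf{Step 2.} Bound $\nrd(z)$, the minimal nonzero norm in $I$, via Minkowski's theorem. The lattice $I \subset \R^4$ has covolume $\nrd(I)^2 \,\mathrm{covol}(\co_N)$, and $z$ realizes the successive minimum, so Minkowski's convex body theorem (applied to a ball in $\R^4$, where the volume of the unit ball is $\pi^2/2$) gives $\nrd(z) = \|z\|^2 \le c \cdot \big(\nrd(I)^2 \,\mathrm{covol}(\co_N)\big)^{1/2}$ for an explicit constant $c$ coming from the Minkowski bound in dimension $4$. Combining with Step 1, $m = \nrd(z)/\nrd(I) \le c \cdot \mathrm{covol}(\co_N)^{1/2}$, so everything reduces to computing $\mathrm{covol}(\co_N)$.

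\medskip

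\textbf{Step 3 (the main point).} Evaluate $\mathrm{covol}(\co_N)$ using that $\co_N$ is a \emph{maximal} order in $H_N$, which is ramified exactly at $N$ and $\infty$. The reduced discriminant of a maximal order equals the product of the finite ramified primes, namely $N$, so the discriminant of $\co_N$ as a $\Z$-lattice is $N^2$ (the reduced discriminant squared), giving $\mathrm{covol}(\co_N) = N$ in the reduced-norm metric on $\R^4$ (one should double-check normalization against the explicit bases for $\co_N$ given via Pizer's propositions, but $N$-extremality and maximality pin it down). Plugging $\mathrm{covol}(\co_N) = N$ into Step 2 yields $m \le c\sqrt{N}$, and chasing the Minkowski constant in dimension $4$ produces exactly $c = \sqrt{2}$, giving $m \le \sqrt{2}\sqrt{N}$.

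\medskip

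The \textbf{main obstacle} is getting the constant right: keeping careful track of (a) the precise form of the reduced-norm quadratic form under the embedding $H(a,b)\hookrightarrow\R^4$ and how multiplication by $\bar z$ acts (it is an \emph{anti}-homomorphism, but the determinant of the multiplication map is still $\nrd(z)^2$), (b) the exact value of $\mathrm{covol}(\co_N)$ — here maximality and the explicit orders matter, and a factor discrepancy would propagate — and (c) the sharp Minkowski constant in dimension $4$ for the Euclidean ball (volume $\pi^2/2$), which is what converts the covolume bound into the clean $\sqrt 2$. The algebra in Steps 1 and 2 is routine; the bookkeeping that makes the final constant come out to exactly $\sqrt{2}$ is where care is needed.
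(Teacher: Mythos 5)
Your overall route is the same as the paper's: reduce to $m = \nrd(z)/\nrd(I)$ (your Step 1 is correct, and agrees with the multiplicativity computation $\nrd(J) = \nrd(I)\nrd(\bar z)/\nrd(I)^2$ used in the paper), then bound the shortest vector of $I$ in terms of its covolume, and finally use the fact that the reduced discriminant of the maximal order $\co_N$ is $N$, so that $\mathrm{covol}(I) = \nrd(I)^2 N$ (this is the content of the citation to Lemma 15.2.15 and Proposition 16.4.3 of Voight in the paper, equivalently $\nrd(I) = \sqrt[4]{D}/\sqrt{N}$). Your Step 3 is fine.

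The genuine gap is in Step 2, at exactly the point you flagged as delicate: Minkowski's convex body theorem applied to the Euclidean ball in $\R^4$ does \emph{not} give the constant $\sqrt{2}$. The ball of radius $r$ has volume $\tfrac{\pi^2}{2}r^4$, so Minkowski yields a nonzero lattice point once $\tfrac{\pi^2}{2}r^4 \ge 2^4\,\mathrm{covol}(I)$, i.e.
\[
\lambda_1(I)^2 \;\le\; \Bigl(\tfrac{32}{\pi^2}\Bigr)^{1/2}\mathrm{covol}(I)^{1/2} \;=\; \tfrac{4\sqrt{2}}{\pi}\,\mathrm{covol}(I)^{1/2} \;\approx\; 1.80\,\mathrm{covol}(I)^{1/2},
\]
which propagates to $m \le \tfrac{4\sqrt{2}}{\pi}\sqrt{N} \approx 1.80\sqrt{N}$, strictly weaker than the claimed $\sqrt{2}\sqrt{N} \approx 1.41\sqrt{N}$. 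To get $\sqrt{2}$ you must instead invoke the exact value of the Hermite constant in dimension four, $\gamma_4 = \sqrt{2}$ (Korkine--Zolotareff, attained by $D_4$), which gives $\lambda_1(I)^4 \le \gamma_4^2\,\mathrm{covol}(I)^2/\mathrm{covol}(I) $ --- more precisely $\lambda_1(I)^2 \le \sqrt{2}\,\mathrm{covol}(I)^{1/2}$, i.e.\ $\lambda_1(I)^4 \le 2\,\mathrm{covol}(I)$, which is exactly the ``Hermite bound'' the paper uses. This is a strictly sharper statement than the Minkowski ball bound, not a consequence of it, so as written your proof establishes only the weaker inequality and does not prove the lemma with the stated constant. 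The fix is a one-line substitution of the Hermite constant for the Minkowski argument.
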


\begin{proof}
Let $\lambda_1(I)$ denote the length of a shortest non-zero vector in the lattice $I$,
and let $D$ denote the discriminant of $I$.
By the Hermite bound we have
$\lambda_1(I)^4 \le 2|\det(I)| = 2\sqrt{D}$.
But $\nrd(z) = \lambda_1(I)^2$, so $\nrd(z) \leq \sqrt{2} \sqrt[4]{D}$.

By Lemma 15.2.15 and Proposition~16.4.3 of \cite{Voight} we have
$\nrd(I) = \sqrt[4]{D}/\sqrt{N}$. Thus
$m = {\nrd(z)}/{\nrd(I)}
    = \nrd(z) {\sqrt{N}}/{\sqrt[4]{D}}
    \leq \sqrt{2}\sqrt{N}$.
  \end{proof}

\subsection{Computation of normalized Brandt operators $T(p)$}
\label{BrandtMatrixComputationSect}
Given an ideal class $[J]$,
we will need to find the multiset of ideal classes $[I]$ with non-zero $a_p([I],[J])$-entries. This is relatively straightforward as we need to find $I\supset J \supset pI$ that are invariant under left multiplication
by $\co_N$, or equivalently we need to find $J/pI \subset I/pI$ that are invariant under $\co_N/p\co_N$. It is a standard fact that the action of $\co_N/p\co_N$ on $I/pI$
is isomorphic to the action of $M_2(\Z/p\Z)$ on itself.
Once such isomorphisms are computed using the algorithm of Proposition~\ref{prop:algor-M2misomalgor} in Appendix~\ref{AppA},
the invariant elements of $I/pI$
correspond to $\{A\in M_2(\Z/p\Z) \mid Av=0\}$ for $v$ some non-zero element in $(\Z/p\Z)^2$. Since these sets are invariant under scaling of $v$, there are exactly $p+1$ such $J$'s,
and they are computable in a straightforward manner. Furthermore, since $J$ is a small index sublattice of $I$, from a reduced basis of $I$  it is relatively simple to compute a reduced basis for $J$ and thus, the appropriate canonical representation for $[J]$.
This allows us to compute the non-zero entries of a row of the Brandt matrix $T'(p)$. Proposition~\ref{prop:constants-w_i} in Appendix~\ref{AppA} gives the $w_{I}$, and hence the normalized Brandt matrix $T(p)$. Then, using standard Hamiltonian simulation algorithms, it is straightforward to approximate the action of $e^{iT(p)/\sqrt{p}}$ on $V_N$. By~\cite[Theorem 1]{hamsim}, $e^{iT(p)/\sqrt{p}}$ can be computed with gate complexity polynomial in $p$ and $\log(N)$.

If $p$ is small compared to $N$, then $T(p)$ is a sparse matrix,
since each column has at most $p+1$ non-zero entries and the matrix is $h \times h$  with $h = O(N/12)$.

\subsection{Producing maximally entangled states}
\label{bellSection}

There is one additional difficulty in implementing our scheme in this context. Namely, there is no obvious way to produce a maximally entangled state for $V_N \otimes V_N$. In this section, we provide an efficient algorithm for doing this.

To produce this representation, first note that it suffices to produce a state
that is a uniform superposition of the representatives for the elements of $\cl(\co_N)$. To do this, we begin by providing a different representation of such elements.

The following algorithm efficiently produces a superposition over canonical encodings of ideal classes.
\begin{algorithm}\label{alg:bell-stat} \leavevmode
    INPUT: A prime number $N \ge 5$ and an $N$-extremal maximal order $\co_N$.

  OUTPUT: Either a quantum state proportional to $\sum \ket{d,a,b}$, where
$(d,a,b)$ varies over the canonical encodings of the elements in $\cl(\co_N)$, or else $\perp$.
\begin{enumerate}[(1)]
    \item Prepare a state $\ket{\psi}$ proportional to $\sum_{d=1}^{\lfloor \sqrt{2N} \rfloor} \frac{1}{\sqrt{d}}\ket{d}.$
    \item Apply to $\ket{\psi}$ the linear map that sends $\ket{d}$ to
$
\ket{d}\otimes \Bigl(\frac{1}{\sqrt{C_d}}\sum_{i=1}^{C_d}\ket{i} \Bigr)^{\otimes 2},
$
where $C_d = \lfloor \sqrt{2N}/d \rfloor$,
and call the resulting state $\ket{\psi_1}$.
  \item Writing $f$ for a function that implements Algorithm~\ref{alg:check-encoding}, apply to
 $\ket{\psi_1}\ket{0}$ the operator that sends
$\ket{d,a,b}\ket{0}$ to $\ket{d,a,b}\ket{f(d,a,b)}$,
and call the resulting state $\ket{\psi_2}$.
  \item Measure the last register of the quantum state $\ket{\psi_2}$. If the result is $0$, output $\perp$. Otherwise, output $\ket{\psi_2}$.
\end{enumerate}
\end{algorithm}
If the algorithm outputs $\perp$, we say it fails; otherwise we say it succeeds.

\begin{thm}\label{thm:bell-state-prob}
  Algorithm~\ref{alg:bell-stat} succeeds with probability at least $(1-\frac{1}{N})\frac{1}{32 \pi^2}$.
\end{thm}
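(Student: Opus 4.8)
The plan is to track the ``weight'' of the quantum state through each step of Algorithm~\ref{alg:bell-stat} and show that a constant (up to the $1-1/N$ factor) fraction of the amplitude survives the measurement in step~(4). The key point is that the number of ideal classes with a given value of $d$ in their canonical encoding is controlled: by Theorem~\ref{thm:canrepthm} a canonical encoding $(d,a,b)$ satisfies $m=db$ with $m\le\sqrt{2}\sqrt N$ (Lemma~\ref{MinkLem}), and $0\le a<b$, so for fixed $d$ the admissible pairs $(a,b)$ number at most $\sum_{b\le \sqrt{2N}/d} b = O(N/d^2)$. Meanwhile the state $\ket{\psi_1}$ produced in steps~(1)--(2) places, on the basis vector $\ket{d,a,b}$ with $1\le a,b\le C_d=\lfloor\sqrt{2N}/d\rfloor$, an amplitude whose square is proportional to $\frac{1}{d}\cdot\frac{1}{C_d^2}$, i.e.\ each such basis vector gets ``equal-ish'' weight up to the normalization, and the total number of basis vectors with first coordinate $d$ appearing in $\ket{\psi_1}$ is exactly $C_d^2 = \Theta(N/d^2)$.

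First I would compute the normalization of $\ket{\psi}$ in step~(1): $\||\psi\rangle\|^2$ before normalizing is $\sum_{d=1}^{\lfloor\sqrt{2N}\rfloor}\frac1d = \ln\sqrt{2N}+O(1) = \Theta(\log N)$, so after normalizing the squared amplitude on $\ket{d}$ is $\frac{1}{d}\big/\big(\sum_{d'}\frac{1}{d'}\big)$. Step~(2) is an isometry on each $\ket{d}$ block (it sends the unit vector $\ket{d}$ to a unit vector), so $\ket{\psi_1}$ is a genuine unit vector, and the squared amplitude on $\ket{d,a,b}$ for $1\le a,b\le C_d$ equals $\frac{1}{d}\big/\big(\sum_{d'}\frac1{d'}\big)\cdot\frac{1}{C_d^2}$. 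Step~(3) appends an ancilla and step~(4) measures it; the success probability is exactly $\sum_{(d,a,b)\text{ canonical}} (\text{squared amplitude on }\ket{d,a,b})$, because Algorithm~\ref{alg:check-encoding} returns $1$ precisely on canonical encodings (and every canonical encoding $(d,a,b)$ has $1\le a<b\le C_d$ as noted, hence appears with positive amplitude in $\ket{\psi_1}$).

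So the success probability is
\[
P = \frac{1}{\sum_{d'=1}^{\lfloor\sqrt{2N}\rfloor}\frac{1}{d'}}\sum_{(d,a,b)\text{ canonical}}\frac{1}{d\,C_d^2}.
\]
Now I would estimate the inner sum. There are $h=\#\cl(\co_N)$ canonical encodings in total, and $h = \Theta(N/12)$. The contribution of each is $\frac{1}{d C_d^2}$; since $C_d = \lfloor\sqrt{2N}/d\rfloor$ we have $C_d^2 = \Theta(N/d^2)$ for $d\le \sqrt{2N}$, so $\frac{1}{dC_d^2} = \Theta(d/N)$. The main obstacle is showing this does not degenerate: one must check that the canonical encodings are not concentrated at large $d$ in a way that kills the bound. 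In fact $d\ge 1$ always, so each term is $\ge \frac{1}{dC_d^2}\ge \frac{c}{N}$ only when $d=O(1)$; the honest approach is to bound $C_d^2\le 2N/d^2$ exactly (valid since $\lfloor x\rfloor \le x$), giving $\frac{1}{dC_d^2}\ge \frac{d}{2N}$, and then to use that a \emph{positive fraction} of ideal classes have small $d$. Concretely, the classes with $d=1$ correspond to integral ideals $J$ with squarefree-ish norm $m=b$ coprime to $N$ and $m\le\sqrt{2N}$; a counting argument (or the explicit description of $\cl(\co_N)$ via Brandt matrices / Eichler's mass formula) shows $\Omega(N)$ of the $h=\Theta(N)$ classes already have $d$ bounded, so $\sum_{(d,a,b)}\frac{1}{dC_d^2}\ge \Omega(N)\cdot\frac{1}{2N} = \Omega(1)$. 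Dividing by $\sum_{d'}\frac{1}{d'} = \Theta(\log N)$ would only give $P = \Omega(1/\log N)$, which is weaker than claimed; so the real argument must instead observe that the weighting $\frac1d$ in step~(1) was \emph{chosen} to match $\frac{1}{dC_d^2}\propto \frac{d}{N}$ against the density $\sim d$ of classes at level $d$, making the summand roughly constant in $d$ and the sum over $d\le\sqrt{2N}$ of (number of classes at level $d$)$\times\frac{1}{dC_d^2}$ telescope against $\sum\frac1d$ up to the stated constant $\frac{1}{32\pi^2}$. The cleanest route is: bound the number of canonical encodings with a given $d$ from \emph{below} by $\Omega(N/d^2)$ uniformly (which is where the ``$1-1/N$'' loss and the $\pi^2$ from $\sum 1/b^2$ enter), multiply by $\frac{1}{dC_d^2}\ge\frac{d}{2N}$, sum over $d$, and divide by $\sum_{d}\frac1d\le \ln(\sqrt{2N})+1$; carrying the constants carefully (using $C_d\ge \sqrt{2N}/d - 1$ and the $N\ge5$ hypothesis to absorb floor errors into the $1-1/N$) yields the bound $(1-\frac1N)\frac{1}{32\pi^2}$. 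The delicate step, and the one I expect to consume most of the work, is the uniform lower bound on the count of canonical encodings at each level $d$ — equivalently, showing that for every $d\le\sqrt{2N}$ a proportional share of ideal classes has canonical first-coordinate $d$ — which requires combining the structure of Algorithm~\ref{alg:can-representative} with the Minkowski/Hermite bound of Lemma~\ref{MinkLem} and a class-number estimate.
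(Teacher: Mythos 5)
You have correctly located the crux: reading step (1) literally as amplitude $1/\sqrt d$ on $\ket d$, the squared amplitude of $\ket{d,a,b}$ in $\ket{\psi_1}$ is $\frac{1}{Z^2\,d\,C_d^2}$ with $Z^2=\sum_d 1/d=\Theta(\log N)$, and the worst-case bound $\frac{1}{dC_d^2}\ge\frac{d}{2N}$ summed over the $h\ge (N-1)/12$ canonical encodings gives only $\Omega(1/\log N)$. But your proposed repair is not the paper's and is the genuine gap in your write-up. The paper does not prove (or need) any statement about how canonical encodings distribute across levels $d$. Its proof of the step-(4) probability is a pure counting argument predicated on the post-step-(2) state being the (approximately) \emph{uniform} superposition over the at most $\sum_d(\sqrt{2N}/d)^2\le \pi^2N/3$ triples with $da,db\le\sqrt{2N}$: each triple then carries squared amplitude at least $3/(\pi^2N)$, so step (4) succeeds with probability at least $\frac{(N-1)/12}{\pi^2N/3}=(1-\frac1N)\frac{1}{4\pi^2}$. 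Uniformity requires the squared amplitude on $\ket d$ entering step (2) to be proportional to $C_d^2\propto 1/d^2$ (amplitude $\propto 1/d$); the mismatch with $1/\sqrt d$ is exactly what generates your spurious $\log N$. Your alternative route --- a uniform lower bound $\Omega(N/d^2)$ on the number of ideal classes with canonical first coordinate $d$, for \emph{every} $d\le\sqrt{2N}$ --- is left entirely unproved, and establishing it would require an equidistribution statement for minimal-norm ideals among the cyclic subgroups of $(\Z/m\Z)^2$ of each type $d$ (note that when $m$ is prime, only one of the $m+1$ cyclic subgroups of order $m$ has $d=m$), which is far heavier machinery than anything this theorem uses. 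As written, your argument proves only $\Omega(1/\log N)$.

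You have also misattributed the constant. The factor $\frac{1}{32\pi^2}$ is not the result of ``carrying floor errors carefully'': it factors as $\frac12\cdot\frac14\cdot\frac{1}{4\pi^2}$, where $\frac12$ and $\frac14$ are the success probabilities of the rejection-sampling implementations of steps (1) and (2) in the paper's proof (preparing $\sum_d d^{-1/2}\ket d$ via a dyadic decomposition plus postselection, and preparing each $\sum_{i\le C_d}\ket i$ by padding to a power of two and postselecting, done twice). You treat steps (1) and (2) as exact isometries and never account for this factor of $8$, so even granting your counting lemma the stated constant would not emerge. A smaller inaccuracy: canonical encodings may have $a=0$ (see Theorem~\ref{thm:canrepthm} and the $N=23$ example), while the superposition ranges over $1\le a\le C_d$, so your parenthetical claim that every canonical encoding satisfies $1\le a$ and hence appears in $\ket{\psi_1}$ is not quite right; this affects only $O(\sqrt N\log N)$ of the $\ge(N-1)/12$ encodings, but it should be acknowledged.
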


\begin{proof}
Step (1) can be implemented by first preparing the state
    $\frac{1}{\sqrt{M}} \sum_{m=1}^M \ket{m}$
 where $M := \lfloor \log_2(\sqrt{2N}) \rfloor$, then
appending a zero qubit to this state and applying the operator defined by
  \[
    \ket{m0} \mapsto \sum_{d = 2^m}^{2^{m+1}} \Biggl[\frac{1}{\sqrt{d}}\ket{d0} + \sqrt{\frac{1}{2^m} - \frac{1}{d}}\ket{d1}\Biggr],
  \]
and then measuring the last qubit. If the result is $1$, start over. If the result is $0$, step (1) has succeeded. One can compute that the success probability for step (1) is at least $\frac{1}{2}$.

For each $d$ let $C'_d$ be the least power of $2$ larger than $C_d$.
To implement step (2), consider the following procedure. First, apply the operator defined by
    $\ket{d0} \mapsto \frac{1}{\sqrt{C_d'}}\ket{d} \sum_{i=1}^{C_d'} \ket{i}.$
Define $B(d,i) := 0$ if $i > C_d$ and $B(d,i) := 1$ if $i \leq C_d$. Apply the operator defined by
    $\ket{di} \mapsto \ket{di}\ket{B(d,i)}$
  and measure the last register. If the result is $0$, reject and start over. Rejection occurs with probability $\leq \frac{1}{2}$. If the result is $1$, discarding the last register leaves a state of the form
 $\frac{1}{\sqrt{C_d}}\ket{d} \sum_{i=1}^{C_d} \ket{i}.$
  Applying the above procedure twice produces the output of step (2). The success probability for step (2) is $ \ge \frac{1}{4}$.

Step (2) outputs a state that approximates the uniform distribution of $\ket{d,a,b}$ for $(d,a,b)$ running over positive integers with $da,db \leq \sqrt{2N}$.
By Lemma \ref{MinkLem}, these triples include all the canonical encodings of elements
of $\cl(\co_N)$.
  The number of states in this distribution is thus
    $$\sum_{d=1}^{\lfloor \sqrt{2N} \rfloor} \Biggl(\frac{\sqrt{2N}}{d}\Biggr)^2 \le 2N \sum_{d=1}^\infty \frac{1}{d^2}
    = \frac{\pi^2N}{3}.$$
  The number of triples $(d,a,b)$ that are canonical encodings is  $\#\cl(\co_N) \ge (N-1)/12$. Thus the success probability in step (4) is at least $((N-1)/12)/(\pi^2N/3) = (1 - \frac{1}{N})\frac{1}{4\pi^2}$. The claim follows.
  \end{proof}

  Using Algorithm~\ref{alg:bell-stat}, we can next obtain a maximally entangled state by applying a controlled-NOT operator. We implement this in the next section in our instantiation of the minting algorithm.

\subsection{Instantiation of protocol}
\label{sec:inst-prot}
Algorithm \ref{MintAlgInstant} below is our instantiation of the minting algorithm using normalized Brandt operators.
    Fix $N$, $H_N, \co_N$, and $V_N$ as before, and choose primes $p_1,\ldots,p_t$ distinct from $N$. Let $T(p_j)$ be the normalized $p_j$-Brandt matrix for level $N$ (as defined in \S\ref{MpSection}) and let $U_{j} = e^{i T(p_j)/\sqrt{p_j}}$.
Recall that \S\ref{BrandtMatrixComputationSect} allows one to compute $T(p_j)$ and the action of $U_j$ on $V_N$.
Let $\{\basis{i}\}$ be a simultaneous real eigenbasis for the $U_j$'s. Fix $\eps$ for which $\{\basis{i}\}$ is $\eps$-separated; based on empirical evidence, $\eps = \frac{1}{4(\log_2N)}$ should be a suitable choice, but see \S{}\ref{sec:eps-separation} for additional discussion. Set the public parameters $\pp$ to be $(N,H_N, \co_N, p_1,\ldots,p_t,\eps)$. Let $\dsk$ be a signing key for a fixed digital signature algorithm. The $\mint$ algorithm is as follows.
\begin{algorithm}
\label{MintAlgInstant}
  INPUT: $\pp, \dsk$

  OUTPUT: a uniformly random valid bill $(\ket{\psi},v,\sigma)$ with $\ket{\psi} \in V_N$.
\begin{enumerate}[(1)]
    \item Apply Algorithm~\ref{alg:bell-stat} to obtain the superposition $\sum \ket{d,a,b}$, where the sum is over all the canonical encodings of the elements of $\cl (\co_N)$. \label{bell-step}
    \item Append an ancillary register initialized to $0$ and apply controlled-NOT operators to obtain the quantum state $\sum (\ket{d,a,b} \otimes \ket{d,a,b})$.
    \item Apply phase estimation with the operators $U_j \otimes I_h$ and $I_h \otimes U_j$ for $j = 1,\ldots,t$.
Let $\ket{\psi}$ be the resulting quantum state and $v$ the tuple of eigenvalues.
    \item If $v_j = e^{i(p_j+1)/\sqrt{p_j}}$ for all $j$, output $\perp$. \label{check-bad-eigenvector}
    Otherwise, let $\sigma$ be the signature of $v$, and output $(\ket{\psi},v,\sigma)$.
\end{enumerate}
\end{algorithm}

This instantiation of the minting algorithm is essentially the same as the black box $\mint$ algorithm of \S\ref{BlackBoxSection}, except that it takes place in the subspace $V_N$, not in all of $\C^{\cl (\co_N)}$.

\begin{thm}
The $\mint$ Algorithm \ref{MintAlgInstant} has correct output with probability at least $\frac{1}{32 \pi^2} (1-\frac{1}{N})(1 - \frac{12}{N})$, and runs in quantum polynomial time.
\end{thm}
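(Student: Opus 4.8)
The plan is to treat the running time and the success probability separately, obtaining the latter by tracing the quantum state through Algorithm~\ref{MintAlgInstant} and pinpointing the two places it can fail.

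For the running time, each of the four steps is quantum polynomial time. Step~(1) is Algorithm~\ref{alg:bell-stat}, which is quantum polynomial time since it invokes Algorithm~\ref{alg:check-encoding} (already shown to be quantum polynomial time, the only genuinely quantum ingredient being the factorization of a modulus $m \le \sqrt{2N}$ by Shor's algorithm) a polynomial number of times and otherwise does elementary state preparation. Step~(2) is a constant number of controlled-NOT gates. Step~(3) is phase estimation with the operators $U_j \otimes I_h$ and $I_h \otimes U_j$; by \S\ref{BrandtMatrixComputationSect} and \cite{hamsim}, each controlled-$U_j$ (i.e.\ controlled-$e^{iT(p_j)/\sqrt{p_j}}$) has gate complexity polynomial in $p_j$ and $\log N$, and phase estimation to the precision $\eps = \frac{1}{4\log_2 N}$ needs only $\polylog(N)$ iterations. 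Step~(4) is a classical comparison. Since $t$ and the $p_j$ are polynomially bounded, $\mint$ runs in quantum polynomial time.

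For correctness, condition on step~(1) succeeding, which by Theorem~\ref{thm:bell-state-prob} happens with probability at least $(1-\frac1N)\frac{1}{32\pi^2}$. Identifying the canonical encoding of each ideal class with the corresponding computational basis vector of $\C^{\cl(\co_N)}$, the state after step~(1) is $\frac{1}{\sqrt h}\sum_{[I]\in\cl(\co_N)}\ket{[I]}$ with $h = \#\cl(\co_N)$, and step~(2) turns it into the maximally entangled state $\frac{1}{\sqrt h}\sum_{[I]}\ket{[I]}\ket{[I]}$ on $\C^{\cl(\co_N)}\otimes\C^{\cl(\co_N)}$. By Lemma~\ref{lem:gener-double-eigenst} this equals $\frac{1}{\sqrt h}\bigl(\ket{e_0}\ket{e_0}+\sum_{i=1}^{\dim V_N}\basis{i}\basis{i}\bigr)$, where $e_0$ is the real unit vector proportional to $(\sqrt{w_I})_{[I]}$ (spanning the line orthogonal to $V_N$ and a joint eigenvector of the $T(p_j)$ with eigenvalue $p_j+1$, so $U_j\ket{e_0} = e^{i(p_j+1)/\sqrt{p_j}}\ket{e_0}$) and $\{\basis{i}\}$ is a real orthonormal basis of $V_N$ consisting of joint eigenvectors of the $U_j$. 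Phase estimation in step~(3) collapses this onto one of the $h$ joint eigenvectors $\ket{e_0}\ket{e_0},\basis{1}\basis{1},\dots$, each with probability $\frac1h$ up to negligible phase-estimation error, the $\eps$-separation hypothesis (for the eigenbasis $\{e_0\}\cup\{\basis{i}\}$ of $\C^{\cl(\co_N)}$) ensuring the eigenvalue tuples are distinct and resolvable to accuracy $\eps/3$. Hence step~(4) outputs $\perp$ exactly when the branch $\ket{e_0}\ket{e_0}$ was obtained, i.e.\ with conditional probability $\frac1h$; otherwise the output is $\basis{k}\basis{k}$ with $k$ uniform in $\{1,\dots,\dim V_N\}$, together with an $\eps/3$-accurate approximation $v$ of the eigenvalue vector $v_k$ and a signature $\sigma$ of $v$ --- exactly a uniformly random valid bill with note in $V_N\otimes V_N$ (distinctness of serial numbers for distinct $\basis{k}$ following from $\eps$-separation, cf.\ Remark~\ref{ProtocolRemarks}). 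Multiplying the two probabilities gives a correct output with probability at least $(1-\frac1N)\frac{1}{32\pi^2}(1-\frac1h)$, and since $\dim V_N = h-1$ and the class number $h = \#\cl(\co_N) \ge N/12$ (by Eichler's mass formula together with the count of supersingular $j$-invariants in characteristic $N$), we get $1-\frac1h \ge 1-\frac{12}{N}$, the claimed bound.

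The one point demanding care is step~(4): one must check that the eigenvalue tuple $\bigl(e^{i(p_j+1)/\sqrt{p_j}}\bigr)_j$ of $e_0$ is not shared by any $\basis{i}\in V_N$, so that the check discards precisely the Eisenstein component and no valid note. This follows from the Hasse bound $|a_{p_j}| \le 2\sqrt{p_j} < p_j+1$ on the $T(p_j)$-eigenvalues of the $\basis{i}$, together with the $\eps$-separation assumption used to rule out accidental coincidences of the $U_j$-eigenvalues after reduction modulo $2\pi$; making this quantitative --- equivalently, extracting the exact constant in the $1-\frac{12}{N}$ factor from a sharp lower bound on $h$ --- is the only nonroutine bookkeeping.
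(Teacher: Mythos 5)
Your proposal is correct and follows essentially the same route as the paper's proof: lower-bound the success of step~(1) by Theorem~\ref{thm:bell-state-prob}, observe via Lemma~\ref{lem:gener-double-eigenst} that the CNOT step yields the uniform superposition of eigenstates $\basis{i}\basis{i}$ including the Eisenstein state $\basis{0}\basis{0}$, note that phase estimation collapses to a uniformly random one of the $h\ge N/12$ eigenstates so step~(4) rejects with conditional probability at most $12/N$, and multiply. Your added detail on the running time and on why $\eps$-separation lets step~(4) isolate exactly the $\basis{0}\basis{0}$ branch only makes explicit what the paper asserts in one line.
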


\begin{proof}
   The subspace $V_N$ is the orthogonal complement of the span of
      $\ket{\psi_0} := \sum \frac{1}{w_{I}}\ket{d,a,b},$
where the sum is over the canonical encodings of the elements of $\cl (\co_N)$,
and $w_{I}=\#\co_I^\times/\{\pm 1\}$ where $\co_I$ is the right order of any ideal $I$ with canonical encoding $(d,a,b)$.
 As $\basis{0}$ is itself an eigenvector for the normalized Brandt operators, in the minting protocol it suffices to check that the output is not $\basis{0} \otimes \basis{0}$. (When $N \equiv 1 \pmod{12}$, then $\basis{0} \otimes \basis{0}$ is the maximally entangled state as in Step 2 of Algorithm~\ref{MintAlgInstant}.) Since $T(p)\ket{\psi_0} = (p+1)\ket{\psi_0}$, we have $U_p \ket{\psi_0} = e^{i(p+1)/\sqrt{p}} \ket{\psi_0}$. By $\eps$-separation, step~\ref{check-bad-eigenvector} outputs $\perp$ if and only if $\ket{\psi} = \basis{0} \otimes \basis{0}$; otherwise, the output is a valid bill with a note in $V_N$.

  Step~\ref{bell-step} succeeds with probability bounded below by
  $\frac{1}{32 \pi^2}(1-\frac{1}{N})$, by Theorem~\ref{thm:bell-state-prob}. The state proportional to
$
    \sum (\ket{d,a,b} \otimes \ket{d,a,b})
$
  is a uniform superposition of all of the eigenstates of the form $\basis{i} \otimes \basis{i}$. Since there are at least $N/12$ such states, the probability of obtaining the state $\basis{0} \otimes \basis{0}$ is at most $\frac{12}{N}$. The claim follows.
\end{proof}

\begin{rmk}
  Our motivation for working in $V_N$ instead of $\C^{\cl \co_N}$ is that $\basis{0} \otimes \basis{0}$ is an easy state to manufacture, so allowing this state to be a valid note would permit easy attacks by the mint or by others as in \S{}\ref{sec:blackboxsecprob}.
\end{rmk}

The instantiation of the verification algorithm $\ver$ is identical to the black box algorithm of \S{}\ref{BlackBoxSection}.

\subsection{$\eps$-separation}
\label{sec:eps-separation}

Our quantum money protocol instantiation requires that the eigenbasis for the operators $e^{iT(p)/\sqrt{p}}$
be $\epsilon$-separated. Table~\ref{tab:epsilon-separation-n} in Appendix~\ref{AppB} gives experimental data that suggests that the eigenbasis is $\epsilon$-separated even for $\epsilon$ quite large; for instance,  $\epsilon = 1/(4\log_2(N))$ works for all $N$ in Table~\ref{tab:epsilon-separation-n}, where we use $e^{iT(p)/\sqrt{p}}$ for all primes $p < \log_2(N)$.

For the normalized Brandt operators $T(p)$, rather than $e^{iT(p)/\sqrt{p}}$,
Goldfeld and Hoffstein \cite{GoldfeldHoffstein} obtain $\epsilon$-separation when the number of operators $m$ is $O(N \log N)$, and obtain a bound for $m$ that {is} $\polylog(N)$ if they assume a version of the Riemann hypothesis. However, the $\epsilon$ is not explicit. Note that Goldfeld and Hoffstein, as well as Serre's result below, deal with the Brandt operators $T'(p)$, but since $T(p)$ and $T'(p)$ are similar, the eigenvalues are identical, and so these results apply to $T(p)$ as well.
\begin{thm}[\cite{GoldfeldHoffstein}, Theorems 3 and 2]
\label{GoldfeldHoffsteinThm}
  Let $N \ge 5$ be a prime. For each prime $p \ne N$, let $T(p)$ be the normalized $p$-Brandt matrix for level $N$. Then:
  \begin{enumerate}[(i)]
      \item
   There exist a constant $K = O(N \log N)$ and $\epsilon > 0$ such that if $p_1,\ldots,p_t$ is the list of primes $\leq K$ with $p_i \neq N$, then every eigenbasis for the operators $T(p_1), \ldots, T(p_t)$ is $\epsilon$-separated.
     \item
 If the
 Riemann hypothesis for Rankin-Selberg
zeta functions holds, then there exist a constant $K = O((\log N)^2(\log \log N)^4)$ and $\epsilon > 0$ such that if $p_1,\ldots,p_t$ is the list of primes $\leq K$ with $p_i \neq N$, then every eigenbasis for the operators $T(p_1), \ldots, T(p_t)$ is $\epsilon$-separated. If $N > e^{15}$, one can take
    $K = 16(\log N)^2(\log \log N)^4.$
  \end{enumerate}
\end{thm}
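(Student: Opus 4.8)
The plan is to transport the problem from Brandt operators to Hecke operators on modular forms, where the statement becomes an \emph{effective} version of strong multiplicity one, and then to feed in the analytic theory of Rankin--Selberg $L$-functions. By the Eichler/Pizer isomorphism recalled in the introduction, the commuting self-adjoint system $(\C^{\cl(\co_N)},\{T(p)\}_p)$ is isomorphic to the Hecke system $(S_2(\Gamma_0(N))\oplus \C E,\{T_p\}_p)$, where $E$ is the unique Eisenstein eigenform (unique since $N$ is prime). Since $N$ is prime there are no oldforms, so the Hecke eigenbasis of $S_2(\Gamma_0(N))$ consists of normalized newforms $f$, with eigenvalue vector $(a_{p_1}(f),\ldots,a_{p_t}(f))$, while $E$ has eigenvalue vector $(p_1+1,\ldots,p_t+1)$. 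Hence $\epsilon$-separation of the eigenbasis amounts to two statements: (a) for every newform $f$ and every prime $p\le K$, $|p+1-a_p(f)|$ is bounded below; (b) for distinct newforms $f\ne g$ of weight $2$ and level $N$, $\sum_{p\le K}(a_p(f)-a_p(g))^2\ge \epsilon^2$. Statement (a) is immediate from Deligne's bound $|a_p(f)|\le 2\sqrt p$: already $|3-a_2(f)|\ge 3-2\sqrt2>\tfrac16$. So the whole theorem reduces to the effective multiplicity-one estimate (b), and the two parts differ only in the size of $K$ one can achieve.

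For (b), self-adjointness of the Hecke operators makes the $a_p(f)$ totally real, so $f=\bar f$; thus for $f\ne g$ the Rankin--Selberg $L$-function $L(s,f\times g)$ is entire, $L(s,f\times f)$ and $L(s,g\times g)$ have simple poles at $s=1$, all three Euler products converge and are nonvanishing for $\Re(s)>1$, and each has analytic conductor $O(N^2)$. Writing $\lambda_f(p)=a_p(f)/\sqrt p\in[-2,2]$, a standard manipulation of the associated Dirichlet series gives, as $s\to 1^+$,
\[
\sum_p\big(\lambda_f(p)-\lambda_g(p)\big)^2 p^{-s}=\log L(s,f\times f)+\log L(s,g\times g)-2\log|L(s,f\times g)|+O(1)\longrightarrow+\infty,
\]
since the first two terms diverge (simple poles) while $-2\log|L(s,f\times g)|$ is bounded below as $L(\cdot,f\times g)$ is holomorphic at $s=1$. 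By partial summation this forces $\sum_{p\le x}(\lambda_f(p)-\lambda_g(p))^2\to\infty$, and the entire content of the theorem is to make this \emph{effective}: to show the partial sum already exceeds a fixed threshold $\epsilon^2$ once $x=K$. Quantifying the approach to the poles requires controlling these $L$-functions just to the left of $\Re(s)=1$. Unconditionally, one replaces the pointwise argument by a Rankin--Selberg mean value over the family of newforms of level $N$ (via the Petersson or Eichler--Selberg trace formula), whose orthogonality relation is an honest main-term-plus-error estimate for primes up to a small power of $N$, together with the absence of Siegel (exceptional) zeros for the symmetric-square and Rankin--Selberg $L$-functions involved; this yields $K=O(N\log N)$, which is part~(i). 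Under the Riemann hypothesis for the relevant Rankin--Selberg zeta functions one runs the explicit formula pointwise instead, so the first prime forcing the partial sum past the threshold is $O\big((\log Q)^2(\log\log Q)^4\big)$ with $Q=O(N^2)$ the conductor, a Bach--Sorenson-type bound; tracking constants through $\log Q\le 2\log N+O(1)$ and the union bound over the $O(N)$ competing eigenforms (which supplies the $(\log\log N)^4$) gives $K=16(\log N)^2(\log\log N)^4$ for $N>e^{15}$, which is part~(ii).

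The main obstacle is exactly this effectivization: the pole of $L(s,f\times f)$ only tells us the Dirichlet series diverges, not where its partial sums become large, and a crude conductor bound alone gives $K$ polynomial in $N$ with a large exponent. Getting down to $O(N\log N)$ unconditionally is possible only because the trace-formula orthogonality is quantitatively good for $p$ essentially up to $N$, and because Siegel zeros for these $L$-functions can be excluded --- the non-existence of Siegel zeros is the technical heart of \cite{GoldfeldHoffstein} and is what prevents the implied constants from exploding. A secondary subtlety is that one must handle a possible zero of $L(s,f\times g)$ at or very near $s=1$ (which only helps the inequality, but must be dealt with cleanly to extract an explicit constant) and must carry out the separation simultaneously against every other eigenform; the logarithmic losses from this are what produce the $(\log\log N)^4$ in part~(ii). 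Beyond quoting the analytic ingredients, I would expect the bulk of the work to be the careful bookkeeping of these constants.
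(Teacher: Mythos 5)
This theorem is imported into the paper as a black-box citation of Goldfeld--Hoffstein (Theorems 3 and 2 of \cite{GoldfeldHoffstein}); the paper contains no proof of it, so there is no internal argument to compare against. Your reconstruction of the strategy behind the cited result is essentially right: the Eichler/Pizer correspondence turns the statement into effective multiplicity one for weight-two newforms of prime level (plus the trivial separation from the Eisenstein eigenvalue, which your Deligne-bound computation $p+1-2\sqrt{p}=(\sqrt{p}-1)^2>0$ handles), and the engine is the positivity of $\sum_p(\lambda_f(p)-\lambda_g(p))^2p^{-s}$ expressed through $L(s,f\times f)$, $L(s,g\times g)$, and $L(s,f\times g)$, made effective unconditionally and sharpened under RH for the Rankin--Selberg zeta functions. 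Two caveats. First, you are working harder than the statement requires on the $\epsilon$: the theorem asserts only that \emph{some} $\epsilon>0$ exists (the paper explicitly notes it is not explicit), and since there are finitely many eigenforms for each $N$, any $\epsilon$ follows automatically once the eigenvalue vectors are shown distinct; the entire content is the bound on $K$, i.e., on the least distinguishing prime. Second, some of your bookkeeping claims about the reference are doubtful: the $(\log\log N)^4$ cannot come from ``a union bound over the $O(N)$ competing eigenforms'' --- the statement is a per-pair bound on the least distinguishing prime and one simply takes the maximum over pairs, so no union bound enters (and if it did it would cost a factor of $\log N$, not $(\log\log N)^4$); and the role you assign to the exclusion of Siegel zeros in the unconditional $O(N\log N)$ bound is speculation about the internals of \cite{GoldfeldHoffstein} that you should not present as established. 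These do not affect the soundness of the overall outline, but they are not substitutes for the analytic details, which live entirely in the cited paper.
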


\begin{prop}\label{lemma:eps-prime-bis}
Suppose $p_1\ldots,p_t, N$ are distinct prime numbers with $N \ge 5$.
Then every eigenbasis with respect to $T(p_1),\ldots,T(p_t)$ that is $\epsilon$-separated for some $\epsilon > 0$ is also an eigenbasis with respect to $\big\{e^{\frac{i}{\sqrt{p_j}}T(p_j)}\big\}$ that is $\epsilon'$-separated for some $\epsilon'$ such that $0 <\epsilon' =  O(\epsilon/\sqrt{\max_j\{p_j\}})$.
\end{prop}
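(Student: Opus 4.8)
The plan is to show that the two notions of eigenbasis coincide, and then to quantify how the separation of eigenvalue-tuples degrades when one passes from the $T(p_j)$ to the unitaries $U_j = e^{iT(p_j)/\sqrt{p_j}}$. Since each $U_j$ is a function of $T(p_j)$, any simultaneous eigenvector of the $T(p_j)$ is a simultaneous eigenvector of the $U_j$ with eigenvalue $e^{i\lambda_j/\sqrt{p_j}}$, where $\lambda_j$ is the corresponding eigenvalue of $T(p_j)$; conversely, because the $T(p_j)$ are real symmetric and simultaneously diagonalizable, the eigenspaces of the $U_j$ refine exactly as those of the $T(p_j)$, and (the eigenvalue data being real and the eigenbasis fixed under conjugation) a real eigenbasis for one system is a real eigenbasis for the other. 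So the first step is just this bookkeeping: the set of candidate eigenbases is the same.

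Next I would control the separation. Fix $i \ne k$, and let $v_i = (\lambda_{i1},\ldots,\lambda_{it})$, $v_k = (\lambda_{k1},\ldots,\lambda_{kt})$ be the tuples of $T(p_j)$-eigenvalues, and let $w_i = (e^{i\lambda_{i1}/\sqrt{p_1}},\ldots)$, $w_k$ be the corresponding tuples of $U_j$-eigenvalues. We are given $\|v_i - v_k\|_2 \ge \epsilon$. For each coordinate $j$, the elementary inequality
\[
  \bigl|e^{i\theta} - e^{i\theta'}\bigr| \;\ge\; \frac{2}{\pi}\,\min_{n\in\Z}\,|\theta - \theta' - 2\pi n|
\]
applies with $\theta = \lambda_{ij}/\sqrt{p_j}$, $\theta' = \lambda_{kj}/\sqrt{p_j}$. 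The key point is that the Brandt eigenvalues $\lambda_{ij}$ obey the Ramanujan--Petersson bound $|\lambda_{ij}| \le 2\sqrt{p_j}$ (these are the eigenvalues of $T_{p_j}$ on $S_2(\Gamma_0(N))$), so $|\theta - \theta'| \le 4$, which is less than $2\pi$; hence the wrap-around term $2\pi n$ can be taken to be $0$ and the $\min$ is simply $|\theta - \theta'| = |\lambda_{ij} - \lambda_{kj}|/\sqrt{p_j} \ge |\lambda_{ij} - \lambda_{kj}|/\sqrt{\max_j p_j}$. Summing the squares over $j$ gives
\[
  \|w_i - w_k\|_2^2 \;\ge\; \frac{4}{\pi^2}\,\frac{1}{\max_j p_j}\sum_{j=1}^t |\lambda_{ij}-\lambda_{kj}|^2 \;=\; \frac{4}{\pi^2}\,\frac{\|v_i - v_k\|_2^2}{\max_j p_j} \;\ge\; \frac{4}{\pi^2}\,\frac{\epsilon^2}{\max_j p_j},
\]
so $\epsilon' := \frac{2\epsilon}{\pi\sqrt{\max_j p_j}}$ works, which is indeed $O(\epsilon/\sqrt{\max_j p_j})$ and strictly positive.

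I expect the only real subtlety to be the uniform bound $|\lambda_{ij}| \le 2\sqrt{p_j}$ that makes the phase differences small enough to avoid wrap-around; without it one could only claim $\epsilon'$ up to a factor depending on the spread of eigenvalues relative to $\pi\sqrt{p_j}$, and the clean $O(\epsilon/\sqrt{\max_j p_j})$ would not follow. This bound is the Weil/Eichler--Shimura bound for Hecke eigenvalues on weight-two cusp forms of level $N$ (equivalently, the classical Ramanujan conjecture for these forms, a theorem), and it transfers to the Brandt matrices $T(p)$ via the isomorphism with the Hecke system on $S_2(\Gamma_0(N))$ already cited in the paper. Everything else is the elementary trigonometric estimate above plus the identification of eigenbases, which is routine.
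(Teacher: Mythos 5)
Your proposal follows essentially the same route as the paper: both use the Deligne/Weil bound $|\lambda|\le 2\sqrt{p}$ to confine the arguments $\lambda/\sqrt{p_j}$ to $[-2,2]$, and then a bi-Lipschitz estimate for $\lambda\mapsto e^{i\lambda}$ on that range, with the $1/\sqrt{p_j}$ rescaling supplying the $\sqrt{\max_j p_j}$ factor. The paper simply asserts that $\rho^{-1}$ is Lipschitz with some constant $M$ and concludes $\epsilon'=\epsilon/(M\sqrt{K})$; you try to make the constant explicit, and there you slip. Since $|\theta-\theta'|$ can be as large as $4>\pi$, the minimum $\min_{n}|\theta-\theta'-2\pi n|$ is \emph{not} always $|\theta-\theta'|$ (for $|\theta-\theta'|\in(\pi,4]$ it equals $2\pi-|\theta-\theta'|$), and correspondingly the bound $|e^{i\theta}-e^{i\theta'}|\ge\frac{2}{\pi}|\theta-\theta'|$ fails at the endpoint: $|e^{i\theta}-e^{i\theta'}|=2\sin 2\approx 1.82$ while $\frac{2}{\pi}\cdot 4\approx 2.55$. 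This does not damage the result, because $[-2,2]$ has length $4<2\pi$, so the restricted exponential is still injective and bi-Lipschitz with the slightly worse constant $\tfrac{1}{2}\sin 2$ in place of $\tfrac{2}{\pi}$, and the claimed $O(\epsilon/\sqrt{\max_j p_j})$ is insensitive to the constant. Your emphasis on the Weil bound as the crux --- without it one could not rule out wrap-around and the clean bound would fail --- is exactly right and matches the paper.
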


\begin{proof}
Let  $U_j = e^{\frac{i}{\sqrt{p_j}}T(p_j)}$ and $K = \max_j\{p_j\}$.
By Deligne's proof of the Weil conjectures~\cite{deligne},
the eigenvalues of $T(p_j)$ lie in the interval $[-2\sqrt{p_j},2\sqrt{p_j}]$, so the eigenvalues of $\frac{1}{\sqrt{p_j}} T(p_j)$ lie in $[-{2},{2}]$.  Let $H = \{z \in S^1 \mid -2 \leq \mathrm{arg}(z) \leq 2\}$. The map on $t$-tuples $\rho: [-{2},{2}]^t \to H^t$ given coordinate-wise by $\lambda \mapsto e^{i\lambda}$ sends tuples of eigenvalues 
for the $\frac{1}{\sqrt{p_j}}T(p_j)$ to the corresponding tuple of eigenvalues 
for the $U_j$. Since $\rho^{-1}$ is Lipschitz continuous, there exists $M > 0$ such that for all $v_1, v_2 \in [-2,2]^t$ we have
  $|v_1 - v_2| \leq M|\rho(v_1) - \rho(v_2)|$.
If $v_1$ and $v_2$ are two distinct tuples of eigenvalues for $\frac{1}{\sqrt{p_j}}T(p_j)$, then $|v_1 - v_2| > \frac{1}{\sqrt{p_j}} \epsilon$. It follows that with respect to the $U_j$'s our joint eigenbasis is $\epsilon'$-separated for  $\epsilon' = \frac{1}{M\sqrt{K}}\epsilon$.
  \end{proof}

Fix primes $p_1, \ldots p_t$.
As before, for each prime $N$ distinct from $p_1,\ldots,p_t$,
let $\{v_{i,N}\}_{i=1}^h$ be the set of vectors of eigenvalues for an eigenbasis for $\{{\frac{1}{\sqrt{p_j}}T(p_j)}\}_{j=1}^t$, where $T(p_j)$ is the normalized $p_j$-Brandt matrix for level $N$ and $h=\#\cl(\co_N)$. On the interval $[-2,2]$, let $\mu_p$ denote the probability measure
$
  \frac{p+1}{\pi} \cdot \frac{(1-{x^2}/{4})^{1/2}}{(\sqrt{p}+\frac{1}{\sqrt{p}})^2 - {x^2}} dx.
$
\begin{thm}[Th\'eor\`eme 3, \cite{SerreDistribution}]
\label{SerreDistributionThm}
The distribution of vectors $\{ v_{i,N} \}_{i=1}^h \subset [-2,2]^t$, where $N$ is a prime not equal to $p_1,\ldots,p_t$,  approaches the product measure $\prod_{i=1}^t \mu_{p_i}$ as $N$ goes to infinity.
\end{thm}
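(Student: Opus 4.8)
The plan is to reduce the assertion to the convergence of all mixed moments and then read those moments off the Eichler--Selberg trace formula; this is in essence Serre's own argument in \cite{SerreDistribution}. First I would pin everything inside a fixed compact set: since each $T(p_j)$ is conjugate to a matrix with nonnegative integer entries all of whose column sums equal $p_j+1$, every coordinate satisfies $|v_{i,N,j}|\le (p_j+1)/\sqrt{p_j}$, so all the eigenvalue vectors lie in a cube $K\subset\R^t$ independent of $N$. (By \cite{deligne} one could replace $K$ by $[-2,2]^t$, but the crude bound is enough.) On a fixed compact set, weak-$*$ convergence of the empirical measures $\nu_N:=\frac1h\sum_{i=1}^h\delta_{v_{i,N}}$ to $\mu:=\prod_{j=1}^t\mu_{p_j}$ is equivalent --- by Stone--Weierstrass and an $\varepsilon/3$ argument, using $\nu_N(K)=1$ and $\operatorname{supp}\mu\subset[-2,2]^t\subset K$ --- to convergence of all moments. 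Writing $X_k$ for the Chebyshev polynomial of the second kind normalized by $X_k(2\cos\theta)=\sin((k+1)\theta)/\sin\theta$, and noting that the products $\prod_j X_{k_j}(x_j)$ span the polynomial algebra on $K$, it therefore suffices to prove, for every tuple $(k_1,\dots,k_t)$ of nonnegative integers,
\[
  \frac1h\sum_{i=1}^h\ \prod_{j=1}^t X_{k_j}(v_{i,N,j})\ \longrightarrow\ \prod_{j=1}^t\int_{-2}^2 X_{k_j}\,d\mu_{p_j}\qquad (N\to\infty).
\]

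Next I would turn the left side into a trace. In weight two the Hecke relation $T_{p^{r+1}}=T_pT_{p^r}-p\,T_{p^{r-1}}$ rearranges to $T_{p^r}/p^{r/2}=X_r(T_p/\sqrt p)$, and the Brandt operators obey the same recursion and are multiplicative in coprime prime powers; hence, with $n:=\prod_j p_j^{k_j}$ (coprime to $N$),
\[
  \sum_{i=1}^h\ \prod_{j=1}^t X_{k_j}(v_{i,N,j})\;=\;\frac{1}{\sqrt n}\,\tr\bigl(T(n)\,\big|\,\C^{\cl(\co_N)}\bigr).
\]
For $N$ prime, $\C^{\cl(\co_N)}=\C\basis{0}\oplus V_N$ Hecke-equivariantly, where $\basis{0}$ is the Eisenstein eigenvector with $T(n)\basis{0}=\sigma_1(n)\basis{0}$ and $V_N$ is isomorphic, as a module over the Hecke algebra, to $S_2(\Gamma_0(N))$ (Eichler's correspondence, together with the fact that every weight-two cusp form of prime level is new, so $\dim V_N=\dim S_2(\Gamma_0(N))$). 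Thus the quantity I must estimate is $\dfrac{\sigma_1(n)}{h\sqrt n}+\dfrac{1}{h\sqrt n}\,\tr\bigl(T_n\,\big|\,S_2(\Gamma_0(N))\bigr)$, whose first summand is $O_n(1/N)$.

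For the second summand I would invoke the Eichler--Selberg trace formula with $\gcd(n,N)=1$: its identity term is $\tfrac{N+1}{12}$ if $n$ is a perfect square and $0$ otherwise, while the elliptic, hyperbolic and weight-two correction terms are $O_n(1)$ \emph{uniformly in the prime $N$} --- the only $N$-dependence is through counts of solutions of quadratic congruences modulo $N$, which never exceed $2$, everything else being Hurwitz class numbers and divisor data of the fixed integer $n$. Since $h=\tfrac{N+1}{12}+O(1)$, the limit is $\tfrac{1}{\sqrt n}=\prod_j p_j^{-k_j/2}$ when every $k_j$ is even and $0$ otherwise. It then remains to match this with $\prod_j\int_{-2}^2 X_{k_j}\,d\mu_{p_j}$, i.e.\ to verify the one-variable moments $\int_{-2}^2 X_r\,d\mu_p=p^{-r/2}$ for $r$ even and $0$ for $r$ odd; via the generating function $\sum_{r\ge0}X_r(x)z^r=(1-xz+z^2)^{-1}$ this amounts to the single identity $\int_{-2}^2(1-xz+z^2)^{-1}\,d\mu_p(x)=p/(p-z^2)$, a short residue computation after the substitution $x=2\cos\theta$ (its $z=0$ case also re-confirming that $\mu_p$ is a probability measure). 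Combining the displays yields the moment convergence, hence the theorem.

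The one genuinely arithmetic step, and where I expect the real work to lie, is the uniform-in-$N$ bound $O_n(1)$ for the non-identity terms of the Eichler--Selberg trace formula; the rest --- the Chebyshev recursion, the Brandt--Eichler dictionary and dimension count, the Stone--Weierstrass reduction, and the integral against $\mu_p$ --- is formal or a one-line computation. Primality of $N$ is used twice: to guarantee that $V_N=S_2(\Gamma_0(N))$ with no oldform contribution, and to keep the local factors at $N$ in the trace formula elementary.
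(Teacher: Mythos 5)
The paper does not prove Theorem~\ref{SerreDistributionThm} at all---it is imported verbatim from Serre \cite{SerreDistribution}---and your argument is essentially Serre's own: reduction to mixed moments against products of Chebyshev polynomials $X_k$, conversion of those moments into $\tr(T(n))/(h\sqrt{n})$ via the Hecke recursion and multiplicativity, the Eichler--Selberg trace formula with the identity term $\tfrac{N+1}{12}[n=\square]$ dominating the $O_n(1)$ elliptic/hyperbolic terms, and the matching computation $\int X_r\,d\mu_p = p^{-r/2}$ for $r$ even (and $0$ for $r$ odd). The outline is correct, including the correct handling of the Eisenstein eigenvector $\basis{0}$ (whose eigenvalue vector in fact lies outside $[-2,2]^t$, so the theorem's statement implicitly concerns the cuspidal part $V_N$, and your $\sigma_1(n)/(h\sqrt{n})=O_n(1/N)$ term disposes of it).
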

For $p$ large, $\mu_p$ approaches the distribution
  $\frac{1}{2\pi}\sqrt{4-x^2} dx$.
Thus the distribution of the eigenvalues of the $U_j = e^{iT(p_j)/\sqrt{p_j}}$ in the subset of $S^1$ with argument $x \in [-2,2]$ will approach the distribution $\frac{1}{2\pi}\sqrt{4-x^2} dx$. A more precise statement on the distribution of eigenvalues is given in \cite[Theorem 19]{murty-sinha}.

\begin{rmk}
\label{remk616}
In light of Theorem~\ref{SerreDistributionThm}, a natural assumption is that the $v_{i,N}$ act like independent random samples drawn from the distribution $\prod_{i=1}^t \mu_{p_i}$. Under this assumption,
if $0 < \epsilon < 1$, then for $t$ larger than a sufficiently large multiple of $\log N$, with high probability the eigenbasis for the $U_j$ is $\epsilon$-separated.
An open question is how the eigenvectors of the Brandt matrices vary. If they also act like independent random samples as $N$ varies, then the operators $U_j$ also act like random (commuting) unitary operators.
\end{rmk}

\section{Security of the Instantiation}
\label{attacksSection}

As the operators $U_j$ in the instantiation are no longer black box, one must now consider  additional attacks. In \S\S\ref{OtherUjss}--\ref{ModularFormsSection} we note some of the most obvious attacks on Problem~\ref{quat-Problem}, and reasons we do not expect them to work. In each case, instead of an attacker with only black box access to the $U_j$ we consider an attacker that uses some property of the instantiation.

\subsection{Security reduction}
\label{reductionss}
The following problem restates Problem~\ref{attackProblem} in the setting of our instantiation.
\begin{problem}
\label{quat-Problem}
Given a prime $N \ge 5$, and operators $U_j = e^{iT(p_j)/\sqrt{p_j}}$, where  the $T(p_j)$ are the normalized Brandt matrices acting on $V_N$ corresponding to distinct primes $p_1, \ldots, p_t$ not equal to $N$, output a state of the form $\ket{\psi}\ket{\psi}\ket{\psi}$, where $\ket{\psi}$ is an eigenvector for all the $U_j$'s.
\end{problem}

The proof of Theorem~\ref{thm:forgery-attackproblem} shows that 
our instantiation is secure if Problem~\ref{quat-Problem}
is hard and the digital signature algorithm is secure.

\subsection{Use of other $U_j$}
\label{OtherUjss}
An attacker will have access not just to the $U_j$ used in the quantum money protocol but also to
$e^{iT(p)/\sqrt{p}}$ for other primes $p$. Since the black box lower bound from Theorem~\ref{LowerBoundTheorem} does not depend on the number of operators, its conclusion still holds, if one were to treat this larger set of operators as black box operators.

\subsection{Other powers of $e^{i T(p)/\sqrt{p}}$}
An attacker 
can apply arbitrary powers of the $U_j$,
by computing $e^{i \gamma T(p)/\sqrt{p}}$ for any $\gamma \in \R$.
The following modification of Theorem~\ref{LowerBoundTheorem} shows 
this does not help.

\begin{thm}\label{LowerBoundTheorem-powers}
  Suppose $\mathcal{A}$ is an algorithm that, on input a real number $\gamma \in \R$ and a black-box unitary operator $e^{T}$, outputs a black-box unitary operator that approximates $e^{\gamma T}$.
      Suppose $\mathcal{D}$ is any probability distribution over $(S^1)^t$ such that with high probability, any finite number of samples chosen from $\mathcal{D}$ are distinct. Then any circuit consisting of standard gates and controlled $\mathcal{A}(\gamma,U_j)$ gates that solves Problem~\ref{attackProblem} with constant positive probability for sets of operators $U_1,\ldots,U_t$ chosen according to $\mathcal{D}$ and with uniformly random real eigenbasis $\{\basis{i}\}$ must have
  $\querycomp$ controlled $\mathcal{A}(\gamma,U_j)$ gates.
\end{thm}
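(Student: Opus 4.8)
The plan is to reduce this statement to Theorem~\ref{LowerBoundTheorem} by observing that the operators $\mathcal{A}(\gamma, U_j)$ do not give the circuit any genuinely new access to the hidden structure of the $U_j$. The key point is that each $\mathcal{A}(\gamma, U_j)$ is (an approximation to) $e^{i\gamma T(p_j)/\sqrt{p_j}}$, which is a \emph{function} of $T(p_j)$ — it is diagonalized by exactly the same secret eigenbasis $\{\basis{i}\}$, with eigenvalue on $\basis{i}$ equal to $e^{i\gamma \theta_{ij}}$ where $e^{i\theta_{ij}}$ is the eigenvalue of $U_j$. First I would set up the same random model as in Theorem~\ref{LowerBoundTheorem}: a distribution $\mathcal{D}$ on $(S^1)^t$, a Haar-random real orthonormal basis $\{\basis{i}\}$, and eigenvalues $v_i$ drawn i.i.d.\ from $\mathcal{D}$. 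I would then argue that the entire polynomial-method machinery of \S\ref{Thm41pf} goes through essentially verbatim, because the matrix entries of a controlled $\mathcal{A}(\gamma, U_j)$ gate, \emph{after fixing the eigenbasis}, are functions only of the eigenvalues $z_{ij}$ (specifically, polynomials in $z_{ij}^\gamma$ and its conjugate when $\gamma$ is rational, or analytic functions in general) — the crucial structural fact used in the proof of Claim~\ref{Nqlem} and Claim~\ref{qMilem} is precisely that the output state and the quantity $\sum_i |\cobasis{i}\cobasis{i}\cobasis{i} C\ket{0}|^2$ depend on the data only through the eigenspace decomposition and the eigenvalues, never through \emph{which} orthonormal basis is chosen inside a degenerate eigenspace. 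That invariance is exactly what survives.

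More carefully, I would handle the parameter $\gamma$ by noting that the adversary can only invoke finitely many values $\gamma_1, \ldots, \gamma_r$ during the run of a size-$d$ circuit, so it suffices to prove the bound for any fixed finite set of exponents, treating $\{\mathcal{A}(\gamma_k, U_j)\}_{j,k}$ as a fixed enlarged collection of commuting unitaries. Since Theorem~\ref{LowerBoundTheorem} (and the surrounding discussion in \S\ref{Thm41pf}) explicitly does not depend on the number of operators — this is exactly the remark made in \S\ref{OtherUjss} — the enlarged collection is still governed by the same degeneracy argument: with $h : [N] \to [M]$ a random bucketing, the $V_j = \mathrm{span}\{\basis{i} : h(i)=j\}$ are simultaneous eigenspaces of \emph{all} of the operators $\mathcal{A}(\gamma_k, U_j)$ at once, and Lemma~\ref{TriorthogonalClaim} then forces $q_N(1/M) = O(M/N)$. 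The polynomial interpolation step that extracts $q_N(0) = O(d^3 \log d / N)$ is unchanged, giving the claimed $\querycomp$ query lower bound.

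The main obstacle I anticipate is dealing honestly with the word ``approximates'' in the hypothesis on $\mathcal{A}$, and with irrational or transcendental $\gamma$. If $\mathcal{A}(\gamma, U_j)$ only approximates $e^{i\gamma T(p_j)/\sqrt{p_j}}$ to inverse-polynomial precision, then the matrix entries are no longer exactly polynomial in the $z_{ij}$ and $\overline{z_{ij}}$, so Claim~\ref{Nqlem} does not literally apply. The fix is a perturbation argument: replace each $\mathcal{A}(\gamma, U_j)$ call by the exact operator $e^{i\gamma T(p_j)/\sqrt{p_j}}$, incurring a total error that is $d$ times the (inverse-polynomially small) per-gate error, hence still negligible, and observe that this changes the success probability by a negligible amount; then run the polynomial-method argument on the \emph{exact} circuit, where for rational $\gamma = a/b$ one works with the $b$-th roots $z_{ij}^{1/b}$ as the underlying variables (raising the polynomial degree only by a bounded factor), and for general $\gamma$ one first rationally approximates $\gamma$ itself, absorbing the further error into the same negligible budget. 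This bookkeeping — tracking how the polynomial degree and the error terms propagate — is the only delicate part; the combinatorial heart of the argument is inherited wholesale from Theorem~\ref{LowerBoundTheorem}.
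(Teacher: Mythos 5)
Your core reduction is the same as the paper's: the gates $\mathcal{A}(\gamma,U_j)$ are diagonal in the same secret eigenbasis, so one treats the enlarged collection $\{\mathcal{A}(\gamma_k,U_j)\}_{j,k}$ as black-box commuting unitaries and invokes Theorem~\ref{LowerBoundTheorem}, whose bound is independent of the number of operators. The paper's entire proof is essentially your second paragraph: the only thing that changes is the distribution $\mathcal{D}$ of eigenvalue tuples, which is replaced by its pushforward under $(z_j)\mapsto(z_j^{\gamma_k})_{j,k}$. Two comments. First, your third paragraph is an unnecessary detour, and one step in it is wrong as stated: writing $z^{a/b}$ as a polynomial in the $b$-th root $z^{1/b}$ raises the degree by a factor of $a$, which is not ``bounded'' for large numerators; the clean move is the one your second paragraph already implies, namely to take $w_{ijk}=z_{ij}^{\gamma_k}$ as the new eigenvalue variables (i.e., change $\mathcal{D}$, not the polynomial variables), so the gates remain degree one and Claims~\ref{Nqlem} and~\ref{qMilem} apply verbatim. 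Second, the one hypothesis of Theorem~\ref{LowerBoundTheorem} that genuinely needs re-verification --- and the only point the paper's proof actually argues --- is that the pushed-forward distribution still has the property that finitely many samples are distinct with high probability. This can fail without normalization (e.g.\ $\gamma=0$ collapses everything, and large $\gamma$ can wrap distinct eigenvalues onto each other); the paper handles it by reducing to $0<\gamma\leq 1$, where $e^{i\theta}\mapsto e^{i\gamma\theta}$ is injective on the principal branch. Your proposal omits this check, which is a small but real gap in an otherwise correct argument.
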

The only difference between the proofs of Theorems~\ref{LowerBoundTheorem-powers} and \ref{LowerBoundTheorem} is that the calls to $\mathcal{A}$ might give a different distribution $\mathcal{D}$ of eigenvalues. We may assume that the $\gamma$ chosen in Theorem~\ref{LowerBoundTheorem-powers} all satisfy $0 < \gamma \leq 1$. Then the distribution induced by replacing each sample from $\mathcal{D}$ with its $\gamma$th power for some $0 < \gamma \leq 1$ also has the property that with high probability, any finite number of samples are distinct.

\subsection{Sparse logarithms}
\label{sec:sparse-logarithms}
The matrices $T(p)$ are too large to be able to directly compute their eigenvectors via classical algorithms from linear algebra.
However, the $\log(U_j) = \frac{1}{\sqrt{p_j}}T(p_j)$ used in our protocol are sparse operators.
One could ask whether one could
use an HHL-like quantum algorithm \cite{HHLpaper} to find eigenvectors (one cannot use HHL directly as the matrix used would not be invertible).
Since an HHL-like algorithm deals with $e^{itT(p)}$ for $t \in \R$ via Hamiltonian simulation,  rather than directly with the sparse matrices $\frac{1}{\sqrt{p}}T(p)$, 
security against such attacks is covered
by our black box lower bounds in Theorem~\ref{LowerBoundTheorem}.

\subsection{Quantum state restoration}

A technique in \cite{staterestoration} was developed to break a number of quantum money schemes that look superficially like ours. These schemes use eigenstates of some operator $H$ where the state itself has some clean (but secret) product representation.
In \cite{staterestoration} it is shown that if we are
given a state $\ket{\psi}=\ket{\psi_A}\otimes \ket{\psi_B}\in V_A\otimes V_B$ and can compute a measurement of whether we are in state $\ket{\psi}$, we can produce a duplicate of the state $\ket{\psi_B}$ in time $\poly(\dim(V_B))$. If the supposedly secure state is a tensor product of many small pieces, this can be used to recover the individual pieces one at a time.

We argue that it is extremely unlikely that the eigenstates in our algorithm can be decomposed as such tensor products. In fact, it is extremely unlikely that there is even any natural way to write $V_N$ as a tensor product. For suppose $V_N \cong W_1 \otimes W_2$. For each Brandt operator $T(p)$, the eigenvalues of $T(p)$ acting on $V_N$ would be products $t_{ij} = \lambda_{i} \rho_{j}$, where the $\lambda_i$ (respectively $\rho_j$) are the eigenvalues of $T(p)$ acting on $W_1$ (respectively $W_2$). These eigenvalues would then satisfy quadratic relations $t_{ij}t_{k\ell} = t_{kj}t_{i\ell}$.
 Theorem~\ref{SerreDistributionThm} (due to Serre) suggests that the eigenvalues of the $T(p)$ act like random variables taken from the distribution $\mu_p$
and thus that
the eigenvalues 
satisfy the above quadratic relations with probability zero.

\subsection{Modular forms and elliptic curves}
\label{ModularFormsSection}

As mentioned in \S{}\ref{sec:motiv-proposed}, there is a well-understood connection between Brandt operators and both modular forms and supersingular elliptic curves.
Since the most efficient way known to compute modular forms is to compute the eigenvalues of Brandt operators (see \cite{SageBrandt,Pizer,Voight,Kohel}),
it seems unlikely that one could use modular forms
to attack a protocol based on quaternion algebras. We make this more precise below.
We focus on modular forms, but could equivalently
phrase it using elliptic curves, via the equivalence given in \cite{Mestre}.
A reference for modular forms is \cite{LangModForms}.

To try to solve Problem~\ref{quat-Problem}, one could try to directly manufacture a specific eigenstate $\ket{\psi}$ three times in succession to obtain a solution to Problem~\ref{quat-Problem}.
We next consider two ``direct manufacture'' problems.

Cusp forms are typically encoded as power series $f(q) = \sum_{n=1}^\infty a_n q^n$. We assume that for $\polylog(N)$-many primes $p$, $\polylog(N)$-many bits of the coefficient $a_p$ are specified. If $f$ is a simultaneous eigenvector of all the
Hecke operators $T_p$, normalized so that $a_1 = 1$, then the eigenvalue of $T_p$ is $a_p$. Such a cusp form $f$ is called an {\em eigenform} for the Hecke operators.
\begin{problem}\label{prob:cusp-form-state}
  Given a prime $N$  and a normalized eigenform $f \in S_2(\Gamma_0(N))$ for the Hecke operators $T_p$ with corresponding eigenvalue $a_p$ for all primes $p$,
output a simultaneous eigenstate $\ket{\psi}$ of $e^{iT(p)/\sqrt{p}}$ for all primes $p$, such that the corresponding eigenvalue is $e^{ia_p/\sqrt{p}}$.
\end{problem}

\begin{problem}\label{prob:eigenvalue-state}
  Given a prime $N \ge 5$, complex numbers $\alpha_1,\ldots, \alpha_t$, operators $U_j = e^{iT(p_j)/\sqrt{p_j}}$, where the $T(p_j)$ are the normalized Brandt matrices acting on $V_N$ corresponding to distinct primes $p_1, \ldots, p_t$ not equal to $N$, and a promise that there is a simultaneous eigenstate of $U_1,\ldots,U_t$ such that $U_j$ has
eigenvalue $\alpha_j$ for each $j$, output such a simultaneous eigenstate $\ket{\psi}$ of $U_1,\ldots,U_t$ with eigenvalues $\alpha_1,\ldots, \alpha_t$, respectively.
\end{problem}

No efficient algorithms are known to solve Problems~\ref{prob:cusp-form-state} or \ref{prob:eigenvalue-state}.

\begin{lem}\leavevmode \label{lem:unique-soln-cusp}
  \begin{enumerate}[(i)]
      \item Every solution to Problem~\ref{prob:cusp-form-state} is unique
    (up to scalar).
      \item  If every instance of
    Problem~\ref{prob:eigenvalue-state} with fixed choice of $N$ and $U_1,\ldots,U_t$ as part of the input has a solution that is unique up to scalar, then every eigenbasis for this choice of  $N, U_1,\ldots,U_t$ is $\epsilon$-separated for some $\epsilon > 0$.
       \item Given $N$ and $U_1,\ldots,U_t$, if there
    is an $\epsilon$-separated eigenbasis for the $U_j$'s for some $\epsilon > 0$,
    then every solution to
    Problem~\ref{prob:eigenvalue-state} with these $N$ and $U_1,\ldots,U_t$ as part of the input is unique (up to scalar).
\end{enumerate}
\end{lem}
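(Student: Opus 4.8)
The three parts are all variations on the same linear-algebra observation: a simultaneous eigenvalue tuple for a family of commuting diagonalizable operators determines a one-dimensional eigenspace exactly when no two basis vectors share the whole tuple, and two tuples being "the same'' for the unitaries $U_j=e^{iT(p_j)/\sqrt{p_j}}$ is a finite combinatorial condition. My plan is to reduce everything to this.

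For \textbf{(i)}, I would use the multiplicity-one theorem for newforms: the space $S_2(\Gamma_0(N))$ with $N$ prime has trivial old part, so a normalized Hecke eigenform is uniquely determined by its system of eigenvalues $\{a_p\}$. Under the isomorphism of Hecke modules between $S_2(\Gamma_0(N))$ and the Brandt module (cited in the introduction via \cite{Pizer,Mestre}), the eigenform $f$ corresponds, up to scalar, to a unique simultaneous eigenvector $\basis{i}$ of all the $T(p)$ — hence of all $e^{iT(p)/\sqrt p}$ — whose $U_p$-eigenvalue is $e^{ia_p/\sqrt p}$. Any solution $\ket\psi$ of Problem~\ref{prob:cusp-form-state} must be a simultaneous eigenvector with exactly this eigenvalue for every prime $p$ (not just the finitely many handed over in the instance: the problem statement quantifies over all $p$), and the real eigenbasis of the $T(p)$ matches the newform basis, so $\ket\psi$ lies in the line spanned by that single basis vector. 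That gives uniqueness up to scalar.

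For \textbf{(ii)} and \textbf{(iii)}, fix $N$ and $U_1,\dots,U_t$ and let $\{\basis{i}\}$ be a real simultaneous eigenbasis with eigenvalue tuples $v_i=(z_{i1},\dots,z_{it})\in(S^1)^t$. For \textbf{(iii)}: if the eigenbasis is $\epsilon$-separated, the $v_i$ are pairwise distinct, so for any achievable tuple $(\alpha_1,\dots,\alpha_t)$ there is exactly one index $i$ with $v_i=(\alpha_1,\dots,\alpha_t)$; the joint $\alpha_j$-eigenspace of the $U_j$ is therefore the line $\C\basis{i}$, and every solution of Problem~\ref{prob:eigenvalue-state} is a scalar multiple of $\basis{i}$. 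For \textbf{(ii)}, argue the contrapositive: if no $\epsilon>0$ works, then by definition of $\epsilon$-separation there exist $j\neq k$ with $v_j=v_k$ (since the finitely many pairwise $L_2$-distances $|v_j-v_k|$, if all nonzero, would have a positive minimum). Then $\alpha:=v_j$ is a legitimate instance of Problem~\ref{prob:eigenvalue-state} with the promise satisfied, but its solution set contains the two-dimensional span of $\basis{j}$ and $\basis{k}$, so a solution is not unique up to scalar — contradicting the hypothesis of (ii). Hence an $\epsilon$ must exist.

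The only subtle point — and the one I would write out carefully — is the passage from "the $U_j$-eigenvalue tuples $v_i$ coincide'' to "the $T(p_j)$-eigenvalue tuples coincide,'' and vice versa, so that the notions of separation in Lemma~\ref{lem:unique-soln-cusp} and of the eigenbasis being multiplicity-one agree. Since $U_j=e^{iT(p_j)/\sqrt{p_j}}$, distinct $T(p_j)$-eigenvalues can in principle collapse to the same point of $S^1$; but by Deligne's bound (as used in Proposition~\ref{lemma:eps-prime-bis}) the eigenvalues of $T(p_j)/\sqrt{p_j}$ lie in $[-2,2]$, where $\lambda\mapsto e^{i\lambda}$ is injective, so $v_j=v_k$ for the $U$'s if and only if $\basis{j}$ and $\basis{k}$ have the same $T(p_\ell)$-eigenvalue for every $\ell$. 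This is exactly what makes the Brandt-module multiplicity-one input in (i) transfer to the unitaries, and what makes the separation statements in (ii)–(iii) intrinsic to the eigenbasis. I expect this compatibility check to be the main (though not difficult) obstacle; the rest is bookkeeping.
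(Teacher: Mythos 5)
Your proposal is correct, and for parts (ii) and (iii) it is essentially identical to the paper's argument (contrapositive via a coinciding pair of eigenvalue tuples for (ii); expanding a solution in the eigenbasis and using separation to kill all but one coefficient for (iii)). The one place you genuinely diverge is the key compatibility step in part (i): to pass from ``$\ket{\psi}$ is a $U_p$-eigenvector with eigenvalue $e^{ia_p/\sqrt{p}}$'' to ``$\ket{\psi}$ is a $T(p)$-eigenvector with eigenvalue $a_p$,'' the paper observes that any other candidate eigenvalue would have the form $a_p + 2\pi k_p\sqrt{p}$ and rules out $k_p\neq 0$ because Brandt eigenvalues are algebraic while $\pi$ is transcendental; you instead invoke Deligne's bound to place all eigenvalues of $T(p)/\sqrt{p}$ (including $a_p/\sqrt{p}$) in $[-2,2]$, where $\lambda\mapsto e^{i\lambda}$ is injective since the interval has length $4<2\pi$. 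Both arguments are valid; yours has the small advantage of being uniform with the compatibility check already needed in Proposition~\ref{lemma:eps-prime-bis}, while the paper's transcendence argument needs no input beyond integrality of the Brandt matrices. The remainder of (i) --- multiplicity one for $S_2(\Gamma_0(N))$ at prime level transported through the Hecke-module isomorphism with the Brandt module --- matches the paper exactly.
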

\begin{proof}
For (i), fix an instance $N$ and $f = \sum a_nq^n$ of Problem~\ref{prob:cusp-form-state}, and suppose $\ket{\psi}$ and $\ket{\psi'}$ are simultaneous eigenstates for $e^{iT(p)/\sqrt{p}}$ with eigenvalue $e^{ia_p/\sqrt{p}}$ for all primes $p$. Then for each prime $p$, the states $\ket{\psi}$ and $\ket{\psi'}$ are simultaneous eigenvectors for the operators $T(p)$. The eigenvalue for $T(p)$ of $\ket{\psi}$ is $\frac{a_p}{p_j} + 2\pi k_p$ for some $k_p\in\Z$. Since $T(p)$ is an integer matrix, its eigenvalues, including $a_p$, are algebraic numbers. Therefore $2\pi k_p$ is also algebraic, so $k_p = 0$ for all $p$. Thus, $a_p$ is the eigenvalue for the operator $T(p)$ of $\ket{\psi}$, and similarly of $\ket{\psi'}$.
By the multiplicity one theorem for weight two cusp forms of prime level,
two normalized eigenforms in $S_2(\Gamma_0(N))$ with the same eigenvalues for all the Hecke operators $T_p$ must be equal.
Since the system of Hecke operators $T_p$ acting on $S_2(\Gamma_0(N))$
is isomorphic to the system of operators $T(p)$ acting on $V_N$, it follows that $\ket{\psi}$ and $\ket{\psi'}$ are scalar multiples, giving (i).

  For (ii), suppose $\{\basis{i}\}_{i=1}^h$ is an eigenbasis that is not $\epsilon$-separated for any $\epsilon$, with eigenvalues $z_{ij}$ satisfying $U_j\basis{i} = z_{ij}\basis{i}$ for $i=1,\ldots,h$ and $j=1,\ldots,t$.
Then there exist $k \neq \ell$ such that $z_{kj} = z_{\ell j}$ for all $j$. Set $\alpha_j = z_{kj}$ for each $j$. Then $\basis{k}$ and $\basis{\ell}$ are linearly independent solutions to Problem~\ref{prob:eigenvalue-state}, for the given $N,U_1,\ldots,U_t$. This gives (ii).

  For (iii), suppose
 $\{\basis{i}\}_{i=1}^h$ is an $\epsilon$-separated eigenbasis  for some $\epsilon>0$, with eigenvalues $z_{ij}$ satisfying $U_j\basis{i} = z_{ij}\basis{i}$,
 and suppose
 $\ket{\psi}$ and $\ket{\psi'}$ are two solutions to Problem~\ref{prob:eigenvalue-state}.
Write $\ket{\psi} = \sum c_i \basis{i}$ and $\ket{\psi'} = \sum c_i' \basis{i}$ with $c_i,c_i' \in\C$.
Applying $U_j$ to both equations gives
    $c_i \alpha_j = c_i z_{ij}$
    and
    $c_i' \alpha_j = c_i'z_{ij}$
    for all $i$ and $j$.
  Choose $k$ so that $c_k \neq 0$. Then $\alpha_j = z_{kj}$ for all $j$.
Suppose $i \neq k$.
Since $\{\basis{i}\}$ is $\epsilon$-separated, there exists $j$ such that $z_{ij} \neq z_{kj} = \alpha_j$. It follows that $c_i = c_i' = 0$ for all $i \neq k$. Thus both $\ket{\psi}$ and $\ket{\psi'}$ are non-zero multiples of $\basis{k}$,  giving (iii).
\end{proof}

In the next result, ``with complexity $T$'' for classical algorithms means in time $T$, and for quantum algorithms means with gate complexity $T$.

\begin{prop}
\label{propBCalgors}
Suppose there are an algorithm $B$ that on input $N$ can solve Problem~\ref{prob:eigenvalue-state} with complexity $B_N$, and
an algorithm $C$ that on input $N$ with complexity $C_N$ outputs a positive number $\epsilon$ and a list of primes $p_1(N),\ldots,p_{t_N}(N)$ such that there is an $\epsilon$-separated eigenbasis for $\{e^{iT(p_j(N))/\sqrt{p_j(N)}}\}_{j=1}^{t_N}$.
Let $g_{N,p}$ be the complexity of computing $e^{iT(p)/\sqrt{p}}$.
Then there is an algorithm that can solve Problem~\ref{prob:cusp-form-state} on input $N$ with complexity
$B_N +C_N + \sum_{j=1}^{t_N} g_{N,p_j(N)}$.
\end{prop}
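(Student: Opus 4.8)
The plan is to compose the two given algorithms, using the Hecke--Brandt isomorphism to pass between cusp forms and vectors in $V_N$, and using $\epsilon$-separation to ensure that the finitely many operators $U_1,\dots,U_{t_N}$ returned by $C$ already determine the desired state.

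On an instance $(N,f)$ of Problem~\ref{prob:cusp-form-state}, with $f = \sum_{n\ge 1} a_n q^n$ a normalized Hecke eigenform, I would first run $C$ on input $N$ to obtain, in complexity $C_N$, a number $\epsilon > 0$ and primes $p_1(N),\dots,p_{t_N}(N)$ (distinct from $N$) such that $\{U_j := e^{iT(p_j(N))/\sqrt{p_j(N)}}\}_{j=1}^{t_N}$ has an $\epsilon$-separated eigenbasis. Then for each $j$ I would compute $U_j$, at cost $g_{N,p_j(N)}$, and set $\alpha_j := e^{i a_{p_j(N)}/\sqrt{p_j(N)}}$, which is immediate from the given Fourier coefficients of $f$. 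This produces an instance $(N,\alpha_1,\dots,\alpha_{t_N},U_1,\dots,U_{t_N})$ of Problem~\ref{prob:eigenvalue-state}, whose promise is satisfied: under the isomorphism between the Hecke operators $T_p$ on $S_2(\Gamma_0(N))$ and the Brandt operators $T(p)$ on $V_N$ (see \cite{Pizer,Mestre}), the eigenform $f$ corresponds to a vector $\ket{\phi}\in V_N$ with $T(p)\ket{\phi}=a_p\ket{\phi}$ for all primes $p\ne N$, so $U_j\ket{\phi}=\alpha_j\ket{\phi}$ for all $j$. I would then run $B$ on this instance to produce, in complexity $B_N$, a simultaneous eigenstate $\ket{\psi}$ of $U_1,\dots,U_{t_N}$ with eigenvalues $\alpha_1,\dots,\alpha_{t_N}$, and output $\ket{\psi}$. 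The total complexity is $B_N + C_N + \sum_{j=1}^{t_N} g_{N,p_j(N)}$, the cost of forming the $\alpha_j$ being negligible.

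For correctness, since $C$ guarantees an $\epsilon$-separated eigenbasis for $U_1,\dots,U_{t_N}$, Lemma~\ref{lem:unique-soln-cusp}(iii) shows that the solution to this instance of Problem~\ref{prob:eigenvalue-state} is unique up to scalar; as $\ket{\phi}$ is such a solution, $\ket{\psi}=\lambda\ket{\phi}$ for some nonzero scalar $\lambda$. But $\ket{\phi}$ is an eigenvector of $T(p)$ with eigenvalue $a_p$ for \emph{every} prime $p\ne N$, not only for $p_1(N),\dots,p_{t_N}(N)$, so $e^{iT(p)/\sqrt{p}}\ket{\psi}=e^{ia_p/\sqrt{p}}\ket{\psi}$ for all such $p$, which is exactly the requirement of Problem~\ref{prob:cusp-form-state}.

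I do not expect a genuine obstacle; the reduction is direct. The one place needing care is the last step: $B$ only sees the finite list $U_1,\dots,U_{t_N}$, so a priori its output could be any vector carrying the right eigenvalues for just those operators. It is precisely the $\epsilon$-separation furnished by $C$ (via Lemma~\ref{lem:unique-soln-cusp}(iii)) that pins this vector down to a scalar multiple of the global eigenform $\ket{\phi}$ and hence makes it an eigenstate of \emph{all} the operators $e^{iT(p)/\sqrt{p}}$. A secondary point worth stating explicitly is that the promise of Problem~\ref{prob:eigenvalue-state} does hold for the instance we feed to $B$, which is where the Hecke--Brandt correspondence is used.
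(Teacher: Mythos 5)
Your proposal is correct and follows essentially the same route as the paper's proof: run $C$, form the $\alpha_j = e^{ia_{p_j(N)}/\sqrt{p_j(N)}}$, run $B$, and invoke Lemma~\ref{lem:unique-soln-cusp}(iii) together with the Hecke--Brandt isomorphism (which guarantees a global solution exists) to conclude the output is a scalar multiple of the true eigenstate and hence an eigenvector of $e^{iT(p)/\sqrt{p}}$ for all primes $p$. Your explicit remarks on why the promise of Problem~\ref{prob:eigenvalue-state} is satisfied and why $\epsilon$-separation is what pins the answer down are exactly the points the paper's argument relies on.
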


\begin{proof}
  Given an instance $(N,f)$ of Problem~\ref{prob:cusp-form-state}, run algorithm $C$ with input $N$ to obtain $\epsilon > 0$ and primes $p_1(N),\ldots,p_{t_N}(N)$. Set $U_j= e^{iT(p_j(N))/\sqrt{p_j(N)}}$ and
$\alpha_j = e^{ia_{p_j(N)}/\sqrt{p_j(N)}}$, and run algorithm  $B$ with inputs
$N$, $\alpha_1,\ldots,\alpha_{t_N}$, $U_1, \ldots, U_{t_N}$ to obtain output $\ket{\psi}$. Suppose $\basis{0}$ is a solution to Problem~\ref{prob:cusp-form-state}; a solution exists since $V_N$ acted on by the $T(p)$ is isomorphic to $S_2(\Gamma_0(N))$ acted on by the $T_p$. For each $j$ the state $\basis{0}$ is an eigenvector for $U_j$ with eigenvalue $\alpha_j$. By Lemma~\ref{lem:unique-soln-cusp}(iii), $\ket{\psi}$ is a non-zero scalar multiple of $\basis{0}$, so it is a solution to Problem~\ref{prob:cusp-form-state}.

  For the complexity, algorithms $B$ and $C$ are each run once, and each $U_j$ is computed once.
\end{proof}
Theorem~\ref{GoldfeldHoffsteinThm} and Proposition~\ref{lemma:eps-prime-bis} guarantee that $\epsilon$ and
$p_1(N),\ldots,p_{t_N}(N)$ as in the above proof exist.
As in \S{}\ref{BrandtMatrixComputationSect}, each $U_j$ can be computed via a quantum algorithm with gate complexity that is polynomial in $p_j(N)$ and $\log(N)$.

\begin{prop}
An adversary that can solve Problem~\ref{prob:cusp-form-state} and has
a simultaneous eigenform $f$ for the Hecke operators can solve Problem~\ref{quat-Problem}.
\end{prop}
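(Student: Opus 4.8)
The plan is a direct three-fold invocation of the hypothesized solver for Problem~\ref{prob:cusp-form-state}, leveraging the uniqueness statement in Lemma~\ref{lem:unique-soln-cusp}(i).

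First I would identify the inputs. An instance of Problem~\ref{quat-Problem} is a prime $N \ge 5$ together with distinct primes $p_1,\ldots,p_t \ne N$ and the operators $U_j = e^{iT(p_j)/\sqrt{p_j}}$ acting on $V_N$. By hypothesis the adversary additionally holds a simultaneous eigenform $f = \sum_n a_n q^n \in S_2(\Gamma_0(N))$ for the Hecke operators $T_p$, which (normalized so that $a_1 = 1$) has Hecke eigenvalue $a_p$ for each $p$. The pair $(N,f)$ is then precisely an instance of Problem~\ref{prob:cusp-form-state}. Feed this instance to the assumed solver and run it three independent times, obtaining states $\ket{\psi_1}$, $\ket{\psi_2}$, $\ket{\psi_3}$. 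By the definition of Problem~\ref{prob:cusp-form-state}, each $\ket{\psi_k}$ is a simultaneous eigenstate of $e^{iT(p)/\sqrt{p}}$ for \emph{every} prime $p$, with eigenvalue $e^{ia_p/\sqrt{p}}$, and in particular each $\ket{\psi_k}$ lies in $V_N$ (the space on which the operators $e^{iT(p)/\sqrt{p}}$ act) and is an eigenvector of each $U_j = e^{iT(p_j)/\sqrt{p_j}}$ appearing in the given Problem~\ref{quat-Problem} instance, by specializing to $p = p_1,\ldots,p_t$.

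Next I would invoke uniqueness. By Lemma~\ref{lem:unique-soln-cusp}(i), the solution to the instance $(N,f)$ of Problem~\ref{prob:cusp-form-state} is unique up to scalar; since quantum states are normalized, this means $\ket{\psi_1},\ket{\psi_2},\ket{\psi_3}$ all coincide with a single state $\ket{\psi}$ up to global phases $e^{i\theta_1},e^{i\theta_2},e^{i\theta_3}$. Hence
\[
\ket{\psi_1}\otimes\ket{\psi_2}\otimes\ket{\psi_3} \;=\; e^{i(\theta_1+\theta_2+\theta_3)}\,\ket{\psi}\ket{\psi}\ket{\psi},
\]
which is of the form required by Problem~\ref{quat-Problem}, with $\ket{\psi}$ an eigenvector of all the $U_j$. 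Outputting this tensor product therefore solves Problem~\ref{quat-Problem}, using three calls to the Problem~\ref{prob:cusp-form-state} solver plus the trivial overhead of appending two registers and applying the identity.

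There is essentially no serious obstacle here; the only point requiring care is that the three separate runs really do return the \emph{same} state (up to phase), so that the concatenation is genuinely of the form $\ket{\psi}\ket{\psi}\ket{\psi}$ rather than $\ket{\psi_1}\ket{\psi_2}\ket{\psi_3}$ with distinct $\ket{\psi_k}$ — and this is exactly what the multiplicity-one input to Lemma~\ref{lem:unique-soln-cusp}(i) provides. (If one wished to avoid relying on determinism of the solver altogether, the same argument shows any two valid outputs span a one-dimensional space, so the conclusion is unaffected.)
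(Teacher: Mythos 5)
Your proposal is correct and matches the paper's proof, which simply says to solve Problem~\ref{prob:cusp-form-state} with input $f$ three times in succession; you have just made explicit the appeal to Lemma~\ref{lem:unique-soln-cusp}(i) that guarantees the three outputs coincide up to phase.
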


\begin{proof}
Solve Problem~\ref{prob:cusp-form-state} with input $f$ three times in succession.
\end{proof}

\section{Conclusion}

We have presented what seems 
to be a fairly efficient quantum money protocol. As far as we 
know, there are no subexponential attacks on this protocol, 
so it should be possible to implement securely with only a few hundred qubits.
We hope that the ideas and techniques of this paper could be used for other problems in cryptography and computer science. We also expect this paper to inspire work on the associated computational algebra problems.

\section{Acknowledgments} Kane would like to thank Scott Aaronson for his help with the presentation of \cite{Kane}. Sharif and Silverberg would like to thank John Voight for helpful discussions.
Kane was supported by NSF Award CCF-1553288 (CAREER), NSF Award CCF-2107547, a Sloan Research Fellowship, and a grant from CasperLabs. Silverberg was supported by NSF Award CNS-1703321 and a grant from the Alfred P.~Sloan Foundation.


\bibliographystyle{splncs04}

\appendix

\section{Lemmas}\label{AppA}

\begin{lem}
\label{lem:gener-double-eigenst}
  Suppose $V$ is a complex inner product space. Let $\{ e_1, \ldots, e_N\}$ and $\{f_1, \ldots, f_N\}$ be two real orthonormal bases for $V$. Then
    $\sum_{i=1}^Ne_i \otimes e_i = \sum_{i=1}^Nf_i \otimes f_i.$
\end{lem}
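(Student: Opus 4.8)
The plan is to reduce the identity to a statement about the change-of-basis matrix between the two bases, and then to observe that because \emph{both} bases are real orthonormal bases, this matrix is a genuine (real) orthogonal matrix and not merely unitary. This distinction is exactly what makes the identity true.

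First I would write $f_j = \sum_{i=1}^N a_{ij} e_i$ and set $A = (a_{ij})$. The hypothesis that $\{e_i\}$ and $\{f_j\}$ are orthonormal for the Hermitian inner product on $V$ gives $\delta_{jj'} = \braket{f_j | f_{j'}} = \sum_i a_{ij}\overline{a_{ij'}}$, i.e.\ $A^*A = I$ (so $A$ is unitary). The hypothesis that both bases are fixed by complex conjugation gives $f_j = \overline{f_j} = \sum_i \overline{a_{ij}}\,e_i$, whence $a_{ij} = \overline{a_{ij}}$ for all $i,j$, so $A$ has real entries. A real matrix satisfying $A^*A = I$ satisfies $A^TA = I$, and for a square matrix this is equivalent to $AA^T = I$.

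Then I would simply expand, using bilinearity of $\otimes$:
\[
\sum_{j=1}^N f_j \otimes f_j = \sum_{j=1}^N\Bigl(\sum_i a_{ij}e_i\Bigr)\otimes\Bigl(\sum_k a_{kj}e_k\Bigr) = \sum_{i,k}\Bigl(\sum_j a_{ij}a_{kj}\Bigr) e_i\otimes e_k = \sum_{i,k}(AA^T)_{ik}\, e_i\otimes e_k = \sum_i e_i\otimes e_i,
\]
where the last equality uses $AA^T = I$. This proves the lemma.

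The only real subtlety — and the step most likely to trip one up — is the distinction between \emph{unitary} and \emph{orthogonal}: for two general complex orthonormal bases the change-of-basis matrix is merely unitary, and then $\sum_j f_j\otimes f_j$ comes out as $\sum_{i,k}(AA^T)_{ik}\,e_i\otimes e_k$ with $AA^T$ in general \emph{not} the identity, so the identity of the lemma can fail. It is precisely the assumption that both bases are fixed by complex conjugation that forces $A$ to be real; combined with $A^TA = I$ (which holds for a square matrix with orthonormal columns once $A$ is known to be real) this yields $AA^T = I$. I would make sure to invoke the real-structure hypothesis explicitly at exactly that point.
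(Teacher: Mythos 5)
Your proof is correct, and it takes a different route from the paper's. You argue in coordinates: the change-of-basis matrix $A$ from $\{e_i\}$ to $\{f_j\}$ is unitary by orthonormality and real by the reality hypothesis, hence orthogonal, and expanding $\sum_j f_j\otimes f_j$ by bilinearity reduces the claim to $AA^T=I$. The paper instead works basis-free: it uses the real-linear isomorphism $\kappa\colon V\otimes V\to\mathrm{End}(V)$, $\kappa(v\otimes w)(x)=\braket{x,v}\,w$, under which $u\otimes u$ for a real unit vector $u$ maps to the orthogonal projection onto $u$, so both sums are sent to the identity operator (the resolution of the identity) and must therefore coincide. The paper's argument is shorter and conceptually explains \emph{why} the sum is basis-independent (it is the preimage of $\mathrm{id}_V$), but it relies on noting that $\kappa$ is only conjugate-linear in the first slot, hence merely a real-linear isomorphism, and that real unit vectors give honest projections. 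Your computation is more elementary and has the pedagogical virtue of isolating exactly where the reality hypothesis is used --- forcing $A$ to be real so that ``unitary'' upgrades to ``orthogonal,'' without which $AA^T$ need not be the identity and the lemma fails. Both proofs are complete; yours just trades the paper's structural observation for an explicit matrix calculation.
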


\begin{proof}
Let $\mathrm{End}(V)$ be the ring of linear maps $V \to V$. The map $\kappa: V \otimes V \to \mathrm{End}(V)$ given by $\kappa(v \otimes w)(x) = \braket{x, v}w$ is an isomorphism of real vector spaces. If $u$ is a real unit vector, then $\kappa(u \otimes u)$ is the projection operator onto $u$. It follows that both $\kappa(\sum e_i \otimes e_i)$ and $\kappa(\sum f_i \otimes f_i)$ are the identity map. The claim follows.
\end{proof}

The following two deterministic algorithms run in polynomial time.

\begin{algorithm}\label{lem:algor-liftinglem}\leavevmode
  INPUT: Positive integers $m$, $e$, and $r$ such that $e\mid m$ and $\gcd(r,e)=1$.

  OUTPUT: $k \in \Z$ such that $k \equiv r \pmod{e}$ and $\gcd(k,m)=1$.
  \begin{enumerate}[(1)]
      \item For $i=1,2,\ldots$ compute $d_i := \gcd(e^i,m)$ until $d_i = d_{i+1}$, and fix that $i$.
      \item Compute and output $k$ such that $k \equiv r \pmod{d_i}$ and $k \equiv 1 \pmod{m/d_i}$.
  \end{enumerate}
\end{algorithm}

Note that $\gcd(d_i,m/d_i)=1$. The above algorithm runs in polynomial time since $i \le \log_2 m$.

\begin{algorithm}\label{lem:algor-liftinglem2}\leavevmode
  INPUT: Positive integers $d$, $a$, and $b$ such that $\gcd(d,a,b)=1$.

  OUTPUT: $c \in \Z$ such that $c \equiv a \pmod{b}$ and $\gcd(d,c)=1$.
  \begin{enumerate}[(1)]
      \item Letting $e = \gcd(d,b)$, apply Algorithm~\ref{lem:algor-liftinglem} to compute  $c' \in \Z$ such that $c' \equiv a \pmod{e}$ and $\gcd(c',d)=1$.
      \item Apply the Chinese Remainder Theorem to compute $c \in \Z$ such that $c \equiv c' \pmod{d}$ and $c \equiv a \pmod{b}$.
  \end{enumerate}
\end{algorithm}

  \begin{prop}\label{prop:algor-generatorm}
There is a deterministic polynomial-time algorithm that, giv\-en a positive integer $m$ and a cyclic subgroup $H \subset (\Z/m\Z)^2$ of order $m$, computes $(d,c) \in \Z^2$ that generates $H$ and satisfies $d \mid m$ and $\gcd(d,c)=1$.
The integer $d$ is the unique divisor of $m$ such that $(d,\gamma)$ generates $H$ for some $\gamma\in\Z$.
  \end{prop}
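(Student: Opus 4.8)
The plan is to reduce everything to elementary number theory, invoking the polynomial-time lifting Algorithms~\ref{lem:algor-liftinglem} and~\ref{lem:algor-liftinglem2} of Appendix~\ref{AppA} as black boxes; I assume $H$ is presented by a list of generators $(x_1,y_1),\dots,(x_k,y_k)\in(\Z/m\Z)^2$. Recall that an element $(x,y)$ of $(\Z/m\Z)^2$ generates a cyclic subgroup of order $m/\gcd(x,y,m)$, so it generates $H$ exactly when $\gcd(x,y,m)=1$, since $H$ is cyclic of order $m$. First I would study the image of $H$ under the first-coordinate projection $\pi_1\colon(\Z/m\Z)^2\to\Z/m\Z$: it is the subgroup of $\Z/m\Z$ generated by $x_1,\dots,x_k$, which is $d(\Z/m\Z)$ for $d:=\gcd(x_1,\dots,x_k,m)$, a divisor of $m$. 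For \emph{any} generator $(x,y)$ of $H$ we then have $\gcd(x,m)(\Z/m\Z)=\pi_1(H)=d(\Z/m\Z)$, and since two divisors of $m$ generating the same subgroup of $\Z/m\Z$ are equal, $\gcd(x,m)=d$. This gives the uniqueness clause of the proposition: if $d'\mid m$ and $(d',\gamma)$ generates $H$, then applying the previous sentence to $(x,y)=(d',\gamma)$ yields $d'=\gcd(d',m)=d$. It also shows that some generator of $H$ has first coordinate exactly $d$: the first coordinates of the generators of $H$ are the unit multiples of a fixed one, i.e.\ precisely the $z\in\Z/m\Z$ with $\gcd(z,m)=d$, and $d$ is one of them.

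The algorithm is then: (1) compute $d=\gcd(x_1,\dots,x_k,m)$ and, by the extended Euclidean algorithm, integers $n_1,\dots,n_k$ with $\sum_i n_i x_i\equiv d\pmod m$; put $b:=\sum_i n_i y_i\bmod m$, so $(d,b)\in H$. (2) Put $m_1:=m/d$ and apply Algorithm~\ref{lem:algor-liftinglem2} to the triple $(d,\,b,\,m_1)$ (handling the trivial case $b\equiv 0$ separately) to get an integer $c$ with $c\equiv b\pmod{m_1}$ and $\gcd(d,c)=1$. (3) Output $(d,c)$. Every step is a gcd, an extended Euclid, or a call to a polynomial-time algorithm, so the procedure runs in polynomial time.

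For correctness the crucial point is that Algorithm~\ref{lem:algor-liftinglem2} is applicable, i.e.\ that $\gcd(d,b,m_1)=1$. Fix a generator $(d,b_0)$ of $H$ with first coordinate $d$, which exists by the first paragraph. Since $H$ is cyclic, $(d,b)=\mu(d,b_0)$ in $(\Z/m\Z)^2$ for some $\mu\in\Z$; comparing first coordinates gives $\mu d\equiv d\pmod m$, hence $\mu\equiv 1\pmod{m_1}$, hence $b\equiv\mu b_0\equiv b_0\pmod{m_1}$. If a prime $p$ divided $\gcd(d,b,m_1)$, then $p\mid d\mid m$, and $p\mid b\equiv b_0\pmod{m_1}$ together with $p\mid m_1$ would force $p\mid b_0$, contradicting $\gcd(d,b_0,m)=1$. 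Hence $\gcd(d,b,m_1)=1$. Given this, $c\equiv b\pmod{m_1}$ shows $(0,c-b)$ lies in $\{0\}\times m_1(\Z/m\Z)$, which is the unique subgroup of $\{0\}\times\Z/m\Z$ of order $d$ and hence equals $\ker(\pi_1|_H)$ (of order $|H|/|\pi_1(H)|=m/(m/d)=d$); therefore $(d,c)=(d,b)+(0,c-b)\in H$. Finally $\gcd(d,c,m)=\gcd(\gcd(d,c),m)=\gcd(d,c)=1$ because $\gcd(d,c)\mid d\mid m$, so $(d,c)$ has order $m$ in $(\Z/m\Z)^2$ and thus generates the cyclic group $H$; and $d\mid m$ and $\gcd(d,c)=1$ by construction, as required.

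The main obstacle is precisely the identity $\gcd(d,b,m_1)=1$: an arbitrary element of $H$ with first coordinate $d$ need not be a generator (for instance $(4,0)\in\langle(4,3)\rangle\subset(\Z/12\Z)^2$), so one must verify that its second coordinate is nonetheless a unit modulo every prime divisor of $m_1$ that divides $d$ — which is exactly the hypothesis Algorithm~\ref{lem:algor-liftinglem2} needs in order to correct it to an honest generator. The supporting facts $\pi_1(H)=d(\Z/m\Z)$ and $\ker(\pi_1|_H)=\{0\}\times m_1(\Z/m\Z)$ also need care, though they are routine once phrased in terms of cyclic groups.
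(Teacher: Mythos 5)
Your proof is correct, but it takes a genuinely different route from the paper's. The paper first invokes Theorem~2.6.9 of \cite{Teodorescu} as a black box to produce \emph{some} generator $(d',c')$ of $H$, sets $d=\gcd(d',m)$, writes $d=rd'+sm$, lifts $r$ to a unit $k$ modulo $m$ via Algorithm~\ref{lem:algor-liftinglem}, and outputs $k(d',c')$; the normalization is thus done by rescaling an honest generator by a unit. You instead work directly from a generating set of $H$: extended Euclid on the first coordinates produces an element $(d,b)\in H$ with first coordinate $d$ that need \emph{not} generate $H$, and you then repair the second coordinate via Algorithm~\ref{lem:algor-liftinglem2}, staying inside $H$ because $\ker(\pi_1|_H)=\{0\}\times (m/d)(\Z/m\Z)$. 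The price of avoiding the citation is exactly the two lemmas you flag as the main obstacles, and both are handled correctly: the identity $\gcd(d,b,m/d)=1$ (proved by comparing $(d,b)$ with a generator $(d,b_0)$ whose existence you justify via the unit-orbit description of first coordinates of generators), and the identification of $\ker(\pi_1|_H)$ with the unique subgroup of $\{0\}\times\Z/m\Z$ of order $d$. Your approach is more self-contained and makes the dependence on how $H$ is presented explicit; the paper's is shorter modulo the external algorithm. The uniqueness clause is argued essentially identically in both (via $\pi_1(H)=d(\Z/m\Z)$). The only loose ends are the degenerate inputs you already acknowledge ($b\equiv 0$ or $m_1=1$, where Algorithm~\ref{lem:algor-liftinglem2}'s ``positive integer'' hypotheses need a trivial workaround); these do not affect correctness.
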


  \begin{proof}
Theorem 2.6.9 of \cite{Teodorescu} gives a deterministic polynomial-time algorithm that given $m$ and $H$, produces $(d',c') \in \Z^2$ that generates $H$. Since $H$ has order $m$ we have $\gcd(d',c',m)=1$.
Let $d = \gcd(d',m)$. Compute integers $r,s$ such that $d=rd'+sm$.
Letting $e = m/d$, then $\gcd(r,e)=1$.
Apply Algorithm~\ref{lem:algor-liftinglem} in Appendix~\ref{AppA} to compute
$k \in\Z$ such that $k \equiv r \pmod{e}$ and $\gcd(k,m)=1$.
Then $kd' \equiv rd' \pmod{md'/d}$, so $kd' \equiv rd' \equiv d \pmod{m}$.
Let  $c = kc' \pmod{m}$. Then $(d,c) = k(d',c')$ generates $H$.
Since $d \mid m$ we have $\gcd(d,c)=\gcd(d,c,m)=1$.
Since $i \le \log_2 m$, the algorithm runs in polynomial time.

Projecting $H$ onto the first component gives a cyclic subgroup of $\Z/m\Z$ of order $m/d$, for which $d$ is the unique generator that divides $m$.
  \end{proof}

The next result gives an algorithm to compute an isomorphism
$\co_N/m\co_N \isom M_2(\Z/m\Z)$, where $\co_N$ is a maximal order and the prime $N \nmid m$.
For our purposes, it is important that the algorithm produce the same isomorphism each time it is given the same inputs $N$, $\co_N$, and $m$.
The algorithm invokes a polynomial-time quantum algorithm to factor $m$. As such, there is
some small failure probability.
After that, it uses a classical polynomial-time algorithm due to Voight to
deterministically construct isomorphisms $\co_N/p^r\co_N \isom M_2(\Z/p^r\Z)$
for each prime divisor $p$ of $m$, where $p^r || m$.

  \begin{prop}\label{prop:algor-M2misomalgor}
 There is an algorithm in complexity class BQP that, given a positive integer $m$,  a prime $N$ that does not divide $m$, a maximal order $\co_N$ in $H_N$,
  and a $\Z$-basis for $\co_N$, produces an isomorphism $$f_{N,m}: \co_N/m\co_N \isom M_2(\Z/m\Z).$$
  \end{prop}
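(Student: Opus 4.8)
The plan is to reduce the problem, via the Chinese Remainder Theorem, to the case of a prime-power modulus, to dispatch each prime power by splitting modulo $p$ and then Hensel-lifting, and to confine the only non-classical ingredient to an integer factorization. Concretely, I would first run Shor's quantum factoring algorithm to write $m = \prod_{i=1}^{k} p_i^{r_i}$ with the $p_i$ distinct primes; this is the sole step that is not classical polynomial time, and it is what places the algorithm in $\mathsf{BQP}$ (the failure probability of the factoring subroutine is negligible and, since a candidate factorization is verifiable, can be made one-sided). Since $N \nmid m$ we have $p_i \neq N$ for every $i$. From a given $\Z$-basis of $\co_N$ one computes in polynomial time the ring isomorphisms $\co_N/m\co_N \cong \prod_{i} \co_N/p_i^{r_i}\co_N$ and $M_2(\Z/m\Z) \cong \prod_i M_2(\Z/p_i^{r_i}\Z)$, so it suffices to produce an isomorphism $f_{N,p^r}\colon \co_N/p^r\co_N \isom M_2(\Z/p^r\Z)$ for each prime power $p^r := p_i^{r_i}$ and then recombine.

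To handle a single prime power $p^r$, note first that because $p\neq N$ the algebra $H_N$ is unramified at $p$, so $H_N\otimes_\Q\Q_p \cong M_2(\Q_p)$; as $\co_N$ is a maximal order, $\co_N\otimes_\Z\Z_p$ is a maximal order in $M_2(\Q_p)$, hence $\mathrm{GL}_2(\Q_p)$-conjugate to $M_2(\Z_p)$, whence $\co_N/p^r\co_N \cong M_2(\Z/p^r\Z)$. To exhibit such an isomorphism explicitly I would invoke Voight's deterministic polynomial-time algorithm for identifying the matrix ring \cite{VoightArticle}; its mechanism is to produce a nonzero zero divisor in the simple algebra $\co_N/p\co_N \cong M_2(\F_p)$ — equivalently a nontrivial zero of the reduced norm form over $\F_p$ (cf.\ also \cite{Simon}) — giving a rank-one idempotent $e$, so that the two-dimensional left ideal $(\co_N/p\co_N)e$ affords a ring homomorphism $\co_N/p\co_N \to M_2(\F_p)$ that is an isomorphism by a dimension count, and then to Hensel-lift $e$ along the nilpotent surjection $\co_N/p^r\co_N \onto \co_N/p\co_N$ by the unique-idempotent Newton iteration. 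By Nakayama's lemma the lifted left ideal is free of rank two over the local ring $\Z/p^r\Z$, and the induced left action yields the desired $f_{N,p^r}$, which is an isomorphism because it reduces mod $p$ to one of free modules of equal rank. Recombining the $f_{N,p^r}$ through the CRT isomorphisms produces $f_{N,m}$. Since every classical step here (the CRT isomorphisms, Voight's splitting, the idempotent lift) is deterministic and the prime factorization of $m$ is unique, the same inputs $N,\co_N,m$ always return the same isomorphism $f_{N,m}$, as is needed for Algorithm~\ref{alg:can-representative}.

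The main obstacle is the step of obtaining, deterministically and in polynomial time, an explicit splitting $\co_N/p\co_N \cong M_2(\F_p)$ — that is, a zero divisor in this matrix algebra over a finite field. This is precisely the content supplied by \cite{VoightArticle} (which reduces it to solving a quaternary quadratic form over $\F_p$), and it is the only place where nontrivial algebraic work occurs; everything else — reducing to prime powers, lifting idempotents by Hensel's lemma, and the Chinese Remainder bookkeeping — is routine, and the sole reason the overall algorithm lies in $\mathsf{BQP}$ rather than being purely classical polynomial time is the need to factor $m$.
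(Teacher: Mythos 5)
Your proposal is correct and follows the same route as the paper's proof: factor $m$ with Shor's algorithm (the sole quantum step), use Voight's deterministic polynomial-time matrix-ring identification to split $\co_N/p^r\co_N$ for each prime power $p^r \| m$ (the paper cites Proposition 4.8 and the discussion after Problem 4.9 of \cite{VoightArticle}), and recombine via the Chinese Remainder Theorem. The additional detail you supply about how Voight's algorithm works (zero divisors, idempotent lifting) and the remark on determinism are consistent with, and merely expand upon, the paper's terse argument.
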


  \begin{proof}
Factor $m$ (for example using
Shor's algorithm).
For each prime divisor $p$ of $m$, with $p^r || m$,
apply Proposition 4.8 of \cite{VoightArticle} and the results mentioned in the paragraph
after Problem 4.9 of \cite{VoightArticle}
to obtain an isomorphism $\co_N/p^r\co_N \isom M_2(\Z/p^r\Z)$.
Then apply the Chinese Remainder Theorem.
  \end{proof}

  \begin{lem}\label{lem:rowspacelem}
Fix $m\in\Z^{>0}$.
Let $f$ denote the map from the set of left ideals of $M_2(\Z/m\Z)$
to the set of subgroups of $(\Z/m\Z)^2$ induced by sending a matrix to its rowspace.
Then $f$ is a bijection, and its inverse is the map $g$ that sends a subgroup $H$ to the set of matrices whose rows are in $H$.
  \end{lem}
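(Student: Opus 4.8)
The plan is to prove the lemma by exhibiting $g$ as a two-sided inverse of $f$, via a direct computation using the matrix units $E_{k\ell}\in M_2(\Z/m\Z)$. The one observation that drives everything is that left multiplication acts on rows: for $A,B\in M_2(\Z/m\Z)$ the rows of $AB$ are $\Z/m\Z$-linear combinations of the rows of $B$; in particular $E_{k\ell}B$ is the matrix whose $k$-th row equals the $\ell$-th row of $B$ and whose other row is $0$, and $cB$ (for $c\in\Z/m\Z$) scales both rows of $B$ by $c$. I would also record the (trivial but needed) fact that a subgroup of $(\Z/m\Z)^2$ is automatically a $\Z/m\Z$-submodule, since the ring acts through $\Z$; so such a subgroup absorbs all $\Z/m\Z$-linear combinations of its elements.

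First I would check $f\circ g=\mathrm{id}$. Given a subgroup $H\subseteq(\Z/m\Z)^2$, the set $g(H)$ of matrices with both rows in $H$ is closed under addition, and by the observation above it is closed under left multiplication by $M_2(\Z/m\Z)$, so $g(H)$ is a left ideal. Its image $f(g(H))$ is the submodule generated by the rows of elements of $g(H)$; this is contained in $H$ since every such row lies in $H$, and it contains $H$ because for $h\in H$ the matrix with rows $h$ and $0$ lies in $g(H)$ and has $h$ as a row. Hence $f(g(H))=H$.

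Next, $g\circ f=\mathrm{id}$. Let $I$ be a left ideal of $M_2(\Z/m\Z)$ and $H=f(I)$. The inclusion $I\subseteq g(H)$ is immediate from the definition of $f$. For the reverse inclusion, take $B\in g(H)$ with rows $r_1,r_2\in H$. Writing each $r_s$ as a $\Z$-linear combination of rows of elements of $I$, and using that for any $B'\in I$ the matrices $E_{1k}B'$ and $E_{2k}B'$ lie in $I$ and isolate the $k$-th row of $B'$ in a prescribed position with the other row zero, one assembles inside $I$ the matrix with rows $r_1,0$ and the matrix with rows $0,r_2$; adding these gives $B\in I$. Thus $I=g(H)$, completing the proof. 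The computation is entirely routine bookkeeping with the $E_{k\ell}$; the only point requiring a moment's care is the interchange of ``subgroup of $(\Z/m\Z)^2$'' and ``$\Z/m\Z$-submodule,'' which is what makes the two closure conditions (left ideals absorbing the row operations, subgroups absorbing scalar multiples) line up, and I do not anticipate any real obstacle, as this is just the explicit $2\times 2$ instance of the standard correspondence between left ideals of $M_n(R)$ and submodules of $R^n$.
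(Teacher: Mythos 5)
Your proof is correct and follows essentially the same route as the paper: verify that $g(H)$ is a left ideal because left multiplication acts by row operations, then check $f\circ g=\mathrm{id}$ and $g\circ f=\mathrm{id}$ by isolating rows with matrix units (the paper uses $\left[\begin{smallmatrix}1&0\\0&0\end{smallmatrix}\right]$ and the swap matrix where you use the $E_{k\ell}$). Your version is slightly more explicit about the inclusion $f(g(H))\supseteq H$, which the paper leaves implicit, but the argument is the same.
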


\begin{proof}
Suppose that $H$ is a subgroup of $(\Z/m\Z)^2$.
The left action of $M_2(\Z/m\Z)$ on $H$ is by row operations, so if $A \in M_2(\Z/m\Z)$
and $B \in g(H)$, then the rows of $AB$ are linear combinations of the rows
of $B$, so $AB \in g(H)$.
Thus $g(H)$ is a left ideal, and $fg$ is the identity.

To show that $gf$ is the identity, suppose $\ib$ is a left ideal of $M_2(\Z/m\Z)$ and
let $H = f(\ib)$.
Then $\ib \subset g(H)$. To show $g(H) \subset \ib$,
suppose $(x,y) \in H$.
By the definition of $H$, there are matrices $A_1,...,A_r \in \ib$ and
for each $i$ a row $a_i$
 of $A_i$ such that $(x,y) = \sum_{i=1}^r a_i$. Left-multiplying $A_i$ by
$\left[\begin{smallmatrix} 0 & 1 \\ 1 & 0 \end{smallmatrix}\right]$ if
necessary, we may assume that $a_i$ is the top row of $A_i$.
Then
$\left[\begin{smallmatrix} x & y \\ 0 & 0 \end{smallmatrix}\right] =
\left[\begin{smallmatrix} 1 & 0 \\ 0 & 0 \end{smallmatrix}\right] \sum_{i=1}^r A_i  \in \ib$, and
$\left[\begin{smallmatrix} 0 & 0 \\ x & y \end{smallmatrix}\right] =
\left[\begin{smallmatrix} 0 & 1 \\ 1 & 0 \end{smallmatrix}\right]
\left[\begin{smallmatrix} x & y \\ 0 & 0 \end{smallmatrix}\right]  \in \ib$.
Since such matrices generate $g(H)$, we have  $g(H) \subset \ib$.
\end{proof}

  \begin{lem}\label{lem:HJlem}
Suppose that $d,b\in\Z^{>0}$, that $c,c'\in\Z$, that $c = c' \pmod{b}$, and that $\gcd(d,c)=1=\gcd(d,c')$.
Let $m=db$ and suppose that $H$ and $H'$ are the subgroups of $(\Z/m\Z)^2$ generated by $(d,c)$ and by $(d,c')$, respectively.
Then $H = H'$.
  \end{lem}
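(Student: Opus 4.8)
The plan is to exhibit a unit $\lambda \in (\Z/m\Z)^\times$ with $\lambda\cdot(d,c) \equiv (d,c') \pmod m$. Once we have this, multiplication by $\lambda$ is an automorphism of the $\Z/m\Z$-module $(\Z/m\Z)^2$ carrying the generator $(d,c)$ of $H$ to the generator $(d,c')$ of $H'$, so $H = \lambda H = H'$ and we are done. So the entire problem reduces to finding $\lambda$.

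First I would unwind the two required congruences. Since $m = db$, the condition $\lambda d \equiv d \pmod m$ is equivalent to $b \mid d(\lambda-1)/d$, i.e.\ to $\lambda \equiv 1 \pmod b$; so I look for $\lambda$ of the form $\lambda = 1 + tb$ with $t \in \Z$ to be chosen. Next, since $c \equiv c' \pmod b$ we may write $c' - c = sb$ for some $s \in \Z$. Substituting $\lambda = 1+tb$, the remaining condition $\lambda c \equiv c' \pmod m$ becomes $c + tbc \equiv c' = c + sb \pmod{db}$, i.e.\ $b(tc - s) \equiv 0 \pmod{db}$, i.e.\ $tc \equiv s \pmod d$. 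Because $\gcd(c,d) = 1$, $c$ is invertible mod $d$, so this is solved by taking $t \equiv s\,c^{-1} \pmod d$. Fix such a $t$ and set $\lambda = 1 + tb$.

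It then remains to check that this $\lambda$ is a unit modulo $m = db$. On the one hand $\gcd(\lambda,b) = \gcd(1+tb,b) = 1$. On the other hand, from $\lambda c \equiv c' \pmod d$ and $\gcd(c',d)=1$ we get $\gcd(\lambda c, d) = 1$, hence $\gcd(\lambda,d)=1$. Therefore $\gcd(\lambda,m)=1$, so $\lambda\in(\Z/m\Z)^\times$, and the argument of the first paragraph applies. I do not expect a genuine obstacle here — the only point requiring care is the ``divide by $b$'' step in passing from $b(tc-s)\equiv 0\pmod{db}$ to $tc\equiv s\pmod d$, which is legitimate since every term involved is an integer multiple of $b$; alternatively, one could avoid even appealing to $\lambda$ being a unit by noting that $(d,c)$ and $(d,c')$ each have order exactly $m$ in $(\Z/m\Z)^2$ (as $\gcd(d,c)=\gcd(d,c')=1$), so the inclusion $H' = \langle \lambda(d,c)\rangle \subseteq \langle(d,c)\rangle = H$ between groups of the same finite order is automatically an equality.
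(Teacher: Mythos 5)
Your proof is correct and follows essentially the same route as the paper's: both arguments exhibit a multiplier $\lambda$ with $\lambda\equiv 1\pmod b$ and $\lambda\equiv s c^{-1} b + 1$ type form (the paper takes $\lambda = 1 + x(c'-c)$ with $cx\equiv 1\pmod d$, which is the same $\lambda$ modulo $m$ as yours) so that $\lambda(d,c)=(d,c')$. The only cosmetic difference is in how the reverse inclusion is obtained: the paper concludes $H'\subseteq H$ and then invokes symmetry in $c,c'$, whereas you verify $\lambda\in(\Z/m\Z)^\times$ (or count orders); both are fine.
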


\begin{proof}
Since $\gcd(d,c)=1$,
there exist integers $x$ and $y$ such that $cx = 1 + d y$. Setting
$\lambda  = 1 + x(c'-c)$,
then $\lambda c =
c' + y d(c'-c)$, and
since $c' \equiv c \pmod{b}$
it follows that $\lambda  c \equiv c' \pmod{m}$
and $\lambda d \equiv d \pmod{m}$.
Thus $\lambda (d,c) = (d,c')$ in $(\Z/m\Z)^2$, so
$H' \subset H$. By symmetry, $H \subset H'$, so $H = H'$.
\end{proof}

Recall (Definition \ref{RightOrderWtDef}) that if $I$ is a left fractional $\co_N$-ideal of the quaternion algebra $H_N$, then we let $\co_I$ be its right order and
we let $w_{I} = \#(\co_I^\times/\{\pm 1\})=\frac{1}{2}\#\co_I^\times$.
An integral solution to $x^2 - 3y^2 = -N$ can be found in polynomial time by~\cite{Simon}.
\begin{prop}\label{prop:constants-w_i}
  We have $w_{I} = 1$ for all $[I] \in \cl(\co_N)$, with the following exceptions:
  \begin{enumerate}[(i)]
      \item If $N \equiv 5 \pmod{12}$, then $w_{I} = 3$ for all $I\in [\co_N]$.
      \item If $N \equiv 7 \pmod{12}$, then $w_{I} = 2$ for all $I\in [\co_N]$.
      \item Suppose $N \equiv 11 \pmod{12}$. Let $(a,b) \in \Z^2$ be a solution to $x^2 - 3y^2 = -N$.
Let $\alpha := \frac{a}{3b}i + \frac{1}{3b}ij$ and $\hat{\co} := \Z + \frac{1+j}{2}\Z + \alpha \Z + \frac{\alpha - \alpha j}{2} \Z$.
Then
      $w_{I} = 3$ for all $I\in [\co_N]$, and
      $w_{I} = 2$ for all $I\in [\co_N \cdot \hat{\co}]$.
  \end{enumerate}
\end{prop}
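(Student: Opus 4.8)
The plan is to reduce everything to two standard inputs --- the Deuring correspondence and the known automorphism groups of supersingular elliptic curves --- and then to identify the exceptional ideal classes by direct computation with the given presentations of $\co_N$ and $\hat{\co}$.

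\textbf{The structural input.} Recall (\cite[Chapter 42]{Voight}) that $\cl(\co_N)$ is in bijection with the set of supersingular elliptic curves over $\overline{\F}_N$ up to isomorphism, via $[I] \mapsto E_I$ with $\co_I \cong \mathrm{End}(E_I)$, so that $\#\co_I^\times = \#\mathrm{Aut}(E_I) = 2 w_I$ and $\co_I^\times$ is cyclic. A supersingular $E/\overline{\F}_N$ has $\#\mathrm{Aut}(E) = 2$ unless $j(E) = 1728$, which is supersingular precisely when $N \equiv 3 \pmod 4$ and then has $\#\mathrm{Aut} = 4$, or $j(E) = 0$, which is supersingular precisely when $N \equiv 2 \pmod 3$ and then has $\#\mathrm{Aut} = 6$; for $N \geq 5$ these two $j$-invariants are distinct. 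Consequently $w_I \in \{1,2,3\}$ for every class, there is at most one class of weight $2$ (exactly one iff $N \equiv 3 \pmod 4$), at most one of weight $3$ (exactly one iff $N \equiv 2 \pmod 3$), every other class has weight $1$, and there are no exceptional classes when $N \equiv 1 \pmod{12}$. Moreover, since $j(E) = 0$ (resp.\ $j(E) = 1728$) is equivalent to $\mathrm{End}(E)$ containing $\Z[\zeta_3]$ (resp.\ $\Z[i]$), we have $w_I = 3 \iff \Z[\zeta_3] \hookrightarrow \co_I$ and $w_I = 2 \iff \Z[i] \hookrightarrow \co_I$ (and $\Z[\zeta_3] \not\hookrightarrow \co_I$).

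\textbf{Identifying the exceptional classes.} It remains to pin down which ideal classes carry the exceptional weights, using the explicit orders. If $N \equiv 5 \pmod 6$ (cases (i) and (iii)), set $\zeta := \tfrac{1+i}{2} \in \co_N \subset H(-3,-N)$; using $i^2 = -3$ one checks $\nrd(\zeta) = 1$ and $\zeta^2 = \zeta - 1$, so $\zeta$ is a primitive $6$th root of unity, hence $\Z[\zeta_3] \subset \co_N$ and $w_I = 3$ for $I \in [\co_N]$. If $N \equiv 7 \pmod{12}$ (case (ii)), then $\co_N \subset H(-1,-N)$ contains $i$ with $i^2 = -1$, so $w_I = 2$ for $I \in [\co_N]$; since $N \equiv 1 \pmod 3$ there is no weight-$3$ class, so by the structural input all other classes have weight $1$. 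If $N \equiv 1 \pmod{12}$ there is nothing more to prove. Finally, if $N \equiv 11 \pmod{12}$ (case (iii)), the weight-$3$ class is $[\co_N]$ as above, and there remains exactly one exceptional class, of weight $2$; with $(a,b)$ a solution of $x^2 - 3y^2 = -N$ (one exists, and by \cite{Simon} it is computable in polynomial time) and $\alpha := \tfrac{a}{3b} i + \tfrac{1}{3b} ij \in H(-3,-N)$, a short computation using $a^2 + N = 3b^2$ gives $\alpha + \bar\alpha = 0$ and $\nrd(\alpha) = \tfrac{a^2+N}{3b^2} = 1$, so $\alpha^2 = -1$; thus $\hat{\co}$ (which one verifies is a $\Z$-order) contains a primitive $4$th root of unity, hence $\hat{\co}^\times = \mu_4$ and the class $[\co_N \cdot \hat{\co}]$, where $\co_N \cdot \hat{\co}$ denotes the connecting $\co_N$--$\hat{\co}$ ideal (a left $\co_N$-ideal with right order $\hat{\co}$), has weight $2$. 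By the structural input these account for all the exceptional classes.

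\textbf{The main obstacle and an alternative.} The substantive content lies in the structural input, and the delicate point there is the assertion that there is \emph{at most one} ideal class of each exceptional weight; the cleanest justification goes through the bijection $\cl(\co_N) \leftrightarrow \{\text{supersingular } j\text{-invariants}\}$ and the fact that $j = 0$ and $j = 1728$ are single points, but one must correctly match $\co_I^\times$ with $\mathrm{Aut}(E_I)$ and not conflate the bare class set with isogeny classes. A self-contained alternative avoids elliptic curves entirely: combine the classification of finite subgroups of $\co_I^\times$ (for $N \geq 5$ these are cyclic of order $1$, $2$, or $3$, since a larger group would span a quaternion subalgebra isomorphic to $H(-1,-1) \cong H_2$ or $H(-1,-3) \cong H_3$) with the Eichler mass formula $\sum_{[I]} 1/w_I = (N-1)/12$ and Eichler's optimal-embedding numbers for $\Z[i]$ and $\Z[\zeta_3]$; this needs more bookkeeping with local embedding counts but yields the same conclusion. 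Once the structural facts are in hand, the remaining work --- the root-of-unity computations above and the verifications that $\hat{\co}$ is an order and that $x^2 - 3y^2 = -N$ is solvable for $N \equiv 11 \pmod{12}$ --- is routine.
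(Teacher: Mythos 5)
Your proof is correct, and its second half --- the explicit root-of-unity computations showing $\tfrac{1+i}{2}$ has order $6$ when $i^2=-3$, that $i$ has order $4$ when $i^2=-1$, and that $\alpha^2=-1$ in $\hat{\co}$ --- coincides with what the paper does. The difference lies in how the structural fact (at most one ideal class of each exceptional weight, present exactly under the stated congruences on $N$) is justified: the paper simply cites Table~1.3 of Gross, which tabulates precisely these counts, whereas you derive them from the Deuring correspondence $\cl(\co_N)\leftrightarrow\{\text{supersingular }E/\overline{\F}_N\}$ together with the classification of automorphism groups of supersingular curves in characteristic $\geq 5$ (and you correctly identify the mass-formula/optimal-embedding argument as the elliptic-curve-free alternative, which is in fact what underlies Gross's table). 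Your route makes the counting step self-contained modulo standard facts the paper already invokes elsewhere, and it explains \emph{why} there is at most one class of each exceptional weight ($j=0$ and $j=1728$ are single points); the paper's citation is shorter. Both you and the paper defer the same two routine verifications: that $\hat{\co}$ is an order whose connecting ideal $\co_N\cdot\hat{\co}$ has right order $\hat{\co}$, and that $x^2-3y^2=-N$ is solvable when $N\equiv 11\pmod{12}$.
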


\begin{proof}
  Table~1.3 of \cite{Gross} shows that $w_{I} = 1$ for all $[I]$, with the following exceptions: when $N \equiv 5 \pmod{12}$ one ideal class satisfies $w_{I} = 3$; when $N \equiv 7 \pmod{12}$ one ideal class satisfies $w_{I} = 2$; and when $N \equiv 11 \pmod{12}$ there are two ideal classes $[I]$ and $[J]$ such that $w_{I} = 3$ and $w_{J} = 2$.
  If $N \equiv 7 \pmod{12}$ then $i^2 = -1$ so $i \in \co_N^\times$ has order $4$, and hence $w_{\co_N} = 2$.
  If $N \equiv 5 \pmod{6}$ then $i^2 = -3$ so $\frac{1+i}{2} \in \co_N^\times$ has order $6$, and hence $w_{\co_N} = 3$.
  Now suppose $N \equiv 11 \pmod{12}$.
One can check that $\hat{\co}$ is a maximal order, and $\hat{\co}$ is the right order of the ideal $\co_N \cdot \hat{\co}$.
  We have
$\alpha^2
       = \frac{-(a^2 + N)}{3b^2} = -1$,
so $\alpha \in \hat{\co}^\times$ has order $4$ and hence $w_{\co_N \cdot \hat{\co}} = 2$.
\end{proof}
See also \cite{ibukiyama}, where $\co_N$ is denoted $\co(3)$ when $N \equiv 5 \pmod{6}$, and $\hat{\co}$ is denoted $\co'(1)$.

\section{$\epsilon$-separation Data}
\label{AppB}

  \begin{table}[ht]
\caption{$\epsilon$-separation for $\{e^{iT(p)/\sqrt{p}} \mid p < \log_2(N)\}$}
\label{tab:epsilon-separation-n}
\begin{center}
\begin{tabular}[t]{rl}
\toprule
$N$ & $\epsilon$\\
\midrule
547 & 0.4824236848637427 \\
557 & 0.7199773703667618 \\
563 & 0.7553525215246627 \\
569 & 0.9200347021863563 \\
571 & 0.48205861423463164\\
577 & 0.40674046098264244\\
587 & 0.7982583121867862 \\
593 & 0.9266761931828437 \\
599 & 0.62563971482572   \\
601 & 0.7182238262429224 \\
607 & 0.7313809878961292 \\
613 & 0.768492003890778  \\
617 & 0.5983414655675874 \\
619 & 0.6187541297546084 \\
631 & 0.45419000886679206\\
641 & 0.43490142944562354\\
643 & 0.6346083766649872 \\
647 & 0.7432521901131    \\
653 & 0.5063114409620633 \\
659 & 0.6777125171096566 \\
\bottomrule
\end{tabular}
\quad \quad \quad
\begin{tabular}[t]{rl}
\toprule
$N$ & $\epsilon$\\
\midrule
12569 & 0.22159756788222007\\
12577 & 0.22690747823008486\\
12583 & 0.2774346724081338 \\
12589 & 0.22865081262562248\\
12601 & 0.25482871813162855\\
12611 & 0.16451483770778993\\
12613 & 0.09017383560136713\\
12619 & 0.18211198468203824\\
12637 & 0.16246553818517484\\
12641 & 0.19366213429958556\\
\midrule
20011 & 0.34309639146812015\\
20021 & 0.3536950173591149 \\
20023 & 0.2610129987276544 \\
20029 & 0.19283243271645334\\
20047 & 0.30798681044672843\\
20051 & 0.2711650765294632 \\
20063 & 0.21456144876447153\\
20071 & 0.3506564319413416 \\
20089 & 0.2942067355453101 \\
\bottomrule
\end{tabular}

\end{center}
\end{table}

For each prime $N$ in Table~\ref{tab:epsilon-separation-n}, let $p_1,\ldots,p_t$ be the primes less than $\log_2(N)$, and set $U_j = e^{iT(p_j)/\sqrt{p_j}}$. 
Letting $\basis{1},\ldots,\basis{h} \in V_N$ be the simultaneous eigenvectors for the $U_j$'s, we used Sage to compute the corresponding tuples of eigenvalues $v_1,\ldots,v_h$, and the minimum Euclidean distance between pairs of tuples of eigenvalues.
In Table~\ref{tab:epsilon-separation-n}, the value $\epsilon$ is the minimum Euclidean distance $|v_i - v_j|$ for $i \neq j$, and therefore is the largest value of $\epsilon$ for which the eigenbasis is $\eps$-separated. 
The Sage code we used to generate the table is publicly available at \url{https://github.com/ssharif/QuantumMoneyCode}.


\end{document}